\pgfplotsset{compat=1.5}
\newtheorem{theorem}{Theorem}[section]
\newtheorem{corollary}[theorem]{Corollary}
\newtheorem{lemma}[theorem]{Lemma}
\newtheorem{definition}[theorem]{Definition}
\newtheorem{remark}[theorem]{Remark}
\newtheorem{fact}[theorem]{Fact}
\newenvironment{proofof}[1]{\begin{trivlist} \item {\bf Proof
#1:~~}}
  {\qed\end{trivlist}}
\newcommand{\namedref}[2]{\hyperref[#2]{#1~\ref*{#2}}}
\newcommand{\thmlab}[1]{\label{thm:#1}}
\newcommand{\thmref}[1]{\namedref{Theorem}{thm:#1}}
\newcommand{\lemlab}[1]{\label{lem:#1}}
\newcommand{\lemref}[1]{\namedref{Lemma}{lem:#1}}
\newcommand{\corlab}[1]{\label{cor:#1}}
\newcommand{\seclab}[1]{\label{sec:#1}}
\newcommand{\applab}[1]{\label{app:#1}}
\newcommand{\appref}[1]{\namedref{Appendix}{app:#1}}
\newcommand{\factlab}[1]{\label{fact:#1}}
\newcommand{\factref}[1]{\namedref{Fact}{fact:#1}}
\newcommand{\remlab}[1]{\label{rem:#1}}
\newcommand{\remref}[1]{\namedref{Remark}{rem:#1}}
\newcommand{\figlab}[1]{\label{fig:#1}}
\newcommand{\figref}[1]{\namedref{Figure}{fig:#1}}
\newcommand{\alglab}[1]{\label{alg:#1}}
\renewcommand{\algref}[1]{\namedref{Algorithm}{alg:#1}}
\newcommand{\deflab}[1]{\label{def:#1}}
\newcommand{\defref}[1]{\namedref{Definition}{def:#1}}
\newcommand{\PPr}[1]{\ensuremath{\mathbf{Pr}\left[#1\right]}}
\newcommand{\Ex}[1]{\ensuremath{\mathbb{E}\left[#1\right]}}
\renewcommand{\O}[1]{\ensuremath{\mathcal{O}\left(#1\right)}}
\newcommand{\tO}[1]{\ensuremath{\tilde{\mathcal{O}}\left(#1\right)}}
\newcommand{\eps}{\varepsilon}
\def \identify    {\mdef{\textsc{Identify}}}
\def \heavyhitters    {\mdef{\textsc{HeavyHitters}}}
\def \countsketch    {\mdef{\textsc{CountSketch}}}
\def \bptree    {\mdef{\textsc{BPTree}}}
\def \counthh    {\mdef{\textsc{CountHH}}}
\def \counthhsub    {\mdef{\textsc{CountHHShort}}}
\def \id    {\mdef{\mathsf{id}}}
\def \disjinfty    {\mdef{\textsc{DisjInfty}}}
\def \IC    {\mdef{\mathsf{IC}}}
\def \CIC    {\mdef{\mathsf{CIC}}}
\def \calA    {\mdef{\mathcal{A}}}
\def \e    {\mdef{\mathbf{e}}}
\def \u    {\mdef{\mathbf{u}}}
\def \v    {\mdef{\mathbf{v}}}
\def \x    {\mdef{\mathbf{x}}}
\newcommand{\mdef}[1]{{\ensuremath{#1}}\xspace}  
\DeclareMathOperator*{\polylog}{polylog}
\DeclareMathOperator*{\poly}{poly}
\DeclareMathOperator*{\median}{median}
\DeclareMathOperator*{\Var}{Var}
\DeclareMathOperator*{\embed}{embed}
\newcommand{\ignore}[1]{}
\newif\ifnotes\notestrue 
\newcommand{\samson}[1]{\textcolor{purple}{{\bf (Samson:} {#1}{\bf ) }} \marginpar{\tiny\bf
             \begin{minipage}[t]{0.5in}
               \raggedright S:
            \end{minipage}}}            							
\newcommand{\samson}[1]{}
\renewcommand*{\@fnsymbol}[1]{\textcolor{mahogany}{\ensuremath{\ifcase#1\or *\or \dagger\or \ddagger\or
 \mathsection\or \triangledown\or \mathparagraph\or \|\or **\or \dagger\dagger
   \or \ddagger\ddagger \else\@ctrerr\fi}}}
\providecommand{\email}[1]{\href{mailto:#1}{\nolinkurl{#1}\xspace}}
\definecolor{mahogany}{rgb}{0.75, 0.25, 0.0}
\definecolor{darkpastelgreen}{rgb}{0.01, 0.75, 0.24}
\begin{document}

\title{Separations for Estimating Large Frequency Moments on Data Streams}
\author{David P. Woodruff\thanks{Carnegie Mellon University. 
E-mail: \email{dwoodruf@cs.cmu.edu}}\\
\and
Samson Zhou\thanks{Carnegie Mellon University. 
E-mail: \email{samsonzhou@gmail.com}}
}
\date{\today}

\maketitle

\begin{abstract}
We study the classical problem of moment estimation of an underlying vector whose $n$ coordinates are implicitly defined through a series of updates in a data stream. We show that if the updates to the vector arrive in the random-order insertion-only model, then there exist space efficient algorithms with improved dependencies on the approximation parameter $\varepsilon$. In particular, for any real $p > 2$, we first obtain an algorithm for $F_p$ moment estimation using $\tilde{\mathcal{O}}\left(\frac{1}{\varepsilon^{4/p}}\cdot n^{1-2/p}\right)$ bits of memory. Our techniques also give algorithms for $F_p$ moment estimation with $p>2$ on arbitrary order insertion-only and turnstile streams, using $\tilde{\mathcal{O}}\left(\frac{1}{\varepsilon^{4/p}}\cdot n^{1-2/p}\right)$ bits of space and two passes, which is the first optimal multi-pass $F_p$ estimation algorithm up to $\log n$ factors. Finally, we give an improved lower bound of $\Omega\left(\frac{1}{\varepsilon^2}\cdot n^{1-2/p}\right)$ for one-pass insertion-only streams. Our results separate the complexity of this problem both between random and non-random orders, as well as one-pass and multi-pass streams.  
\end{abstract}

\section{Introduction}
The efficient computation of statistics has emerged as an increasingly important goal for large databases storing information generated from financial markets, internet traffic, IoT sensors, scientific observations, etc. 
The one-pass streaming model formally defines an implicit and underlying dataset through sequential updates that arrive one at a time and describe the evolution of the dataset over time. 
The goal is to aggregate or approximate some statistic of the input data using space that is sublinear in the size of the input. 

The frequency moment estimation problem is fundamental to data streams. 
Since the celebrated paper of Alon, Matias, and Szegedy~\cite{AlonMS99}, the frequency moment estimation problem has been a central problem in the streaming model; more than two decades of research~\cite{AlonMS99,ChakrabartiKS03,Bar-YossefJKS04,Woodruff04,IndykW05,Indyk06,Li08,KaneNW10,KaneNPW11,Ganguly11,BravermanO13,BravermanKSV14,BlasiokDN17,BravermanVWY18, GangulyW18,WoodruffZ20} has studied the space or time complexity of this problem. 

We first consider the \emph{insertion-only} model, where updates take the form $u_1,\ldots,u_m$ in a stream of length $m$ and each update $u_t$ is in $[n] = \{1, 2, \ldots, n\}$ for $t\in[m]$. 
We assume for simplicity that $m\le\poly(n)$. 
The updates implicitly define a frequency vector $f\in\mathbb{R}^n$ so that each update effectively increases a coordinate of $f$ in the sense that $f_i=|\{t\,:\,u_t=i\}|$ for each $i\in[n]$. 
Given $p,\eps>0$, the $F_p$ moment estimation problem is to approximate $F_p=\sum_{i\in[n]}(f_i)^p$ within a $(1\pm\eps)$ factor.  
The complexity of this problem differs greatly for the range of $p$. 
For $p>2$, \cite{Bar-YossefJKS04,ChakrabartiKS03} showed that even for the streaming model, the space usage for $F_p$-estimation requires polynomial factors in $n$ and $m$, whereas polylogarithmic factors are achievable for $p\le 2$~\cite{AlonMS99,Indyk06,Li08,KaneNW10,KaneNPW11,BlasiokDN17}. 

We initially focus on the frequency moment estimation problem in the \emph{random-order} model, where the set of stream updates to the underlying frequency vector is worst case, but the order of their arrival is uniformly random. 
As usual, the algorithms we initially consider are only permitted a single pass over the stream, though we later relax this constraint to permit adversarial ordering of updates, as well as both positive and negative integer updates (with magnitude bounded by $\poly(n)$) to the coordinates of the frequency vector in the \emph{turnstile} model. 
Random-order streams have been shown to be a natural assumption for problems of sorting and selecting in limited space~\cite{MunroP80} and many other real-world applications~\cite{GuhaM06, GuhaM07, DemaineLM02, ChakrabartiJP08}. 
Interestingly, there has sometimes been a significant qualitative difference between random-order streams and adversarial (or arbitrary) order streams~\cite{KonradMM12,BravermanKSV14,BravermanVWY18}. 
In particular for $F_p$-moment estimation, the best known lower bound is $\Omega\left(\frac{1}{\eps^2}\right)$~\cite{ChakrabartiC016}. 
While the best known upper bound for $p\in(0,2]$ on arbitrary order insertion-only streams is $\O{\frac{1}{\eps^2}\log n}$~\cite{AlonMS99, KaneNW10, KaneNPW11}, Braverman \emph{et. al.}~\cite{BravermanVWY18} gave an algorithm for $p\in(0,2)$ on random-order streams that only used $\tO{\frac{1}{\eps^2}+\log n}$ space, where for a function $g(n, \varepsilon)$, we use $\tO{g(n,\varepsilon)}$ to denote a function bounded by $g(n,\varepsilon) \cdot \polylog(g(n,\varepsilon))$. 
They also gave an algorithm for $F_2$-moment estimation that only uses $\tO{\frac{1}{\eps^2}+\log n}$ space, but requires the assumption that $F_2\ge F_1\cdot\log n$, i.e., the second moment must be a logarithmic factor larger than the length of the stream. 

The above works raise a number of important questions. 
A tantalizing open question, studied extensively in \cite{JayaramW19}, and dating back to the original work of Alon, Matias, and Szegedy \cite{AlonMS99}, is the exact space complexity of $F_p$-moment estimation in insertion-only streams. 
For $p>2$, the best known upper bound for $F_p$-moment estimation on arbitrary order insertion-only streams is the minimum of $\O{\frac{n^{1-2/p}}{\varepsilon^{20}}}$ \cite{BravermanKSV14} and $\tO{\frac{1}{\eps^2}\cdot n^{1-2/p}}$ \cite{Ganguly11,GangulyW18}. 
For insertion-only streams, the best known lower bound is $\tilde{\Omega}\left(\frac{n^{1-2/p}}{\eps^2 \log n} \right )$~\cite{Ganguly12}. 
We summarize these results in \figref{fig:history}.
\begin{figure*}[!htb]
\begin{center}
\begin{tabular}{|c|c|c|}\hline
Reference & Space Complexity & Stream Order \\\hline
\cite{IndykW05, MonemizadehW10, Andoni17} & $\tO{n^{1-2/p}\eps^{-\O{1}}}$ & Arbitrary \\
\cite{BravermanKSV14} & $\O{n^{1-2/p}\eps^{-20}}$ & Arbitrary \\
\cite{AndoniKO11} & $\tO{n^{1-2/p}\eps^{-2-6/p}}$ & Arbitrary\\
\cite{BhuvanagiriGKS06} & $\tO{n^{1-2/p}\eps^{-2-4/p}}$ & Arbitrary\\
\cite{Ganguly11, GangulyW18} & $\tO{n^{1-2/p}\eps^{-2}}$ & Arbitrary \\
This work & $\tO{n^{1-2/p}\eps^{-4/p}}$ & Random or Two-Pass Arbitrary\\\hline
\cite{AlonMS99, Woodruff04} & $\Omega\left(n^{1-5/p}+\eps^{-2}\right)$ & Arbitrary \\
\cite{ChakrabartiKS03} & $\Omega\left(n^{1-2/p}\eps^{-2/p}\right)$ & Arbitrary \\
\cite{WoodruffZ12} & $\Omega\left(n^{1-2/p}\eps^{-4/p}/\log^{\O{1}}n\right)$ & Arbitrary, $\O{1}$-Passes \\
\cite{Ganguly12} & $\Omega\left(n^{1-2/p}\eps^{-2}/\log n\right)$ & Arbitrary \\
\cite{AndoniMOP08,ChakrabartiC016} & $\Omega(n^{1-2.5/p}/\log n + \varepsilon^{-2})$ & Random \\
This work & $\Omega\left(n^{1-2/p}\eps^{-2}\right)$ & Arbitrary \\\hline
\end{tabular}
\end{center}
\caption{Summary of recent work for large frequency moment estimation}
\figlab{fig:history}
\end{figure*}

There are also major gaps in our understanding for $F_p$-moment estimation for $p > 2$ in random order streams. 
The best known\footnote{After discussion with the authors, there appears to be an error in \cite{guha2009revisiting}, which claims a stronger lower bound.} lower bound is
$\Omega(n^{1-2.5/p}/\log n + \varepsilon^{-2})$ \cite{AndoniMOP08,ChakrabartiC016}, while no upper bounds that do better in random order streams than in arbitrary insertion streams are known. 

\subsection{Our Results}
In this paper, we show a separation not only between random-order and arbitrary insertion-only streams for the $F_p$ moment estimation problem, but also one-pass and multi-pass streams.  
We first show improved bounds for the $F_p$ moment estimation problem for every $p>2$ in the random-order insertion-only streaming model. 
\begin{restatable}{theorem}{thmmainlarge}
\thmlab{thm:main:large}
For $p>2$, there exists an algorithm that outputs a $(1+\eps)$-approximation to the $F_p$ moment of a random-order insertion-only stream with probability at least $\frac{2}{3}$, while using total space (in bits) $\tO{\frac{1}{\eps^{4/p}}\cdot n^{1-2/p}}$.
\end{restatable}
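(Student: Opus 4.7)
The plan is to combine the heavy-hitter framework of Indyk--Woodruff with a subsampling argument that exploits random arrival order to push the $\eps$-dependence of the state of the art from $\eps^{-2}$ down to $\eps^{-4/p}$. The driving structural observation is that the set of ``doubly-heavy'' items
\[
H \;=\; \{\,i\in[n] \,:\, f_i^p \geq \eps^2 F_p\,\}
\]
has cardinality at most $n^{1-2/p}/\eps^{4/p}$: each $i\in H$ satisfies $f_i^2 \geq \eps^{4/p} F_p^{2/p}$, and the power-mean bound $F_2 \leq n^{1-2/p} F_p^{2/p}$ gives $|H|\leq n^{1-2/p}/\eps^{4/p}$. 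Hence, if we can identify $H$, maintain exact counters for each $i\in H$, and estimate the tail $T = \sum_{i\notin H} f_i^p$ to additive error $\eps F_p$, the target space budget $\tO{n^{1-2/p}/\eps^{4/p}}$ is met.

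First I would identify $H$ by running, at each of $L=O(\log n)$ geometrically decreasing subsampling rates $2^{-\ell}$, a \bptree-style heavy hitter detector that flags any item whose subsampled count exceeds a $\polylog(n)$ threshold. In a random-order stream, a sampling-without-replacement Bernstein inequality gives multiplicative concentration of the subsampled count around $f_i\,2^{-\ell}$ whenever the mean is $\Omega(\polylog n)$, so each $i\in H$ is detected at the level $\ell^{*}$ for which $f_i\,2^{-\ell^{*}}$ lies in the target range. Taking the union of candidates across levels and bounding it by the $\eps^2$-heavy-hitter count of $F_p$ keeps the candidate list at size $\tO{n^{1-2/p}/\eps^{4/p}}$. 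Crucially, random order also implies that $H$ becomes detectable within the first $1/\polylog(n)$ fraction of the stream; from that point onward I would attach exact $O(\log n)$-bit counters to each $i\in H$ and update them over the remainder of the stream, reconstructing the pre-detection counts from the subsampled estimates with only lower-order error.

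For the tail $T$, I would maintain a \countsketch of width $\tO{n^{1-2/p}/\eps^{4/p}}$ together with a geometric-mean style estimator at each subsampling level, combined by the Indyk--Woodruff telescoping argument. Because $\|f_{[n]\setminus H}\|_\infty \leq \eps^{2/p} F_p^{1/p}$, the per-level variance of the estimator is controlled, and summing across levels gives an additive $\eps F_p$ estimate of $T$. The random order enters here as well: martingale concentration on each subsampled stream shows that the effective residual $F_2$ against which the sketch variance is measured is, with high probability, much smaller than its worst case, which is what ultimately replaces the $\eps^{-2}$ of Ganguly--Woodruff with $\eps^{-4/p}$.

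The hardest step will be the tail estimation. A width-$\tO{n^{1-2/p}/\eps^{4/p}}$ \countsketch is too coarse for the usual worst-case analysis of $F_p$ moments with $p>2$, so the argument must explicitly invoke random-order concentration to tame the sketch variance. Making the sampling-without-replacement concentration bounds tight enough simultaneously across all $O(\log n)$ subsampling levels, while also accounting for the lower-order errors introduced when reconstructing the pre-detection counts of the heavy items, is the technical core of the proof.
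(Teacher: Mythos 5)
Your opening observation --- that every item with $f_i^p \ge \eps^2 F_p$ satisfies $f_i^2 \ge \eps^{4/p} F_p^{2/p} \ge \eps^{4/p} F_2 / n^{1-2/p}$, so the relevant heavy hitters can be found with an $L_2$ threshold of roughly $\eps^{2/p}/n^{1/2-1/p}$ and hence in $n^{1-2/p}/\eps^{4/p}$ space --- is exactly the source of the improvement in the paper, and your plan of detecting these items early in a random-order stream and then tracking their counts is in the spirit of the $\counthh$ subroutine the paper imports from \cite{BravermanGW20}.

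The tail estimation, however, is a genuine gap, and you correctly flag it as the hardest step. The paper never splits $F_p$ into a heavy part plus a tail and uses no $\countsketch$ or geometric-mean estimator in the upper bound. Instead it runs the \emph{same} machinery at every level: for the level set of items with $f_k^p \approx F_p/2^i$ it subsamples at rate roughly $\gamma\,2^{-i}\eps_\ell^{-2}$, at which point the surviving items of that level set are again $L_2$-heavy in the substream (both the substream's $F_p$ and its universe size shrink), so the random-order heavy-hitter routine returns $\left(1+\frac{\eps}{8\alpha\log n}\right)$-approximate frequencies for them; a Horvitz--Thompson estimator, Chebyshev, and a median over $\O{\log\log n}$ repetitions then recover each level set's contribution. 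Your alternative rests on the claim that martingale concentration under random order makes ``the effective residual $F_2$ against which the sketch variance is measured'' much smaller than worst case. Random order cannot deliver this: the residual $F_2$ of a substream is a function of the frequency vector alone, not of arrival order, so a width-$w$ $\countsketch$ still incurs per-item error $\Theta\left(\sqrt{F_2^{\mathrm{res}}/w}\right)$ regardless of ordering. What random order actually buys (and what the paper exploits) is that each item's occurrences are spread uniformly in time, so counting an item over a short window and rescaling yields a $(1+\eps)$-\emph{multiplicative} frequency estimate --- a guarantee that is impossible in adversarial order with $1/\eps^2$-type space \cite{Ganguly12}. Without that substitute mechanism your tail estimator has no variance bound and the additive $\eps F_p$ guarantee does not follow. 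Two smaller omissions: you would also need level-dependent accuracy parameters $\eps_i$ and universe bounds $n_i$ (as in the paper) to make the total space a geometric series rather than paying an extra $\log n$ factor, and some device such as the paper's randomly shifted level-set boundaries $\zeta$ to control items whose estimated frequencies land in the wrong level set.
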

\thmref{thm:main:large} utilizes the random-order model to improve the algorithm on arbitrary-order insertion-only streams using $\tO{\frac{1}{\eps^2}\cdot n^{1-2/p}}$ space~\cite{Ganguly11,GangulyW18}, in terms of the dependence on $\frac{1}{\eps}$. 
We then give an algorithm that uses roughly the same bounds even for the two-pass streaming model, even if the order of the updates is adversarial.  
\begin{theorem}
\thmlab{thm:twopass:meta}
For $p>2$, there exists a two-pass streaming algorithm that outputs a $(1+\eps)$-approximation to the $F_p$ moment with probability at least $\frac{2}{3}$.
\begin{itemize}
\item
If the stream is insertion-only, the algorithm uses $\tO{\frac{1}{\eps^{4/p}}\cdot n^{1-2/p}}$ bits of space (see \thmref{thm:twopass:insertion}). 
\item
If the steam has turnstile updates, the algorithm uses $\tO{\frac{1}{\eps^{4/p}}\cdot n^{1-2/p}}$ bits of space (see \thmref{thm:twopass:turnstile}). 
\end{itemize}
\end{theorem}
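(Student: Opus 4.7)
The plan is a two-stage decomposition of $F_p$ into a heavy-hitter contribution and a tail contribution: the first pass locates every coordinate that could be a significant contributor to $F_p$, and the second pass computes those contributions precisely and estimates the remaining tail to within additive $\eps F_p$ error.

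For the heavy-hitter step, I use that for $p>2$ any coordinate with $f_i^p \ge \eps^2 F_p$ must also satisfy $f_i^2 \ge \eps^{4/p} F_2 / n^{1-2/p}$, which follows from the norm inequality $F_p \ge F_2^{p/2}/n^{p/2-1}$. During the first pass I would therefore run an $\ell_2$ heavy-hitters procedure at threshold $\eps^{4/p}/n^{1-2/p}$: a BPTree-style procedure for insertion-only streams (as already invoked in \thmref{thm:main:large}) and a CountSketch or comparable linear sketch for turnstile streams. Both variants return a candidate set $H$ that is guaranteed to contain every $\eps^2$-$F_p$-heavy hitter and fit in $\tO{n^{1-2/p}/\eps^{4/p}}$ bits.

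In the second pass I would compute $f_i$ for each $i\in H$---exactly via individual counters in the insertion-only case, and to error well below $\eps^{2/p} F_p^{1/p}$ via a matched CountSketch in the turnstile case---while in parallel running a complementary $F_p$ estimator on the residual $g$ defined by $g_i=f_i$ for $i\notin H$ and $g_i=0$ for $i\in H$. Because $H$ absorbs every heavy hitter, $\|g\|_\infty^p \le \eps^2 F_p$, which yields the key variance bound $F_{2p}(g) \le \|g\|_\infty^p \cdot F_p(g) \le \eps^2 F_p^2$; this is precisely what lets a standard Indyk--Woodruff/Ganguly-style sampling estimator reach additive error $\eps F_p$ on the tail using only $\tO{n^{1-2/p}}$ bits per repetition and $\tO{1}$ repetitions. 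Summing $\sum_{i\in H}\tilde f_i^p$ with the tail estimate then produces the claimed $(1\pm\eps)$-approximation with probability at least $\frac{2}{3}$.

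The step I expect to be the main obstacle is the turnstile case: since we do not have exact access to $f_H$, the subtraction that defines the tail sketch must be handled implicitly through linear sketches, and the per-coordinate CountSketch error on each $i\in H$ has to be propagated through both the $p$-th power (for the heavy contribution) and the variance bound (for the tail). Balancing the CountSketch precision against $|H|$, the tail-estimator sampling rate, and the total space budget $\tO{n^{1-2/p}/\eps^{4/p}}$ is the delicate accounting that separates \thmref{thm:twopass:insertion} and \thmref{thm:twopass:turnstile}, each needing its own tailored analysis on top of the common two-pass template sketched above.
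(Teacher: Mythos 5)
There is a genuine gap in the second-pass tail estimation, which is where your proposal diverges from the paper. You decompose $F_p$ into the contribution of the globally $\eps^2$-heavy coordinates (found in pass one) plus the tail $F_p(g)$, and claim that the variance bound $F_{2p}(g)\le\|g\|_\infty^p\cdot F_p(g)\le\eps^2F_p^2$ lets a ``standard Indyk--Woodruff/Ganguly-style sampling estimator'' achieve additive error $\eps F_p$ on the tail using $\tO{n^{1-2/p}}$ bits and $\tO{1}$ repetitions. This does not follow. If you subsample coordinates at rate $q$ and sum the $p$-th powers of the survivors, the variance is roughly $\frac{1}{q}F_{2p}(g)\le\frac{\eps^2F_p^2}{q}$, so driving it below $c\,\eps^2F_p^2$ forces $q=\Omega(1)$, i.e., no compression; the global variance bound alone buys nothing. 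The actual Indyk--Woodruff machinery estimates each level set's contribution separately, subsampling level $\Lambda_i$ at rate roughly $\gamma/(2^i\eps^2)$ and recovering the $\Theta(\eps^{-2})$ survivors as heavy hitters of the substream \emph{together with $(1\pm\eps)$-accurate frequencies}. Running that entirely inside your second pass is just the one-pass arbitrary-order problem again, whose best known cost is $\tO{n^{1-2/p}/\eps^2}$ precisely because one-pass heavy-hitter algorithms at the $1/\eps^2$ space budget only give constant-factor frequency estimates. Your first pass cannot rescue this, because it only identifies coordinates above the top threshold $f_i^p\ge\eps^2F_p$ and provides no candidates for the lower level sets.

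The paper's fix is exactly to move the per-level identification into pass one: \algref{alg:multipass:estimator} runs $O(\log n)$ nested subsampled substreams $I^r_i$ in the first pass, each with its own heavy-hitter instance at a level-dependent threshold $\frac{(\eps_i)^{2/p}}{80\gamma}(1/n_i)^{1/2-1/p}$ chosen so the total space is a geometric series summing to $\tO{n^{1-2/p}/\eps^{4/p}}$, and then in pass two exactly counts every reported candidate (which also disposes of your turnstile concern---exact counters on a known index set work fine under turnstile updates, so no second CountSketch or error propagation through $p$-th powers is needed). Correctness then follows from a per-level Chebyshev bound (\lemref{lem:multipass:item}, \lemref{lem:multipass:correctness}) rather than from a single tail variance bound. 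To repair your argument you would need either to push all $O(\log n)$ subsampling levels into the first pass as the paper does, or to supply a genuinely new one-pass tail estimator with the claimed $\tO{n^{1-2/p}}$ cost, which is not known.
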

\thmref{thm:twopass:meta} is the first algorithm to match the lower bound of $\tilde{\Omega}\left(\frac{1}{\eps^{4/p}}\cdot n^{1-2/p}\right)$ for multi-pass frequency moment estimation~\cite{WoodruffZ12} up to $\log n$ factors. 

By contrast, we give a lower bound for $F_p$ estimation in the one-pass insertion-only streaming model when the order of the updates is adversarial. 
\begin{restatable}{theorem}{thmlblarge}
\thmlab{thm:lb:large}
For any constant $p>2$ and parameter $\eps=\Omega\left(\frac{1}{n^{1/p}}\right)$, any one-pass insertion-only streaming algorithm that outputs a $(1+\eps)$-approximation to the $F_p$ moment of an underlying frequency vector with probability at least $\frac{9}{10}$ requires $\Omega\left(\frac{1}{\eps^2}\cdot n^{1-2/p}\right)$ bits of space. 
\end{restatable}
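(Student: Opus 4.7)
\medskip\noindent\textbf{Proof plan for \thmref{thm:lb:large}.}

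The plan is to prove the lower bound via a reduction from a hard two-party communication problem, using an information-cost direct sum. First I would fix as the base problem the gap-Hamming distance problem $\mathrm{GHD}_s$ on strings of length $s=\Theta(1/\eps^2)$: Alice has $x\in\{0,1\}^s$, Bob has $y\in\{0,1\}^s$, and the task is to distinguish $\Delta(x,y)\le s/2-\sqrt{s}$ from $\Delta(x,y)\ge s/2+\sqrt{s}$. The one-way randomized complexity of $\mathrm{GHD}_s$ is $\Omega(s)=\Omega(1/\eps^2)$. By an information-cost direct sum applied to the product hard distribution on $k=\Theta(n^{1-2/p})$ independent copies, any one-way protocol solving the $k$-fold $\mathrm{GHD}_s$ problem with constant average per-copy success requires $\Omega(k/\eps^2)=\Omega(n^{1-2/p}/\eps^2)$ bits of communication.

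Next I would exhibit a reduction converting any one-pass $F_p$ streaming algorithm using $S$ bits of space into a one-way protocol for the $k$-fold $\mathrm{GHD}_s$ problem with $O(S)$ bits of communication. Given inputs $(x_j,y_j)_{j\in[k]}$, Alice and Bob partition the universe into $k$ disjoint blocks of size $s$, one per instance; the total universe size is $ks=n^{1-2/p}/\eps^2\le n$ exactly under the hypothesis $\eps\ge n^{-1/p}$. Alice inserts elements of block $j$ with multiplicities determined by $x_j$ (with an appropriate boost factor $L$) and sends the algorithm's state to Bob, who extends the stream by inserting elements according to $y_j$. The encoding is engineered so that the contribution of block $j$ to $F_p$ is a strictly monotone function of $\Delta(x_j,y_j)$, and so that a $(1+\eps)$-approximation to the aggregate $F_p$ allows Bob to recover, on average over $j$, the correct $\mathrm{GHD}_s$ answer. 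Combining the reduction with the $k$-fold direct-sum bound then yields the claimed space lower bound.

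The main obstacle is to certify that the $(1+\eps)$-multiplicative slack on the aggregate $F_p$ translates into per-block precision strong enough to decode the $\mathrm{GHD}_s$ answers on average. With $k$ blocks each contributing comparably to $F_p$, the aggregate additive error $\eps F_p$ must be small enough that, when distributed across blocks, most individual per-block contributions are determined to within the $\sqrt{s}$ gap of $\mathrm{GHD}_s$; this dictates how the boost $L$ scales with $p$ and governs the balance between the $n^{1-2/p}$ and the $1/\eps^2$ factors. A delicate point is that the algorithm outputs a single scalar yet must implicitly certify $k$ distinct claims, so the argument uses the standard simulation view: the algorithm's state after Alice's prefix is enough information for Bob, who knows his own $y_j$'s, to recover the $k$ answers. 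Avoiding the spurious $\log n$ factor in Ganguly's prior $\Omega(n^{1-2/p}/(\eps^2\log n))$ bound requires carrying out the direct sum at the level of information cost—rather than naively union-bounding over per-block events—and is where the bulk of the technical work lies.
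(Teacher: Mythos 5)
Your approach is genuinely different from the paper's, and it has a gap at the decoding step that I do not see how to repair. The paper does not use gap-Hamming at all: it defines a $(t,\eps,n)$-player set disjointness estimation problem with $t+1$ players, where the first $t$ players hold $0/1$ vectors summing to $\u$ with $u_i\le 1$ off a hidden spike coordinate $j$, and the last player holds $j$ together with a bit $c$ and adds a huge spike $\frac{ct}{\eps}\cdot\e_j$ to the stream before querying. That spike is the crucial device: it makes the single coordinate $j$ \emph{dominate} $F_p$, so that a $(1+\eps)$-multiplicative approximation to the aggregate moment actually resolves whether $u_j=1$ or $u_j=t$. The $n^{1-2/p}$ factor then comes from the number of players $t=\Theta(n^{1/p}/\eps)$ via a direct sum over the $n$ coordinates (the first $t$ players do not know $j$, so they must pay on every coordinate), with the per-coordinate information cost bounded by Hellinger-distance arguments.

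In your reduction there is no analogous mechanism for making any one block dominate. With $k=\Theta(n^{1-2/p})$ symmetric blocks each contributing $\Theta(2^pL^p/\eps^2)$ to $F_p$, the $\sqrt{s}=\Theta(1/\eps)$ gap of a single $\mathrm{GHD}_s$ instance changes the aggregate $F_p$ by a $\Theta(\eps/k)$ relative amount, which is far below the $(1+\eps)$ precision the algorithm guarantees; a uniform boost $L$ rescales every block identically and cannot change this ratio. The fallback you mention---that Bob can fork the $S$-bit state $k$ times and continue each copy with only his block-$j$ updates---does not help, because each forked continuation still returns an approximation to an aggregate that includes Alice's insertions in all $k$ blocks, so the per-block signal remains a $\Theta(\eps/k)$ fraction of the queried quantity. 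This is exactly why two-party GHD constructions yield the $\Omega(1/\eps^2)$ term (as in the $p\le 2$ lower bounds) but not the multiplicative $n^{1-2/p}\cdot\eps^{-2}$ bound for $p>2$: the $n^{1-2/p}$ factor requires a many-player problem with a heavy hidden coordinate, not many parallel two-party copies. The information-cost direct sum you invoke in the first half is sound in isolation, but the protocol you extract from the streaming algorithm simply does not solve the $k$-fold problem, so the composition fails.
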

\thmref{thm:lb:large} improves the lower bound of $\Omega\left(\frac{1}{\eps^2 \log n}\cdot n^{1-2/p}\right)$ for general insertion-only streams for $p>2$ by \cite{Ganguly11}. 
Together, \thmref{thm:main:large} and \thmref{thm:lb:large} for small enough $\varepsilon > 0$ show a somewhat surprising result that random-order streams are strictly easier than arbitrary insertion-only streams. 
Similarly, \thmref{thm:twopass:meta} and \thmref{thm:lb:large} together show that multiple passes are strictly easier than a single pass on arbitrary insertion-only streams. 


\subsection{Our Techniques}
\seclab{sec:techniques}
We first describe our random-order insertion-only algorithm. 
At a high level, our algorithm interleaves the recent heavy-hitters algorithm of~\cite{BravermanGW20} with a subsampling procedure to estimate the contributions of various level sets toward the frequency moment. 

\textbf{Approximate frequencies of heavy-hitters.} 
The heavy-hitters algorithm of~\cite{BravermanGW20} partitions the updates of a random-order stream into blocks of updates. 
It then randomly chooses a number of coordinates from the universe $[n]$ to test in each block. 
The heavy-hitters will pass the test and subsequently be tracked across a number of following blocks before estimates for their frequencies are output. 
Due to the uniformity properties of the random-order stream, the heavy-hitters are sufficiently ``spread out'', so the algorithm crucially outputs a $(1+\eps)$-approximation to the frequency of each heavy-hitter; algorithms with the same guarantee and space complexity for arbitrary-order streams do not exist~\cite{Ganguly12}. 

\textbf{Using level sets to estimate frequency moments.} 
Given the subroutine for finding $(1+\eps)$-approximations to the frequencies of heavy-hitters, we now build upon a standard subsampling approach to estimate the frequency moment. 
Informally, we conceptually define the level sets so that level set $\Lambda_i$ contains the coordinates $k\in[n]$ such that $f_k^p\in\left[\frac{F_p}{2^i},\frac{2F_p}{2^i}\right]$. 
Since the level sets partition the universe, it is easy to see that if we define the contribution $C_i$ of a level set $\Lambda_i$ by the sum of the contributions of all their coordinates, $C_i:=\sum_{k\in \Lambda_i}f_k^p$, then $F_p$ is just the sum of all the contributions of the level sets, $F_p=\sum_i C_i$. 

\cite{IndykW05} showed that the contributions of each ``significant'' level set can be estimated by subsampling at exponentially smaller rates and considering the approximate frequencies of the heavy-hitters in each of the subsamples. 
For example, a single item with contribution $F_p$ will be detected at the top level, while $n$ items with contribution $1$ will be detected at a subsampling level where there are roughly $\Theta\left(\frac{1}{\eps^p}\right)$ survivors in expectation. 
Crucially, $(1+\eps$)-approximations to the contribution of the surviving heavy-hitters in each subsampling level can then be rescaled by the sampling rate to obtain ``good'' approximations to the contributions of each significant level set; these very good estimates are not available in standard subsampling schemes for arbitrary order streams. 

We adapt this approach for $p>2$ to obtain \thmref{thm:main:large}. 
The first observation is that an item $i$ with $f_i^p\ge\eps^2F_p$ should be identified to control the variance but also satisfies $f_i^2\ge\eps^{4/p}/n^{1-2/p}\cdot F_2$, so we can identify these items using the heavy-hitter algorithm of \cite{BravermanGW20} with the corresponding threshold; this induces the overall $n^{1-2/p}/\eps^{4/p}$ dependency. 
Moreover as we subsample, the space of the universe decreases in expectation from $n$ to $n/2$ to $n/4$ and so forth. 
Thus determining the $L_p$-heavy hitters at lower subsampling rates can be done using significantly less space. 
We can take advantage of this by requiring that our heavy-hitter algorithm aggressively seeks heavy-hitters with lower thresholds at lower subsampling rates. 
We can thus achieve a geometric series and avoid an additional $\O{\log n}$ factor in our space. 


\textbf{From random-order to two-pass arbitrary order.}
As stated, our algorithm necessitates the random-order model so that the heavy-hitter subroutine can output $(1+\eps)$-approximations to the frequencies of heavy-hitters across different subsampling levels; these approximations are then used to obtain $(1+\eps)$-approximations to the contributions of each level set. 
The state-of-the-art heavy-hitter algorithms in insertion-only~\cite{BravermanCINWW17} or turnstile~\cite{CharikarCF04} streams with arbitrary arrival order do not give a $(1+\eps)$-approximation to the frequency of each heavy-hitter while still using space dependency $\frac{1}{\eps^2}$. 
Fortunately, our approach can be remedied in two-passes over the data stream by first using a pass to identify each of the heavy-hitters across the different subsampling levels and then using the second pass to exactly count their frequencies; note that since $n^{1-2/p}/\eps^{4/p}\ge\frac{1}{\eps^2}$ for $n\ge\frac{1}{\eps^2}$, we are still permitted space to track the frequencies of $\min\left(n,\frac{1}{\eps^2}\right)$ items. 
We can then obtain $(1+\eps)$-approximations to the contributions of each significant level set, thus obtaining a $(1+\eps)$-approximation to the $F_p$ frequency moment. 

\textbf{Source of the separation.} 
In summary, our improved upper bounds in both the random-order and multi-pass models exploit each model to obtain $(1+\eps)$-approximate frequencies of the heavy-hitters in each subsampling level. 
Due to the uniformity of heavy-hitters across the stream in the random-order model, we are able to obtain $(1+\eps)$-approximate frequencies by simply tracking the frequency of the heavy-hitters within a small block of the stream and then scaling by the entire length of the stream. 
For multi-pass models, we are able to identify heavy-hitters in the first pass and exactly track their frequencies in the second pass. 
By contrast, obtaining $(1+\eps)$-approximate frequencies in adversarially ordered insertion-only streams with the same space guarantees in a single pass cannot be done~\cite{Ganguly12}. 

\textbf{Lower bound.} 
To prove our improved lower bound, we first recall the standard approach for showing $F_p$ moment estimation lower bounds in insertion-only streams for $p>2$ that uses the multiplayer set disjointness problem, e.g.,~\cite{Bar-YossefJKS04, Ganguly12}. 
In this problem, $t$ players have binary vectors of length $n$ and the promise is that the largest coordinate in the sum of all vectors is either (1) at most $1$ or (2) exactly $t$. 
For $t=\eps^{1/p}n^{1/p}$, the $F_p$ frequency moment of the sum of the binary vectors differs by a factor of $(1+\eps)$ between the two cases, so that a $(1+\eps)$-approximation to $F_p$ solves the multiplayer set disjointness problem. 
The total communication complexity of the multiplayer set disjointness problem is $\Omega\left(\frac{n}{t}\right)$ so one of the $t$ players communicates $\Omega\left(\frac{n}{t^2}\right)$ bits, and thus a lower bound of $\Omega\left(\frac{1}{\eps^{2/p}}\cdot n^{1-2/p}\right)$ follows.

To improve the $\eps$ dependency in the lower bound to $\frac{1}{\eps^2}$, we define the $(t,\eps,n)$-player set disjointness estimation problem so that there are $t+1$ players $P_1,\ldots,P_{t+1}$ with private coins in the standard blackboard model. 
The first $t$ players each receive a vector $\v_s\in\{0,1\}^n$ for $s\in[t]$, while player $P_{t+1}$ receives an index $j\in[n]$ and a bit $c\in\{0,1\}$. 
For $\u=\sum_{s\in[t]}\v_s$, the inputs are promised to satisfy $u_i\le 1$ for each $i\neq j$ and either $u_j=1$ or $u_j=t$, similar to the multiparty set disjointness problem.  
With probability at least $\frac{9}{10}$, $P_{t+1}$ must differentiate between the three possible input cases (1) $u_j+\frac{ct}{\eps}\le t$, (2) $u_j+\frac{ct}{\eps}\in\left\{\frac{t}{\eps},\frac{t}{\eps}+1\right\}$, or (3) $u_j+\frac{ct}{\eps}=(1+\eps)\frac{t}{\eps}$, where $\eps\in(0,1)$. 
We call coordinate $j\in[n]$ the \emph{spike} location and show that the $(t,\eps,n)$-player set disjointness estimation problem requires $\Omega\left(\frac{n}{t}\right)$ total communication by using a direct sum embedding to decompose the conditional information complexity into a sum of $n$ single coordinate problems. 
The intuition is that the first $t$ players do not know the spike location, so they must effectively solve the problem on each coordinate. 
We then bound the conditional information complexity of each single coordinate problem by the Hellinger distances between inputs for which the outputs differ. 
We thus apply the same reduction and set $t=\Theta\left(\frac{1}{\eps}\cdot n^{1/p}\right)$ to obtain the desired lower bound. 

We remark that the $(t,\eps,n)$-player set disjointness estimation problem can be seen as a generalization of the augmented $L_\infty$ promise problem introduced by \cite{LiW13}. 
In the augmented $L_\infty$ promise problem, there are only three players, but each coordinate in the first two players' input vectors can be as large as $\eps k$ for some fixed parameter $k$. 
\cite{LiW13} used the augmented $L_\infty$ promise problem to give a lower bound of $\Omega\left(\frac{\log n}{\eps^2}\cdot n^{1-2/p}\right)$ for $F_p$ moment estimation on turnstile streams. 
However, since their reduction crucially allows players to use turnstile updates, we cannot adapt their techniques to obtain a reduction for insertion-only streams. 

\subsection{Preliminaries}
For a positive integer $n$, we use the notation $[n]$ to denote the set of integers $\{1,2,\ldots,n\}$. 
For a frequency vector $f$ with dimension $n$ and $p\ge 2$, we define the $F_p$ moment function by $F_p(f):=\sum_{k\in[n]}|f_k|^p$ and the $L_p$ norm of $f$ by $L_p(f)=(F_p(f))^{1/p}$. 
When $f$ is defined through the updates of a insertion-only data stream, we simply use $F_p$ to denote $F_p(f)$. 
Thus for a subset $I\subseteq[n]$, the notation $F_p(I)$ is understood to mean $\sum_{k\in I}f_k^p$. 
We also use the notation $\|f\|_p$ to denote the $L_p$ norm of $f$. 
The $F_p$ moment estimation problem and the norm estimation problem are often used interchangeably, as a $(1+\eps)$-approximation algorithm to one of these problems can be modified to output a $(1+\eps)$-approximation to the other using a rescaling of $\eps$, for constant $p > 0$. 

We use $\log$ and $\ln$ to denote the base two logarithm and natural logarithms, respectively. 
We use $\poly(n)$ to denote a fixed constant degree polynomial in $n$ and $\frac{1}{\poly(n)}$ to denote some arbitrary degree polynomial in $n$ corresponding to the choice of constants in the algorithms. 
We use $\polylog(n)$ to denote polylogarithmic factors of $n$. 

The $L_p$-heavy hitters problem is to output all coordinates $i$ such that $f_i\ge\eps\cdot L_p$; such a coordinate is called a \emph{heavy-hitter}. 
The problem allows coordinates $j$ with $f_j\le\eps\cdot L_p$ to be output, provided that $f_j\ge\frac{\eps}{2}\cdot L_p$. 
Moreover, each coordinate output by the algorithm must also have a frequency estimation with additive error at most $\frac{\eps}{2}\cdot L_p$. 

\section{$F_p$ Estimation for $p>2$ in Random-Order Streams}
\seclab{sec:random}
In this section, we give our $F_p$ estimation algorithm for $p>2$. 
We first introduce an algorithm $\counthh$ on random-order insertion-only streams that outputs an approximate frequency for each $L_2$ heavy-hitter. 
\begin{restatable}{theorem}{thmcounthh}[Theorem 28 in~\cite{BravermanGW20}]
\thmlab{thm:count:hh}
There exists a one-pass algorithm $\counthh$ on random-order insertion-only streams that outputs a list $H$ of ordered pairs $(j,\widehat{f_j})$ such that $\widehat{f_j}=(1\pm2\eps)f_j$, for each $j\in H$, where $f$ is the underlying frequency vector. 
Moreover, we have that $j\in H$ for each $j\in[n]$ with $f_j^2\ge\eps^2\cdot F_2$ and $j\not\in H$ for each $j$ with $f_j^2\le\frac{\eps^2}{2}\cdot F_2$. 
The algorithm uses $\O{\frac{1}{\eps^2}\left(\log^2\frac{1}{\eps}+\log^2\log m+\log n\right)+\log m}$ bits of space and succeeds with probability at least $38/39$.
\end{restatable}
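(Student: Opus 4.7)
The plan is to follow a \bptree-style tournament approach adapted to the random-order setting. I would partition the stream into $B = \Theta(\log m)$ blocks of length $m/B$, and maintain a pool of candidate heavy-hitters of size $\O{1/\eps^2}$, each paired with an exact counter. Within each block, I would run a lightweight $\countsketch$-style filter (or the recursive halving structure from \bptree) to detect coordinates whose within-block frequency exceeds an $L_2$-heavy-hitter threshold appropriately rescaled to block size; any such coordinate is promoted to the pool, and from the moment of promotion its occurrences are counted exactly through the remaining stream.

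The probabilistic engine is a Chernoff-type concentration bound for sampling without replacement, which is exactly what the random-order assumption provides. For \emph{detection}: if $f_j^2 \ge \eps^2 F_2$, the expected within-block count of $j$ is $f_j / B$, and sampling-without-replacement concentration guarantees that $j$ crosses the detection threshold in some early block with high probability. For \emph{frequency estimation}: once $j$ first enters the pool at stream position $t_j$, counting $j$'s occurrences in the length-$L = m - t_j$ suffix and outputting $\widehat{f_j} = (m/L)\cdot\text{count}$ gives, by the same concentration bound, $\widehat{f_j} = (1 \pm 2\eps) f_j$, provided $L = \Omega(m)$, which is ensured by detecting $j$ within the first few blocks. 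Items with $f_j^2 \le \eps^2 F_2 / 2$ fall below the detection threshold in every block with high probability and are excluded.

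The space accounting gives $\O{1/\eps^2}$ pool slots of $\O{\log n + \log m}$ bits each, plus the internal block sketch. Recursive halving of thresholds in a \bptree-like manner handles chaining through $\O{\log(1/\eps)}$ scales and contributes the $\O{\log^2(1/\eps) + \log^2 \log m}$ factor; the additive $\log m$ term comes from global stream position and block boundaries. The main obstacle is coordinating the detection phase across blocks: one must simultaneously ensure that every $L_2$-heavy-hitter is caught early enough to leave a long suffix for accurate rescaling, that the pool is not contaminated by non-heavy-hitters past the $\O{1/\eps^2}$ budget, and that the additive error of the internal block sketch does not destroy the $(1 \pm 2\eps)$ guarantee on the rescaled estimate. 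Union-bounding the random-order Chernoff tails over $\O{1/\eps^2}$ candidates, over block boundaries, and over recursion levels is delicate; since this is exactly Theorem~28 of~\cite{BravermanGW20}, I would import their chaining argument for the tight constants, but the outline above reflects the approach I would take from scratch.
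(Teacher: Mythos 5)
This theorem is not proved in the paper: it is imported verbatim as Theorem~28 of \cite{BravermanGW20}, with \appref{sec:counthh} only reproducing the algorithm ($\counthhsub$, $\identify$, $\counthh$) and restating the key lemmas from that work. So the right comparison is against the cited construction, and your outline, while in the correct spirit (block partition of a random-order stream, detect-then-track, rescale a partial count using sampling-without-replacement concentration), deviates from it in two places that break the stated space bound.

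First, you count each promoted item ``exactly through the remaining stream'' and rescale by $m/L$. An exact suffix counter holds a value up to $f_j\le m$, so your $\O{1/\eps^2}$ pool slots cost $\O{\frac{1}{\eps^2}\log m}$ bits, which is not within $\O{\frac{1}{\eps^2}(\log^2\frac{1}{\eps}+\log^2\log m+\log n)+\log m}$. This is not a cosmetic issue: in \algref{alg:estimator} the theorem is applied to subsampled substreams whose universe size $n_\ell$ can be $\poly(\log n,1/\eps)$ while the stream length stays $\poly(n)$, and the geometric-sum space analysis in \thmref{thm:main:large} relies precisely on the per-instance cost scaling with $\log n_\ell$ rather than $\log m$. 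The actual algorithm avoids this by counting each tracked item over only $p=4000\eps^{-2}\ln m$ of the $\eps\sqrt{\tilde{F_2}}$ blocks and rescaling by $\eps\sqrt{\tilde{F_2}}/p$; the count in that window is $\poly(1/\eps)\cdot\log m$, so each counter needs only $\O{\log\frac{1}{\eps}+\log\log m}$ bits, which is where the $\log^2$ terms come from (not from a $\bptree$-style recursion over $\O{\log(1/\eps)}$ threshold scales, as you suggest). Second, your detection step runs a $\countsketch$/$\bptree$ filter inside each block; the cited algorithm instead uses $\eps\sqrt{\tilde{F_2}}$ blocks of length $F_1/(\eps\sqrt{\tilde{F_2}})$ and hashes coordinate \emph{identities} to blocks, so a coordinate is only eligible to become ``excited'' in the few blocks it hashes to, and only a unique excited candidate per block spawns an $\identify$ instance. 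This is what bounds the number of simultaneously tracked items by $\O{\eps^{-2}}$ without paying for a full per-block heavy-hitter sketch and its identification machinery. Your outline also omits the outer doubling wrapper of $\counthh$ around $\counthhsub$, which is needed because the block length depends on $F_1$ and $F_2$, neither of which is known in advance. If you repair the counting window (track over $\Theta(\eps^{-2}\log m)$ blocks, not the whole suffix) and replace the per-block sketch with the hash-to-block excitement mechanism, your argument becomes essentially the cited one.
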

A standard heavy-hitter algorithm such as $\countsketch$~\cite{CharikarCF04} or $\bptree$~\cite{BravermanCINWW17} outputs each $j\in[n]$ with $f_j^2\ge\frac{\eps^2}{2}\cdot F_2$ but only finds a constant-factor approximation to the frequency of each heavy-hitter. 
By comparison, $\counthh$ crucially uses the random-order stream to output a $(1+\eps)$-approximation to the ferquency of each heavy-hitter; we defer the full details of the algorithm to \appref{sec:counthh}. 
We use $\counthh$ in a subsampling approach to approximate the contributions of each of the level sets, defined as follows:
\begin{definition}[Level sets and contribution]
\deflab{def:level:sets}
Given $\widehat{F_p}$ such that $\widehat{F_p}\le F_p\le1.01\cdot\widehat{F_p}$ and a uniformly random $\zeta\in[1,2]$, we define the \emph{level set} $\Lambda_i$ for each $i\in[\log 4n]$ so that
\[\Lambda_i:=\left\{k\,|\, f_k^p\in\left[\frac{\zeta\cdot\widehat{F_p}}{2^i},\frac{2\zeta\cdot\widehat{F_p}}{2^i}\right]\right\}.\]
Then we define the \emph{contribution} $C_i$ of level set $\Lambda_i$ to be $C_i:=\sum_{k\in \Lambda_i}f_k^p$. 
We define the \emph{fractional contribution} $\phi_i$ of level set $\Lambda_i$ to be the ratio $\phi_i:=\frac{C_i}{\zeta\widehat{F_p}}$, so that $\phi_i\in[0,1.01]$. 

For a stream of length $m=\poly(n)$, let $\alpha$ be an integer such that $2^\alpha>m^p$. 
We say a level set $\Lambda_i$ is \emph{significant} if its fractional contribution $\phi_i$ is at least $\frac{\eps}{2\alpha\log n}$. 
Otherwise, we say the level set is \emph{insignificant}. 
\end{definition}
Observe that it suffices to obtain a multiplicative $\left(1+\frac{\eps}{2}\right)$-approximation to the contribution of each significant level set and estimate the contributions of the insignificant level sets to be zero, since there are at most $\alpha\log n$ level sets and thus the total additive error from the insignificant level sets is at most $\frac{\eps}{2\alpha\log n}\cdot F_p\cdot \alpha\log n=\frac{\eps}{2}\cdot F_p$. 

To estimate the contribution of each level set, we use a combination of $\counthh$ and subsampling to approximate the frequencies of a number of heavy-hitters. 
The main idea is that subsampling induces a separate substream with a frequency vector with a smaller $F_p$. 
Thus the items in the level sets $\Lambda_i$ with small $i$ will be heavy-hitters of the frequency vector while the items in the level sets $\Lambda_i$ with larger $i$ will be heavy-hitters of the substreams with smaller sampling rates (if the items are subsampled), due to the lower $F_p$ of the substreams. 
We can then use a $(1+\eps)$-approximation to the frequency of each sampled heavy-hitter to estimate a $(1+\eps)$-approximation to the contribution of the level set, due to uniformity properties of random-order streams. 
Note that this is where we crucially use $\counthh$ over other possible heavy-hitter algorithms, due to its advantage of providing $(1+\eps)$-approximate frequencies in the random-order model. 
We describe our algorithm for $F_p$ estimation with $p>2$ for random-order insertion-only streams in \algref{alg:estimator}. 
\begin{algorithm}[!htb]
\caption{$F_p$ Estimation in the Random-Order Insertion-Only Model, $p>2$}
\alglab{alg:estimator}
\begin{algorithmic}[1]
\Require{Accuracy parameter $\eps\in(0,1)$, $\widehat{F_p}\le F_p\le1.01\cdot\widehat{F_p}$}
\Ensure{$(1+\eps)$-approximation to $F_p$.}
\State{Let $\zeta\in[1,2]$ be chosen uniformly at random and $\alpha$ be a positive integer such that $2^\alpha>m^p$.}
\State{Let $\eta>\sum_{j\ge0}2^{-j(p/16-1/8)}$ be a sufficiently large constant.}
\State{$\gamma\gets 2^{11}$, $n_i\gets\min\left(\left(\frac{16\alpha p\log n}{\eps^{1-2/p}}\right)^{(2p)/(p-2)},\frac{10\gamma n}{2^i}\right)$, $\eps_i=\frac{\eps}{16\eta\cdot 2^{i(p/16-1/8)}\log\frac{1}{\eps^2}}$}
\State{Let $I^r_i$ be a (nested) subset of $[n]$ subsampled at rate $p_i:=\min(1,2^{-i}\gamma)$.}
\State{Let $H^r_i$ be the output of $\counthh$ with threshold parameter $\frac{(\eps_i)^{2/p}}{80\gamma}\cdot\left(\frac{1}{n_i}\right)^{1/2-1/p}$ on the substream induced by $I^r_i$.}
\For{$i\in[\alpha\log n]$, $r\le\O{\log\log n}$}
\State{$\ell_i:=\min\{k:2^k>2^i\cdot\eps_k^2\}$}
\State{Let $S^r_i$ be the set of ordered pairs $(j,\widehat{f_j})$ in $H^r_{\ell_i}$ with $\left(\widehat{f_j}\right)^p\in\left[\frac{\zeta\widehat{F_p}}{2^i},\frac{2\zeta\widehat{F_p}}{2^i}\right]$.}
\State{$\widehat{C_i}\gets\median_r\frac{1}{p_{\ell_i}}\cdot\left(\sum_{(j,\widehat{f_j})\in S^r_i}\left(\widehat{f_j}\right)^p\right)$}

\EndFor
\State{\Return $\tilde{F_p}:=\sum_i\widehat{C_i}$}
\end{algorithmic}
\end{algorithm}

\begin{remark}
\remlab{rem:X:upper}
We remark that \algref{alg:estimator} is written requiring an input $\widehat{F_p}$ such that $\widehat{F_p}\le F_p\le1.01\cdot\widehat{F_p}$ in order to correctly index the level sets defined by \defref{def:level:sets}. 
\algref{alg:estimator} can be rewritten by setting $X=(F_1)^p$ to be an upper bound on $F_p$ and redefining the level sets in \defref{def:level:sets} so that $\Lambda_i:=\left\{k\,|\, f_k^p\in\left[\frac{\zeta\cdot X}{2^i},\frac{2\zeta\cdot X}{2^i}\right]\right\}$. 
Then we again have $F_p=\sum_i C_i$ across the contributions of all the level sets. 
\end{remark}

We first show that there exists a subsampling rate such that items in each level set will be reported as a heavy-hitter if they are successfully subsampled. 
Moreover, each item reported as a heavy-hitter will also be reported with an ``accurate'' estimate of its frequency.
\begin{restatable}{lemma}{lemsinglesamplelevel}
\lemlab{lem:single:sample:level}
Let $\eps\in(0,1)$, $\Lambda_i$ be a fixed level set, and let $\ell:=\ell_i:=\min\{k:2^k>2^i\cdot\eps_k^2\}$. 
For a fixed $r$, let $\mathcal{E}_1$ be the event that $|I^r_\ell|\le\frac{10\gamma n}{2^\ell}$ and let $\mathcal{E}_2$ be the event that $F_p(I^r_\ell)\le\frac{10\gamma F_p}{2^\ell}$. 
Conditioned on $\mathcal{E}_1$ and $\mathcal{E}_2$, there exists a $(j,\widehat{f_j})$ in $S^r_i$ for each $j\in \Lambda_i\cap I^r_\ell$ such that with probability at least $\frac{7}{8}$,
\[\left(1-\frac{\eps}{8\alpha\log n}\right)\cdot f_j^p\le\left(\widehat{f_j}\right)^p\le\left(1+\frac{\eps}{8\alpha\log n}\right)\cdot f_j^p.\]
\end{restatable}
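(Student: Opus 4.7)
The plan is to verify that the conditioning events $\mathcal{E}_1, \mathcal{E}_2$ force each $j \in \Lambda_i \cap I^r_\ell$ to appear as a genuine $L_2$ heavy-hitter in the subsampled stream induced by $I^r_\ell$ at threshold $\tau := \frac{\eps_\ell^{2/p}}{80\gamma}\cdot (1/n_\ell)^{1/2-1/p}$, and then invoke \thmref{thm:count:hh} with accuracy $\eps_\ell$. The approximation bound in the statement will then reduce to showing $2p\eps_\ell \lesssim \eps/(\alpha\log n)$, which follows from the definition of $\eps_\ell$ by the choice of the sufficiently large constant $\eta$.

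First I would bound $F_2(I^r_\ell)$ from above. By the power mean inequality,
\[
F_2(I^r_\ell) \le |I^r_\ell|^{1-2/p}\cdot F_p(I^r_\ell)^{2/p},
\]
and plugging in the events $\mathcal{E}_1$ and $\mathcal{E}_2$ gives $F_2(I^r_\ell) \le 10\gamma\cdot n^{1-2/p}\cdot 2^{-\ell}\cdot F_p^{2/p}$. Meanwhile, if $j\in\Lambda_i$ then $f_j^p \ge \zeta\widehat{F_p}/2^i \ge \widehat{F_p}/2^i$ (using $\zeta\ge 1$), so $f_j^2 \ge (\widehat{F_p}/2^i)^{2/p}\ge \Omega(F_p^{2/p}\cdot 2^{-2i/p})$. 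Comparing this with $\tau^2\cdot F_2(I^r_\ell)$, the $F_p^{2/p}$ factors cancel and the inequality $f_j^2 \ge \tau^2\cdot F_2(I^r_\ell)$ reduces — after substituting $n_\ell = \min(\cdot,10\gamma n/2^\ell)$ and simplifying — to a statement of the form $2^{\ell-i} \ge C(\gamma,p)\cdot \eps_\ell^2$ for an absolute constant $C(\gamma,p)$. The defining property $2^\ell > 2^i\cdot\eps_\ell^2$ of $\ell=\ell_i$ supplies this up to constants, and the choice $\gamma = 2^{11}$ (together with the constant $80$ in the threshold) absorbs any slack. In the other regime where the $n_\ell$-cap binds, the $n$-dependent factor only shrinks further, so the inequality still holds.

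Next I would apply \thmref{thm:count:hh} with parameter $\eps_\ell$ on the substream induced by $I^r_\ell$: since $j$ is above the $F_2$ heavy-hitter threshold, $\counthh$ outputs $(j,\widehat{f_j})$ with $\widehat{f_j} = (1\pm 2\eps_\ell)f_j$ with probability at least $38/39 \ge 7/8$. Raising to the $p$-th power and using $(1+x)^p \le 1+2px$ for $|x|\le 1/(2p)$, I get $(\widehat{f_j})^p = (1\pm 4p\eps_\ell)f_j^p$. Recalling $\eps_\ell = \eps/(16\eta\cdot 2^{\ell(p/16-1/8)}\log(1/\eps^2))$ and $2^\ell \le O(2^i)$ with $i\le \alpha\log n$, one checks (with $\eta$ chosen larger than $\sum_{j\ge 0} 2^{-j(p/16-1/8)}$ as stated) that $4p\eps_\ell \le \eps/(8\alpha\log n)$, giving the approximation claim.

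Finally I would verify membership in $S^r_i$. Since $j\in\Lambda_i$ means $f_j^p \in [\zeta\widehat{F_p}/2^i,\,2\zeta\widehat{F_p}/2^i]$ and $(\widehat{f_j})^p = (1\pm\eps/(8\alpha\log n))f_j^p$, the estimate lies within a factor of $(1\pm o(1))$ of the $\Lambda_i$ window; this is already enough to place it in the slightly relaxed bucket $S^r_i$ when level-set windows are interpreted with the standard $O(\eps/\log n)$ slack, and any boundary loss is absorbed by the analogous argument in the companion level-set covering lemma via the random $\zeta\in[1,2]$. The main obstacle is the first step — making the heavy-hitter threshold computation go through cleanly across both branches of $n_\ell$ while respecting the exact constants $\gamma = 2^{11}$, $80$, and $10$ — so the bulk of the write-up is the careful bookkeeping showing that $\ell_i$ as defined is exactly the subsampling level at which $\Lambda_i$ items become detectable by \thmref{thm:count:hh}.
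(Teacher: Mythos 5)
Your detection step (showing each $j\in\Lambda_i\cap I^r_\ell$ clears the $L_2$ threshold $\tau=\frac{\eps_\ell^{2/p}}{80\gamma}(1/n_\ell)^{1/2-1/p}$ on the subsampled stream, via the power-mean inequality $F_2(I)\le |I|^{1-2/p}F_p(I)^{2/p}$ together with $\mathcal{E}_1,\mathcal{E}_2$ and the defining property $2^\ell>2^i\eps_\ell^2$) matches the paper's argument. The gap is in the accuracy step. You invoke \thmref{thm:count:hh} ``with parameter $\eps_\ell$'' and conclude $\widehat{f_j}=(1\pm 2\eps_\ell)f_j$, reducing the claim to $4p\,\eps_\ell\le \frac{\eps}{8\alpha\log n}$. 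That inequality is false in general: for small $\ell$ we have $\eps_\ell=\Theta\bigl(\eps/\log\frac{1}{\eps^2}\bigr)$, so $4p\eps_\ell=\Theta\bigl(\eps/\log\frac{1}{\eps^2}\bigr)$, which is not bounded by $\frac{\eps}{8\alpha\log n}$ when $\eps$ is a constant and $n$ grows; neither the constant $\eta$ nor the factor $2^{\ell(p/16-1/8)}$ (which is $O(1)$ for $\ell=O(1)$) can supply the missing $\log n$. The parameter $\eps_\ell$ is \emph{not} what controls the per-item estimation error in this lemma --- its geometric decay in $\ell$ is there to control the variance of $D^r_i$ and to make the space a convergent series.

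The correct mechanism, which your write-up never uses, is that the accuracy of $\counthh$ is $(1\pm 2\eps')$ where $\eps'$ is the \emph{threshold actually passed to it}, namely $\tau$ itself, and $\tau$ is forced to be at most $\frac{\eps}{16p\alpha\log n}$ by the floor in the definition of $n_i$: since $n_\ell\ge\bigl(\frac{16\alpha p\log n}{\eps^{1-2/p}}\bigr)^{(2p)/(p-2)}$, one gets $(1/n_\ell)^{1/2-1/p}\le \frac{\eps^{1-2/p}}{16\alpha p\log n}$, hence $\tau\le\frac{\eps_\ell^{2/p}\eps^{1-2/p}}{1280\gamma\alpha p\log n}\le\frac{\eps}{16 p\alpha\log n}$; raising $(1\pm 2\tau)$ to the $p$-th power then yields the $\bigl(1\pm\frac{\eps}{8\alpha\log n}\bigr)$ bound on $f_j^p$. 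You treat the first argument of the $\min$ defining $n_i$ as irrelevant slack (``the $n_\ell$-cap binds... the factor only shrinks further''), but it is precisely that term that delivers the $\frac{1}{p\alpha\log n}$ accuracy, so as written the proof does not establish the stated approximation guarantee. The remaining points (union-bounding the $\counthh$ failure probability into the $7/8$, and deferring boundary misclassification to the level-set lemma via the random $\zeta$) are consistent with the paper.
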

\begin{proof}
Recall that \algref{alg:estimator} considers $H^r_\ell$ across $r\le\O{\log\log n}$ independent subsamples of $[n]$, to construct $\widehat{C_i}$ by subsampling at rate $p_\ell:=\min(1,2^{-\ell}\gamma)$. 

Suppose $2^i\cdot\eps^2<\gamma$, so that $\ell:=\ell_i=\min\{k:2^k>2^i\cdot\eps_k^2\}\le\log\gamma$ since $\eps_k\le\eps$ for all $k$. 
Then $p_\ell=\min(1,2^{-\ell}\gamma)=1$ and all items are subsampled, i.e., $H^r_\ell=[n]$. 
By \defref{def:level:sets} of the level sets, each item $j\in \Lambda_i$ satisfies $f_j^p\in\left[\frac{\zeta\cdot\widehat{F_p}}{2^i},\frac{2\zeta\cdot\widehat{F_p}}{2^i}\right]$. 
Since $\eps^2\ge\frac{1}{2^i}$, then we have $f_j^p\ge\eps^2\cdot\widehat{F_p}$. 
We also have $\widehat{F_p}\le F_p\le1.01\cdot\widehat{F_p}$ and $n^{1/2-1/p}\cdot F_p^{1/p}\ge F_2^{1/2}$. 
Thus, each item in $j\in \Lambda_i$ satisfies 
\[f_j\ge\frac{\eps^{2/p}}{1.01^{1/p}}\cdot F_p^{1/p}\ge\frac{\eps^{2/p}}{1.01^{1/p}n^{1/2-1/p}}\cdot F_2^{1/2},\]
so that $f_j^2\ge\frac{\eps^{4/p}}{1.01n^{1-2/p}}F_2$. 
Then by \thmref{thm:count:hh}, each item $j\in \Lambda_i$ corresponds to some estimate $\left(\widehat{f_j}\right)^2$ reported by $H^r_{\ell}$ output by $\counthh$ with threshold $\frac{(\eps_\ell)^{2/p}}{80\gamma}$, as $\eps_\ell\le\eps$ and $n_\ell=n$. 
Hence, $\widehat{f_j^p}$ is a $\left(1+\frac{\eps}{8\alpha\log n}\right)$-approximation to $f_j^p$. 
Furthermore, $\counthh$ rounds the estimate of the frequency of each heavy-hitter to the nearest power of $\left(1+\frac{\eps}{16p\alpha\log n}\right)$. 
Hence, $\widehat{f_j^p}$ is a $\left(1+\frac{\eps}{8\alpha\log n}\right)$-approximation to $f_j^p$. 

Now suppose $2^i\cdot\eps^2\ge\gamma$, so that $\ell:=\ell_i=\min\{k:2^k>2^i\cdot\eps_k^2\}\ge\log\gamma$ and $p_\ell:=\min(1,2^{-\ell}\gamma)\ge\frac{\gamma}{2\cdot 2^i\eps_{\ell}^2}$. 
Again by \defref{def:level:sets} of the level sets, each item $j\in \Lambda_i$ satisfies $f_j^p\in\left[\frac{\zeta\cdot\widehat{F_p}}{2^i},\frac{2\zeta\cdot\widehat{F_p}}{2^i}\right]$. 
Now if $j\notin I^r_{\ell}$, then it will not be reported by $\counthh$. 
Thus we assume that $j\in I^r_{\ell}$. 

Conditioning on the event $\mathcal{E}_2$, we have that 
\[F_p(I^r_\ell)\le\frac{10\gamma F_p}{2^\ell}\le\frac{20\gamma F_p}{2^i\eps^2}.\]
Since $\widehat{F_p}\le F_p\le1.01\cdot\widehat{F_p}$, then we have that $f_j^p\ge\frac{\eps^2}{80\gamma}\cdot F_p(I^r_\ell)$ and thus
\[f_j\ge\frac{\eps^{2/p}}{(80\gamma)^{1/p}}\cdot F_p^{1/p}(I^r_\ell).\]
Instead of applying the inequality $n^{1/2-1/p}\cdot F_p^{1/p}\ge F_2^{1/2}$, we note that conditioning on the event $\mathcal{E}_1$, we have that $|I^r_\ell|\le n_\ell=\frac{10\gamma n}{2^\ell}$. 
Hence, the frequency vector defined by the substream $I^r_\ell$ potentially has much smaller support size than $n$. 
Thus we have
\[f_j\ge\frac{\eps^{2/p}}{(80\gamma)^{1/p}}\cdot\left(\frac{1}{n_\ell}\right)^{1/2-1/p}\cdot F_2^{1/2}(I^r_\ell).\]
By \thmref{thm:count:hh}, each item $j\in \Lambda_i\cap I^r_\ell$ corresponds to some estimate $\widehat{f_j}$ reported by $H^r_{\ell}$ output by $\counthh$ with threshold $\frac{(\eps_{\ell})^{2/p}}{80\gamma}\cdot\left(\frac{1}{n_\ell}\right)^{1/2-1/p}$, since $\eps_{\ell}\le\eps$. 
Moreover, the estimate of the frequency of each heavy-hitter reported by $\counthh$ is within a factor of $\left(1+\frac{\eps}{16p\alpha\log n}\right)$ of the true frequency, since $n_i\ge\left(\frac{16\alpha p\log n}{\eps^{1-2/p}}\right)^{(2p)/(p-2)}$ implies that the threshold is at most $\frac{\eps}{16p\alpha\log n}$.  
Hence for sufficiently small $\eps$, $\left(\widehat{f_j}\right)^p$ is a $\left(1+\frac{\eps}{8\alpha\log n}\right)$-approximation to $f_j^p$. 
\end{proof}
In summary, an item $k\in \Lambda_i$ may not always be sampled by $I^r_\ell$, but $\counthh$ outputs a quantity $\widehat{f_k}$ such that $\left(\widehat{f_k}\right)^p$ is a $\left(1+\frac{\eps}{8\alpha\log n}\right)$-approximation to $f_k^p$ if $k\in I^r_\ell$. 

These approximations allow us to recover a $(1+\eps)$-approximation to the $F_p$ moment through \lemref{lem:correctness:largep}, which is our main correctness statement and which we now show. 
\lemref{lem:correctness:largep} claims that the output $\tilde{F_p}$ of our algorithm serves as a $(1+\eps)$-approximation to the $F_p$ moment of the underlying frequency vector. 
The proof of \lemref{lem:correctness:largep} first considers an idealized process and shows that we obtain ``good'' approximations to the contributions of each significant level set. 
Since the contributions of the insignificant level sets can be ignored, we thereby obtain a $(1+\eps)$-approximation to the $F_p$ moment. 
We then show that when the process is not idealized, the estimate $\tilde{F_p}$ only occurs a small error and thus still guarantees a $(1+\eps)$-approximation to the $F_p$ moment.
\begin{restatable}{lemma}{lemcorrectnesslargep}
\lemlab{lem:correctness:largep}
With probability at least $\frac{2}{3}$, we have that $|\tilde{F_p}-F_p|\le\eps\cdot F_p$.
\end{restatable}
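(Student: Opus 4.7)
My plan is to split the $\eps F_p$ error budget in half: $\frac{\eps}{2} F_p$ absorbed by the insignificant level sets (which we truncate to zero) and $\frac{\eps}{2} F_p$ spread across the at most $\alpha\log n$ significant level sets. By \defref{def:level:sets}, each insignificant $\Lambda_i$ has $C_i \le \frac{\eps}{2\alpha\log n}\cdot\zeta\widehat{F_p} \le \frac{\eps}{\alpha\log n} F_p$, so their aggregate true contribution is at most $\frac{\eps}{2} F_p$. It thus suffices to show, for each significant $\Lambda_i$, that $\widehat{C_i}$ has additive error below the level's individual budget with probability $1-\O{1/\log^2 n}$, and then union bound over $i$.

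Fix a significant level set $\Lambda_i$ and write $\ell=\ell_i$. For each repetition $r$, I would first invoke Markov's inequality on $\Ex{|I^r_\ell|}=np_\ell\le\gamma n/2^\ell$ and $\Ex{F_p(I^r_\ell)}=p_\ell F_p\le\gamma F_p/2^\ell$ to establish that the events $\mathcal{E}_1$ and $\mathcal{E}_2$ of \lemref{lem:single:sample:level} each hold with constant probability. Conditioning on this, that lemma hands us $(\widehat{f_j})^p=(1\pm\frac{\eps}{8\alpha\log n})f_j^p$ for every $j\in\Lambda_i\cap I^r_\ell$. A technical issue is that $S^r_i$ is defined by binning the \emph{estimated} frequencies, so items from adjacent level sets $\Lambda_{i\pm 1}$ whose true $f_j^p$ lies within a $(1+\frac{\eps}{8\alpha\log n})$ window of a boundary $c\cdot\zeta\widehat{F_p}/2^i$ with $c\in\{1,2\}$ could be misbinned into $S^r_i$. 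The uniform choice of $\zeta\in[1,2]$ is precisely what handles this: for any fixed $k$, the probability over $\zeta$ that $f_k^p$ falls in such a window is $\O{\eps/(\alpha\log n)}$, so a Markov argument applied \emph{once} to the global quantity ``total $F_p$-mass within boundary windows'' shows that with high constant probability this total is at most $\frac{\eps F_p}{16\alpha\log n}$ at every boundary, which is absorbed into the per-level budget.

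The above reduces the analysis (up to a small deterministic error) of $T_r:=\frac{1}{p_\ell}\sum_{(j,\widehat{f_j})\in S^r_i}(\widehat{f_j})^p$ to that of the ideal $\widetilde{T_r}:=\frac{1}{p_\ell}\sum_{j\in\Lambda_i\cap I^r_\ell}f_j^p$, which has $\Ex{\widetilde{T_r}}=C_i$ and
\begin{equation*}
\Var(\widetilde{T_r}) \;\le\; \frac{1}{p_\ell}\sum_{j\in\Lambda_i}f_j^{2p} \;\le\; \frac{2\zeta\widehat{F_p}}{2^i p_\ell}\cdot C_i.
\end{equation*}
The algorithm's choice $2^\ell>2^i\eps_\ell^2$ gives $p_\ell=\Omega(\gamma/(2^i\eps_\ell^2))$, so $\Var(\widetilde{T_r})=\O{\eps_\ell^2 F_p\cdot C_i/\gamma}$ and Chebyshev produces $|\widetilde{T_r}-C_i|=\O{\eps_\ell\sqrt{F_p C_i/\gamma}}$ per trial with constant probability. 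Taking the median over $r=\O{\log\log n}$ independent copies drives the per-level failure probability below $1/\log^2 n$ via a standard Chernoff bound; meanwhile, the geometric decay $\eps_\ell=\eps/\bigl(16\eta\log(1/\eps^2)\cdot 2^{\ell(p/16-1/8)}\bigr)$ together with the normalization $\eta>\sum_{j\ge 0}2^{-j(p/16-1/8)}$ bounds the sum $\sum_i \eps_{\ell_i}\sqrt{F_p C_i/\gamma}$ by $\frac{\eps}{2}F_p$, matching the allotted significant-level budget.

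Finally, a union bound over the at most $\alpha\log n$ significant level sets and the $\O{1}$ Markov events per level combines these per-level guarantees into a single event of probability at least $\frac{2}{3}$ on which $|\tilde{F_p}-F_p|\le\frac{\eps}{2}F_p+\frac{\eps}{2}F_p=\eps F_p$. The main obstacle I anticipate is the variance/Chebyshev argument: one must verify that the tuning of $\eps_\ell$ and $p_\ell$ really does give the above summable per-level error bound for \emph{every} significant $\ell$, particularly when $\phi_i$ can be as small as $\eps/(\alpha\log n)$ and the $p$-dependent exponent $p/16-1/8$ in $\eps_\ell$ must be balanced against the increase of $1/p_\ell$ at deeper levels. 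A secondary subtlety is ensuring that $\widehat{C_i}$ on an insignificant $\Lambda_i$ does not accidentally become large (which would spoil the truncation step); this follows from the same variance computation combined with the insignificant $\Lambda_i$ having small expected subsampled $F_p$-mass, so that both $\Ex{T_r}$ and $\Var(T_r)$ are already below the level's share of the budget.
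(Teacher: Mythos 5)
Your proposal follows essentially the same route as the paper's proof: an idealized per-level analysis via the unbiased subsampled estimator, the variance bound from $p_\ell\gtrsim\gamma/(2^i\eps_\ell^2)$ plus Chebyshev, a median over the $\O{\log\log n}$ repetitions, Markov's inequality for the events $\mathcal{E}_1,\mathcal{E}_2$, the geometric summation of the $\eps_{\ell_i}$ controlled by $\eta$, and the randomized boundaries $\zeta$ to charge misclassification error in expectation followed by Markov. The minor differences (your explicit significant/insignificant budget split and the slightly sharper $\eps_\ell\sqrt{F_pC_i}$ form of the Chebyshev deviation, versus the paper's uniform $\eps_{\ell_i}F_p$ per-level bound summed over all levels) do not change the argument.
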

\begin{proof}
Let $\Lambda_i$ be a fixed level set and let $\ell:=\ell_i:=\min\{k:2^k>2^i\cdot\eps_k^2\}$. 
Let $k\in \Lambda_i\cap S^r_\ell$, so that $f_k^p\in\left[\frac{\zeta\cdot\widehat{F_p}}{2^i},\frac{2\zeta\cdot\widehat{F_p}}{2^i}\right]$ and $k$ is subsampled at the level in which its estimated frequency $\left(\widehat{f_k}\right)^p$ is used to estimate the contribution $C_i$ of level set $\Lambda_i$. 
Then \lemref{lem:single:sample:level} shows that we obtain an estimate $\left(\widehat{f_k}\right)^p$ such that
\[\left(1-\frac{\eps}{8\alpha\log n}\right)\cdot f_k^p\le\left(\widehat{f_k}\right)^p\le\left(1+\frac{\eps}{8\alpha\log n}\right)\cdot f_k^p,\]
with constant probability. 
In an idealized process, we would have 
\[\frac{\zeta\cdot\widehat{F_p}}{2^i}\le\left(\widehat{f_k}\right)^p\le\frac{2\zeta\cdot\widehat{F_p}}{2^i},\]
so that the estimate $\left(\widehat{f_k}\right)^p$ is used toward the estimation $\widehat{C_i}$ of contribution $C_i$ of level set $\Lambda_i$. 
However, this may not always be the case because the value of $f_k^p$ may be near the boundary of the interval $\left[\frac{\zeta\cdot\widehat{F_p}}{2^i},\frac{2\zeta\cdot\widehat{F_p}}{2^i}\right]$ and the value of the estimate $\left(\widehat{f_k}\right)^p$ may lie outside of the interval, so that the estimate $\left(\widehat{f_k}\right)^p$ is used toward the estimation of some other level set $\Lambda_{i'}$. 
We first analyze an idealized process, so that $\left(\widehat{f_k}\right)^p$ is correctly classified for all $k$ across all level sets, and show that the output is a $(1+\O{\eps})$-approximation to $F_p$. 
We then argue that because we randomize the boundaries of each interval due to the selection of $\zeta$, the overall guarantee is only slightly worsened but remains a $(1+\eps)$-approximation to $F_p$. 

\textbf{Idealized process.} 
We first show that for an idealized process where $\left(\widehat{f_k}\right)^p$ is correctly classified for all $k$ across all level sets, then for a fixed level set $i$, we have $|\widehat{C_i}-C_i|\le\frac{\eps_{\ell}}{2}\cdot F_p$ with probability at least $1-\frac{1}{\polylog(n)}$. 
Let $\mathcal{E}_1$ be the event that $|I^r_\ell|\le\frac{10\gamma n}{2^\ell}$ and let $\mathcal{E}_2$ be the event that $F_p(I^r_\ell)\le\frac{10\gamma F_p}{2^\ell}$.
\lemref{lem:single:sample:level} shows that conditioned on $\mathcal{E}_1$ and $\mathcal{E}_2$, then $\counthh$ outputs a $\left(1+\frac{\eps}{8\alpha\log n}\right)$-approximation to $f_k^p$ if $k\in I^r_\ell$. 
We thus analyze the approximation $\widehat{C^r_i}$ to $C_i$ for a given set of subsamples, where we define
\[\widehat{C^r_i}:=\frac{1}{p_{\ell_i}}\cdot\sum_{(k,\widehat{f_k})\in S^r_i}\left(\widehat{f_k}\right)^p,\]
so that $\widehat{C_i}=\median_r\widehat{C^r_i}$. 
Conditioned on $\mathcal{E}_1$ and $\mathcal{E}_2$, we note that $\widehat{C^r_i}$ is a $\left(1+\frac{\eps}{8\alpha\log n}\right)$-approximation to
\[D^r_i:=\frac{1}{p_{\ell_i}}\cdot\sum_{k\in S^r_i\cap \Lambda_i}f_k^p.\]
We thus analyze the expectation and variance of $D^r_i$. 

We first analyze the expectation of $D^r_i$. 
Note that 
\[\Ex{D^r_i}=\frac{1}{p_{\ell}}\cdot\sum_{k\in \Lambda_i} p_{\ell}\cdot f_k^p=C_i.\]
We next analyze the variance of $D^r_i$, which results from whether items $k\in \Lambda_i$ are sampled by $I^r_\ell$. 
Since $p_\ell\ge\frac{\gamma}{2\cdot 2^i\eps_{\ell}^2}$, we have
\[\Var(D^r_i)=\frac{1}{p^2_{\ell}}\cdot\sum_{k\in \Lambda_i} p_{\ell}f_k^{2p}\le\sum_{k\in \Lambda_i}\frac{2\cdot 2^i\eps_{\ell}^2}{\gamma}\cdot f_k^{2p}.\]
Observe that for each $k\in \Lambda_i$, we have $f_k^{2p}\le\frac{16(F_p)^2}{2^{2i}}$ and $\frac{|\Lambda_i|}{2^i}\le\phi_i\le 1$. 
Thus for $\gamma=2^{11}$, 
\[\Var(D^r_i)\le|\Lambda_i|\cdot\frac{2\cdot 2^i\eps_{\ell}^2}{\gamma}\cdot\frac{16(F_p)^2}{2^{2i}}\le\phi_i\eps_{\ell}^2(F_p)^2\le\frac{\eps_{\ell}^2}{64}(F_p)^2.\]
Hence, by Chebyshev's inequality, we have that
\[\PPr{|D^r_i-C_i|\ge\frac{\eps_{\ell}}{2}\cdot F_p}\le\frac{1}{16}.\]
Since $C_i\le F_p$ and $\widehat{C^r_i}$ is a $\left(1+\frac{\eps}{8\alpha\log n}\right)$-approximation to $D^r_i$, then $\widehat{C^r_i}$ gives an approximation to $C_i$ with additive error at most $\left(\frac{\eps}{4\alpha\log n}+\eps_{\ell}\right)\cdot F_p$ with probability at least $\frac{15}{16}$, conditioned on the events $\mathcal{E}_1$ and $\mathcal{E}_2$, and the correctness of the subroutine $\counthh$. 
By \thmref{thm:count:hh}, the correctness of $\counthh$ occurs with probability at least $\frac{38}{39}$. 
By a union bound, we have that conditioned on $\mathcal{E}_1$ and $\mathcal{E}_2$, then 
\[\PPr{|\widehat{C^r_i}-C_i|\le\left(\frac{\eps}{4\alpha\log n}+\eps_{\ell}\right)\cdot F_p}\ge\frac{7}{8}.\]

For a fixed $r$, let $\mathcal{E}_1$ be the event that $|I^r_\ell|\le\frac{10\gamma n}{2^\ell}$ and let $\mathcal{E}_2$ be the event that $F_p(I^r_\ell)\le\frac{10\gamma F_p}{2^\ell}$. 
Recall that $I^r_\ell$ is a nested subset of $[n]$ subsampled at rate $p_\ell:=\min(1,2^{-\ell}\gamma)$. 
Thus we have $\Ex{|I^r_\ell|}\le\frac{\gamma n}{2^\ell}$, so that by Markov's inequality, we have that $\PPr{\mathcal{E}_1}\ge\frac{9}{10}$. 
Similarly, we have $\Ex{F_p(I^r_\ell)}\le\frac{\gamma F_p}{2^\ell}$, so that by Markov's inequality, we have that $\PPr{\mathcal{E}_2}\ge\frac{9}{10}$. 
Hence by a union bound, $\PPr{\mathcal{E}_1\wedge\mathcal{E}_2}\ge\frac{8}{10}$. 

By \lemref{lem:single:sample:level}, conditioned on the events $\mathcal{E}_1$ and $\mathcal{E}_2$, we have that $|\widehat{C^r_i}-C_i|\le\left(\frac{\eps}{4\alpha\log n}+\eps_{\ell}\right)\cdot F_p$ for a fixed $r$, with probability at least $\frac{7}{8}$. 
Thus by a union bound, we have that $|\widehat{C^r_i}-C_i|\le\left(\frac{\eps}{4\alpha\log n}+\eps_{\ell}\right)\cdot F_p$ for a fixed $r$, with probability at least $\frac{5}{8}$. 
Since $\widehat{C_i}=\median_r\widehat{C^r_i}$ across $r\le\O{\log\log n}$ iterations, then we have that $|\widehat{C_i}-C_i|\le\left(\frac{\eps}{4\alpha\log n}+\eps_{\ell}\right)\cdot F_p$ with probability at least $1-\frac{1}{\polylog(n)}$. 

By a union bound over the $\alpha\log n$ level sets, then with probability at least $1-\frac{1}{\polylog(n)}$, we have that $|\widehat{C_i}-C_i|\le\left(\frac{\eps}{4\alpha\log n}+\eps_{\ell_i}\right)\cdot F_p$ simultaneously for all $i\in[\alpha\log n]$, where $\ell_i:=\min\{k:2^k>2^i\cdot\eps_k^2\}$. 
We form our estimate $\tilde{F_p}$ to $F_p$ by setting $\tilde{F_p}:=\sum_i\widehat{C_i}$ and we have $F_p=\sum_i C_i$. 
Since $i\in[\alpha\log n]$, $\ell_i:=\min\{k:2^k>2^i\cdot\eps_k^2\}$, and $\eps_{\ell_i}=\frac{\eps}{16\eta\cdot 2^{\ell_i(p/16-1/8)}\log\frac{1}{\eps^2}}$, then
\begin{align*}
|\tilde{F_p}-F_p|&=\left|\sum_i\widehat{C_i}-\sum_i C_i\right|\le\sum_i|\widehat{C_i}-C_i|\le\sum_i\left(\frac{\eps}{4\alpha\log n}+\eps_{\ell_i}\right)\cdot F_p\\
&\le\sum_{i\in[\alpha\log n]}\frac{\eps}{4\alpha\log n}\cdot F_p+\sum_{i:\ell_i\le\log\frac{1}{\eps^2}}\eps_{\ell_i}\cdot F_p+\sum_{i:\ell_i>\log\frac{1}{\eps^2}}\eps_{\ell_i}\cdot F_p\\
&\le\frac{\eps}{4}\cdot F_p+\log\frac{1}{\eps^2}\cdot\frac{\eps}{16\log\frac{1}{\eps^2}}\cdot F_p+\frac{\eps}{16}\cdot F_p,
\end{align*}
where the last bound on the last term follows from $\eta>\sum_{j\ge0}2^{-j(p/16-1/8)}$. 
Thus we have that $|\tilde{F_p}-F_p|\le\frac{\eps}{2}\cdot F_p$ with probability at least $1-\frac{1}{\polylog(n)}$ in an idealized process. 

\textbf{Effects of randomized boundaries.} 
We say that for a fixed $r$, an item $k\in[n]$ is \emph{misclassified} if there exists a level set $\Lambda_i$ such that
\[\frac{\zeta\cdot F_p}{2^i}\le f_k^p\le\frac{2\zeta\cdot F_p}{2^i},\]
but for an estimate $\left(\widehat{f_k}\right)^p$ output by $\counthh$ on the set $S^r_{\ell_i}$, we have
\[\frac{\zeta\cdot F_p}{2^i}\le\left(\widehat{f_k}\right)^p\le\frac{2\zeta\cdot F_p}{2^i}.\]
Recall that by \lemref{lem:single:sample:level}, we have for any fixed value of $\zeta$ that
\[\left(1-\frac{\eps}{8\alpha\log n}\right)\cdot f_k^p\le\left(\widehat{f_k}\right)^p\le\left(1+\frac{\eps}{8\alpha\log n}\right)\cdot f_k^p.\]
Since $\zeta\in[1,2]$, then the probability that item $k\in[n]$ is misclassified is at most $\frac{\eps}{2\alpha\log n}$. 
Moreover, in the event that item $k\in \Lambda_i$ is misclassified, it can only be misclassified into either level set $\Lambda_{i+1}$ or level set $\Lambda_{i-1}$, since $\left(\widehat{f_k}\right)^p$ is a $\left(1\pm\frac{\eps}{8\alpha\log n}\right)$ multiplicative approximation to $f_k^p$. 

Thus in the event that item $k\in[n]$ is misclassified, then $\left(\widehat{f_k}\right)^p$ will be rescaled by an incorrect probability, but only by at most a factor of two. 
Hence the error in the computation of the contribution of $f_k^p$ to some level set $\Lambda_i$ is at most $2f_k^p$. 
Then in expectation across all $k\in[n]$, the error due to the misclassification is at most $2F_p\cdot\frac{\eps}{2\alpha\log n}=\frac{\eps}{\alpha\log n}\cdot F_p$. 
Hence by Markov's inequality for sufficiently large $n$, the misclassification error is at most an additive $\frac{\eps}{2}\cdot F_p$ with probability at least $\frac{3}{4}$. 
Therefore in total, we have that $|\tilde{F_p}-F_p|\le\eps\cdot F_p$ with probability at least $\frac{2}{3}$. 
\end{proof}
It remains to analyze the space complexity of the algorithm as well as remove some additional unnecessary assumptions. 
\thmmainlarge*
\begin{proof}
We first observe that we only require a $1.01$-approximation $\widehat{F_p}$ to $F_p$ as the input of \algref{alg:estimator} to create the level sets and analyze accordingly. 
However, we can instead define the level sets using any upper bound $X$ on $\widehat{F_p}$ by setting level set $\Lambda_i$ to be the indices $k\in[n]$ for which $f_k^p\in\left[\frac{\zeta X}{2^i},\frac{2\zeta X}{2^i}\right]$, rather than $\left[\frac{\zeta\widehat{F_p}}{2^i},\frac{2\zeta\widehat{F_p}}{2^i}\right]$, and the same analysis will follow (see \remref{rem:X:upper}). 
Intuitively, the fluidity of the definition of the level sets can be seen from the fact that we randomize the boundaries by going through a multiplicative $\zeta$ chosen randomly from $[1,2]$ anyway, and additional empty level sets will not change the approximation guarantee. 
Thus by \lemref{lem:correctness:largep}, there exists an algorithm that outputs a $(1+\eps)$-approximation to the $F_p$ moment. 

It remains to analyze the space complexity. 
By \thmref{thm:count:hh}, $\counthh$ with threshold $\eps$ requires $\O{\frac{1}{\eps^2}\left(\log^2\frac{1}{\eps}+\log^2\log m+\log n\right)+\log m}$ bits of space. 
\algref{alg:estimator} runs a separate instance of $\counthh$ with threshold $\frac{(\eps_i)^{2/p}}{80\gamma}\cdot\left(\frac{1}{n_i}\right)^{1/2-1/p}$ to output a set $H^r_i$ for $r\le\O{\log\log n}$, where $\gamma$ is a sufficiently large constant. 
Thus the total space for the $\counthh$ instances across all $i$ for a fixed $r$ is at most
\[C_1\log n+\left(C_1^2\log n\right)\cdot\sum_{i\in[\alpha\log n]}\frac{C_1(n_i)^{1-2/p}}{(\eps_i)^{4/p}},\]
for some positive constants $C_1,\alpha>0$. 
In particular, we have $\eps_i=\frac{\eps}{16\eta\cdot 2^{i(p/16-1/8)}\log\frac{1}{\eps^2}}$ and $n_i=\min\left(\left(\frac{16\alpha p\log n}{\eps^{1-2/p}}\right)^{(2p)/(p-2)},\frac{10\gamma n}{2^i}\right)$ for a sufficiently large constant $\eta$. 
Then the total space for the $\counthh$ instances across all $i$ for a fixed $r$ is at most
\[C_1\log n+\left(C_1\log^2 n\right)\cdot\sum_{i=1}^{\alpha\log n}\left(\frac{C_2n^{1-2/p}}{\eps^{4/p}}\cdot\frac{2^{i(1/4-1/(2p))}}{2^{i(1-2/p)}}\log^2\frac{1}{\eps}+\frac{C_2\log^2 n}{\eps^2}\right),\]
for some positive constants $C_1,C_2>0$. 
Observe that $\sum_{i=1}^\infty\frac{2^{i(1/4-1/(2p))}}{2^{i(1-2/p)}}=\sum_{i=1}^\infty\frac{1}{2^{3i(1-2/p)/4}}$ is a geometric series that is upper bounded by an absolute constant. 
Hence, the total space across all $i$ for a fixed $r$ is at most $\O{\frac{1}{\eps^{4/p}}\cdot n^{1-2/p}\log^2 n\log^2\frac{1}{\eps}+\frac{1}{\eps^2}\log^5 n}$ and the total space across all $r\le\O{\log\log n}$ is 
\[\O{\frac{1}{\eps^{4/p}}\cdot n^{1-2/p}\log^2 n\log^2\frac{1}{\eps}\log\log n+\frac{1}{\eps^2}\log^5 n\log\log n},\] 
which is $\tO{\frac{1}{\eps^{4/p}}\cdot n^{1-2/p}}$ space in total, since $n\ge\frac{1}{\eps^2}$ implies $\frac{1}{\eps^{4/p}}\cdot n^{1-2/p}\ge\frac{1}{\eps^2}$. 
\end{proof}

\section{$F_p$ Estimation for $p>2$ in Two-Pass Streams}
\seclab{sec:twopass}
In this section, we consider two-pass algorithms for $F_p$ estimation, with $p>2$. 
For turnstile streams, we require the guarantees of the well-known $\countsketch$ algorithm for finding heavy-hitters. 
\begin{theorem}
\cite{CharikarCF04}
\thmlab{thm:countsketch}
There exists an algorithm $\countsketch$ that reports all items $i\in[n]$ such that $f_i\ge\eps L_2$ and no items $j\in[n]$ such that $f_j\le\frac{\eps}{2}\cdot L_2$ in the turnstile streaming algorithm.  
The algorithm uses $\O{\frac{1}{\eps^2}\log^2 n}$ bits of space and succeeds with probability $1-\frac{1}{\poly(n)}$. 
\end{theorem}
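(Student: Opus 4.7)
The plan is to implement the standard CountSketch data structure and then analyze per-bucket error by a second-moment argument, boosting confidence by taking medians across independent hash tables. Concretely, I would maintain $R = \Theta(\log n)$ independent rows, each consisting of $B = \Theta(1/\eps^2)$ counters, together with two pairwise independent hash families per row: a bucket-assignment $h_r : [n] \to [B]$ and a sign $\sigma_r : [n] \to \{-1, +1\}$. On every turnstile update $(i, \Delta)$ the sketch adds $\sigma_r(i) \cdot \Delta$ into counter $C_r[h_r(i)]$ for each row $r$. The estimator for coordinate $i$ is $\widehat{f_i} := \median_{r \in [R]} \sigma_r(i) \cdot C_r[h_r(i)]$, and the heavy-hitter report consists of all $i$ whose estimate exceeds $\tfrac{3\eps}{4} L_2$ (after a final scan, or by coupling with a tournament/dyadic tree if the decoder is not allowed to loop over $[n]$; here it suffices that the algorithm "reports" in the usual sense).

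Next I would analyze a single row. Writing $\widehat{f_i}^{(r)} = \sigma_r(i) C_r[h_r(i)] = f_i + \sum_{j \neq i : h_r(j) = h_r(i)} \sigma_r(i)\sigma_r(j) f_j$, pairwise independence of $\sigma_r$ makes the cross terms mean zero, giving $\Ex{\widehat{f_i}^{(r)}} = f_i$; pairwise independence of $h_r$ gives $\Var(\widehat{f_i}^{(r)}) \le \|f_{-i}\|_2^2 / B \le L_2^2 / B$. Choosing $B = C/\eps^2$ for a sufficiently large constant $C$ and applying Chebyshev yields $\PPr{|\widehat{f_i}^{(r)} - f_i| > \tfrac{\eps}{4} L_2} \le 1/8$.

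With constant per-row failure probability in hand, the median trick takes over: a standard Chernoff argument over the $R = \Theta(\log n)$ independent rows shows that the median estimate satisfies $|\widehat{f_i} - f_i| \le \tfrac{\eps}{4} L_2$ with probability $1 - 1/\poly(n)$. A union bound over all $n$ coordinates then gives the uniform guarantee simultaneously for all $i \in [n]$. Consequently every $i$ with $f_i \ge \eps L_2$ has $\widehat{f_i} \ge \tfrac{3\eps}{4} L_2$ and is reported, while every $j$ with $f_j \le \tfrac{\eps}{2} L_2$ has $\widehat{f_j} \le \tfrac{3\eps}{4} L_2$ and is not. Finally, the space is $R \cdot B$ counters of $\O{\log n}$ bits each, namely $\O{\tfrac{1}{\eps^2} \log^2 n}$ bits, plus $\O{\log n}$ bits to store seeds for the pairwise independent hash families.

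The main obstacle in a self-contained write-up is not the probabilistic analysis, which is routine once the variance bound is in hand, but rather making the "reports all heavy hitters" clause algorithmically meaningful without scanning $[n]$: this is typically handled by layering CountSketch on top of a dyadic decomposition of the universe so that heavy hitters can be decoded in $\poly(\eps^{-1} \log n)$ time, at the cost of an extra logarithmic factor that is already absorbed into the stated $\O{\eps^{-2}\log^2 n}$ bound. Since the theorem is invoked here only as a black box for identifying heavy hitters, this implementation detail does not affect the downstream application.
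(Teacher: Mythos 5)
The paper does not actually prove this statement---it is imported as a black box from \cite{CharikarCF04}---and your write-up is the standard, correct analysis of that data structure: pairwise-independent buckets and signs give an unbiased per-row estimator with variance at most $L_2^2/B$, Chebyshev plus the median over $\Theta(\log n)$ rows plus a union bound over $[n]$ give a uniform $\frac{\eps}{4}L_2$ additive error with probability $1-\frac{1}{\poly(n)}$, and thresholding then yields the stated separation in $\O{\frac{1}{\eps^2}\log^2 n}$ bits. Two cosmetic nits: the boundary case $f_j=\frac{\eps}{2}L_2$ lands exactly on your $\frac{3\eps}{4}L_2$ threshold, so you should either use a strict inequality or a slightly smaller error parameter, and the dyadic-tree decoder genuinely costs an extra $\log n$ factor rather than being ``absorbed'' into $\O{\eps^{-2}\log^2 n}$---but since reporting via a final scan is permitted and the theorem is only invoked as a black box downstream, neither affects the claim.
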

For insertion-only streams, we use the more space-efficient $\bptree$ algorithm for finding heavy-hitters.
\begin{theorem}
\cite{BravermanCINWW17}
\thmlab{thm:bptree}
There exists an algorithm $\bptree$ that reports all items $i\in[n]$ such that $f_i\ge\eps L_2$ and no items $j\in[n]$ such that $f_j\le\frac{\eps}{2}\cdot L_2$ in the turnstile streaming algorithm.  
The algorithm uses $\O{\frac{1}{\eps^2}\log n}$ bits of space and succeeds with probability $0.99$. 
\end{theorem}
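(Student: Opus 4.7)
The plan is to build a linear sketch over $[n]$ with a hierarchical structure that supports arbitrary signed updates (insertions and deletions), and to read off the $L_2$ heavy-hitters by a tree-descent at query time. First, I would impose a balanced binary tree on $[n]$ with $\log n$ levels. For each level $\ell$, I would maintain $\O{1/\eps^2}$ hashed counters, where each counter groups items by the hash of their length-$\ell$ prefix and aggregates a signed sum of frequencies with independent Rademacher signs. Since each counter is a fixed linear combination of the underlying frequency vector $f$, every update $(i,\Delta)$ -- including the case $\Delta<0$ -- is handled by adding $\Delta$ times the appropriate sign to the relevant buckets, giving a genuine turnstile sketch.

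The query procedure descends the tree starting at the root. At each internal node, the level-$\ell$ counters provide a $\countsketch$-style estimator of each child's prefix contribution to $L_2$, and we follow every child whose estimator exceeds the $\eps L_2$ threshold. Because any true heavy-hitter $i$ with $f_i\ge\eps L_2$ has a heavy prefix at every level $\ell$, the correct root-to-leaf paths survive this descent, while the promise that items $j$ with $f_j\le\frac{\eps}{2}L_2$ are not reported follows from the standard $L_2$-estimation guarantee of the per-level sketch. The number of nodes visited along any surviving path is $\O{\log n}$, and at most $\O{1/\eps^2}$ paths survive.

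The probabilistic analysis uses a chaining-type bound (a la Talagrand/Dudley) on the Rademacher process indexed by the tree nodes. This chaining argument provides uniform control over the $\O{n}$ candidate nodes that might be queried during the descent, so $\O{1/\eps^2}$ counters per level suffice to keep the per-node failure probability small enough for a union bound over descent paths. The total bit complexity becomes $\O{\frac{1}{\eps^2}\log n}$: each of the $\log n$ levels contributes $\O{1/\eps^2}$ counters, each holding $\O{1}$ bits of effective information after the chaining-based quantization.

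The main obstacle is precisely achieving the $\O{\frac{1}{\eps^2}\log n}$ bit bound rather than the naive $\O{\frac{1}{\eps^2}\log^2 n}$ bound from \thmref{thm:countsketch} that results from storing each of the $\O{1/\eps^2}$ counters per level in $\Theta(\log n)$ bits. Resolving this requires the chaining inequality to argue that only coarsely-quantized counter values at each level are needed to distinguish heavy from light prefixes, which couples the levels' sketches and is the technically delicate core of the argument; the linearity of the sketch keeps this analysis valid under arbitrary signed stream updates.
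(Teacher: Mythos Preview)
The paper does not prove this theorem at all; it is quoted as a black box from \cite{BravermanCINWW17}. So there is no ``paper's own proof'' to compare against. More importantly, the statement as written contains a typo: $\bptree$ is an \emph{insertion-only} algorithm, not a turnstile one (the surrounding text in the paper makes this explicit: ``For insertion-only streams, we use the more space-efficient $\bptree$ algorithm''). You have taken the word ``turnstile'' at face value and designed a linear sketch that supports arbitrary signed updates. That is already a genuine gap: no linear turnstile sketch is known to achieve $\O{\frac{1}{\eps^2}\log n}$ bits for $L_2$ heavy hitters, and the $\countsketch$ bound of $\O{\frac{1}{\eps^2}\log^2 n}$ bits in \thmref{thm:countsketch} is essentially what one gets in that model. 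Your plan to shave the extra $\log n$ by ``coarsely-quantized counter values'' via chaining over tree nodes does not go through for a linear sketch, because any fixed linear form must be stored to $\Theta(\log n)$ bits of precision to survive adversarial cancellations in a turnstile stream.

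The actual $\bptree$ construction in \cite{BravermanCINWW17} is quite different from what you describe. It is not a static hierarchical $\countsketch$ maintained at all levels simultaneously; rather, it processes the insertion-only stream in rounds, and in each round uses $\O{1/\eps^2}$ Rademacher-signed counters to decide which child of the current tree node to descend into, exploiting that frequencies are monotone in time. The chaining argument there is over the \emph{stream prefixes} (a Bernoulli process indexed by time), not over the spatial tree nodes, and it is precisely the insertion-only assumption that lets one reuse the same $\O{1/\eps^2}$ counters across the $\log n$ rounds instead of storing $\log n$ independent level sketches. Your proposal misses this sequential, insertion-only mechanism, which is the source of the $\log n$ saving.
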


Recall that the main idea of our one-pass algorithm in the random-order insertion-only streaming model was to subsample at different rates and find $(1+\eps)$-approximations to the  frequencies of the heavy-hitters in each subsampling rate, which are then used to form estimates of the contributions of each level set and ultimately estimate of the frequency moment. 
The random-order setting crucially allowed us to obtain $(1+\eps)$-approximations to the frequencies of the heavy-hitters. 
By contrast, in an adversarial-order setting, by the time a possible heavy-hitter is detected, a significant fraction of its frequency may have already appeared in the stream if we use a heavy-hitter algorithm with space dependency $\frac{1}{\eps^2}$. 
Fortunately, in a two-pass setting, we can use the first pass to identify possible heavy-hitters and the second pass to find their frequencies. 
We give the full details in \algref{alg:multipass:estimator}, where the subroutine $\heavyhitters$ denotes the algorithm $\countsketch$ of \thmref{thm:countsketch} for turnstile streams and $\bptree$ of \thmref{thm:bptree} for insertion-only streams. 

\begin{algorithm}[!htb]
\caption{$F_p$ Estimation on Two-Pass Streams, $p>2$}
\alglab{alg:multipass:estimator}
\begin{algorithmic}[1]
\Require{Accuracy parameter $\eps\in(0,1)$, $\widehat{F_p}\le F_p\le1.01\cdot\widehat{F_p}$}
\Ensure{$(1+\eps)$-approximation to $F_p$}
\State{Let $\eta>\sum_{j\ge0}2^{-j(p/16-1/8)}$ be a sufficiently large constant.}
\State{$\gamma\gets 2^{11}$, $n_i\gets\frac{10\gamma n}{2^i}$, $\eps_i=\frac{\eps}{16\eta\cdot 2^{i(p/16-1/8)}\log\frac{1}{\eps^2}}$}
\State{Let $I^r_i$ be a (nested) subset of $[n]$ subsampled at rate $p_i:=\min(1,2^{-i}\gamma)$.}
\For{\textbf{first pass} $i\in[\alpha\log n]$, $r\le\O{\log\log n}$:}
\State{Let $H^r_i$ be the output of $\heavyhitters$ with threshold parameter $\frac{(\eps_i)^{2/p}}{80\gamma}\cdot\left(\frac{1}{n_i}\right)^{1/2-1/p}$ on the substream induced by $I^r_i$.}
\EndFor
\For{\textbf{second pass}}
\State{Track the frequency $f_k$ for any coordinate $k\in\cup H^r_i$.}
\State{Let $S^r_i$ be the items $k\in[n]$ with $f_k^p\in\left[\frac{\widehat{F_p}}{2^i},\frac{2\widehat{F_p}}{2^i}\right]$.}
\State{$\ell_i\gets\min\{k:2^k>2^i\cdot\eps_k^2\}$}
\State{$D^r_i\gets\frac{1}{p_{\ell_i}}\cdot\left(\sum_{k\in S^r_i}f_k^p\right)$}
\State{$\widehat{C_i}\gets\median_r D^r_i$}
\EndFor
\State{\Return $\tilde{F_p}:=\sum_i\widehat{C_i}$}
\end{algorithmic}
\end{algorithm}

Using $\countsketch$ as the subroutine for $\heavyhitters$ on two-pass turnstile streams and $\bptree$ as the subroutine for $\heavyhitters$ on two-pass insertion-only streams, we obtain the following guarantees for \algref{alg:multipass:estimator}. 
We first show that there exists a subsampling rate such that items in each level set will be reported as a heavy-hitter. 
The proof is similar to the proof of \lemref{lem:single:sample:level}, but we no longer require each reported item to also be reported with a $(1+\eps)$-approximation to their frequency. 
In fact, this cannot be done in a single pass; we will instead track their frequencies in the second pass. 
\begin{lemma}
\lemlab{lem:multipass:item}
Let $\eps\in(0,1)$, $\Lambda_i$ be a fixed level set, and let $\ell:=\ell_i:=\min\{k:2^k>2^i\cdot\eps_k^2\}$. 
For a fixed $r$, let $\mathcal{E}_1$ be the event that $|I^r_\ell|\le\frac{10\gamma n}{2^\ell}$ and let $\mathcal{E}_2$ be the event that $F_p(I^r_\ell)\le\frac{10\gamma F_p}{2^\ell}$. 
Conditioned on $\mathcal{E}_1$ and $\mathcal{E}_2$, then $\heavyhitters$ reports $j\in S^r_i$ for each $j\in \Lambda_i\cap I^r_\ell$ with probability at least $1-\frac{1}{\poly(n)}$. 
\end{lemma}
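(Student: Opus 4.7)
The plan is to mirror the proof of \lemref{lem:single:sample:level} essentially verbatim, but skipping the last portion dealing with the multiplicative accuracy of the reported frequency estimates (since here we only need identification; the exact frequencies will be recovered in the second pass). As before, I will split into two cases according to whether $2^i \cdot \eps^2 < \gamma$ or $2^i \cdot \eps^2 \ge \gamma$, which determines whether the subsampling rate $p_\ell$ equals one or is strictly smaller.

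In the first case, $\ell \le \log \gamma$, so $p_\ell = 1$ and $I^r_\ell = [n]$. By \defref{def:level:sets}, each $j \in \Lambda_i$ satisfies $f_j^p \ge \zeta \widehat{F_p}/2^i$, and since $2^i \eps^2 < \gamma$ is a constant, the bound $n^{1/2-1/p} F_p^{1/p} \ge F_2^{1/2}$ together with $\widehat{F_p} \le F_p \le 1.01 \widehat{F_p}$ yields $f_j^2 \ge \Omega\!\left(\eps^{4/p}/n^{1-2/p}\right) \cdot F_2$. This is above the threshold $\frac{(\eps_\ell)^{2/p}}{80\gamma}\cdot n^{-(1/2-1/p)}$ fed to $\heavyhitters$ (with $n_\ell = n$), so by \thmref{thm:countsketch} (turnstile) or \thmref{thm:bptree} (insertion-only), $j$ is reported with the claimed probability. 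In the second case, $\ell \ge \log \gamma$ and $p_\ell \ge \gamma/(2 \cdot 2^i \eps_\ell^2)$. Conditioning on $\mathcal{E}_2$ we get $F_p(I^r_\ell) \le 20\gamma F_p/(2^i \eps^2)$, from which $f_j^p \ge \frac{\eps^2}{80\gamma} F_p(I^r_\ell)$. Conditioning on $\mathcal{E}_1$ the effective support satisfies $|I^r_\ell| \le n_\ell$, so applying the power mean inequality on this smaller support gives
\[
f_j \;\ge\; \frac{\eps^{2/p}}{(80\gamma)^{1/p}}\cdot\left(\frac{1}{n_\ell}\right)^{1/2-1/p}\cdot F_2^{1/2}(I^r_\ell),
\]
which is exactly the threshold passed to $\heavyhitters$ on the substream induced by $I^r_\ell$. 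Hence $j$ is reported, and since $f_j^p$ already lies in $\left[\widehat{F_p}/2^i,\, 2\widehat{F_p}/2^i\right]$ (because the second pass measures $f_j$ exactly), we have $j \in S^r_i$.

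For the success probability, $\countsketch$ already gives $1 - 1/\poly(n)$ directly from \thmref{thm:countsketch}. For $\bptree$, which has only constant success probability, I will run $\O{\log n}$ independent parallel instances and take the union of reported items, boosting the probability to $1 - 1/\poly(n)$ at the cost of a logarithmic factor absorbed in the $\tO{\cdot}$ space bound. A union bound over the (polynomially many) level sets and subsampling levels will later guarantee that all required items are found across all $(i,r)$.

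The main obstacle is the bookkeeping around the thresholds: making sure that the $\heavyhitters$ threshold $(\eps_i)^{2/p}/(80\gamma) \cdot n_i^{-(1/2-1/p)}$ is correctly matched to the lower bound on $f_j/F_2^{1/2}(I^r_\ell)$ induced by conditioning on $\mathcal{E}_1,\mathcal{E}_2$, and verifying the two cases stitch together properly at $2^i \eps^2 \approx \gamma$. These are the same manipulations used in \lemref{lem:single:sample:level}, so no new ideas are needed; the proof reduces to invoking that analysis up to the point where frequency accuracy was argued and then dropping that step.
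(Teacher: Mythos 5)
Your proof is correct and follows essentially the same two-case argument as the paper (the $p_\ell=1$ case via $n^{1/2-1/p}F_p^{1/p}\ge F_2^{1/2}$, and the subsampled case via conditioning on $\mathcal{E}_1,\mathcal{E}_2$ to lower-bound $f_j$ against $F_2^{1/2}(I^r_\ell)$ on the reduced support $n_\ell$, then invoking \thmref{thm:countsketch} or \thmref{thm:bptree}). Your added step of boosting $\bptree$'s constant success probability to $1-1/\poly(n)$ via $\O{\log n}$ parallel repetitions is a reasonable patch for a detail the paper's own proof glosses over, and is otherwise not a departure from its approach.
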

\begin{proof}
We consider casework on the value of $i$. 
First suppose $2^i\cdot\eps^2<\gamma$, so that $\ell:=\ell_i=\min\{k:2^k>2^i\cdot\eps_k^2\}\le\log\gamma$ since $\eps_k\le\eps$ for all $k$. 
By \defref{def:level:sets} of the level sets, each item $j\in \Lambda_i$ satisfies $f_j^p\in\left[\frac{\zeta\cdot\widehat{F_p}}{2^i},\frac{2\zeta\cdot\widehat{F_p}}{2^i}\right]$. 
We also have $\widehat{F_p}\le F_p\le1.01\cdot\widehat{F_p}$ and $n^{1/2-1/p}\cdot F_p^{1/p}\ge F_2^{1/2}$. 
Thus, each item in $j\in \Lambda_i$ satisfies 
\[f_j\ge\frac{\eps^{2/p}}{1.01^{1/p}}\cdot F_p^{1/p}\ge\frac{\eps^{2/p}}{1.01^{1/p}n^{1/2-1/p}}\cdot F_2^{1/2},\]
so that $f_j^2\ge\frac{\eps^{4/p}}{1.01n^{1/2-1/p}}F_2$. 
Then by \thmref{thm:countsketch} or \thmref{thm:bptree}, each item $j\in \Lambda_i$ is reported by $\countsketch$ or $\bptree$ with threshold $\frac{(\eps_\ell)^{2/p}}{80\gamma}$, as $\eps_\ell\le\eps$ and $n_\ell=n$. 

Otherwise, suppose $2^i\cdot\eps^2\ge\gamma$, so that $\ell:=\ell_i=\min\{k:2^k>2^i\cdot\eps_k^2\}\ge\log\gamma$ and $p_\ell:=\min(1,2^{-\ell}\gamma)\ge\frac{\gamma}{2\cdot 2^i\eps_{\ell}^2}$. 
Again by \defref{def:level:sets} of the level sets, each item $j\in \Lambda_i$ satisfies $f_j^p\in\left[\frac{\zeta\cdot\widehat{F_p}}{2^i},\frac{2\zeta\cdot\widehat{F_p}}{2^i}\right]$. 
If $j\notin I^r_{\ell}$, then $j$ certainly will not be reported by $\heavyhitters$ (regardless of whether $\heavyhitters$ is $\countsketch$ or $\bptree$). 
Hence, we assume that $j\in I^r_{\ell}$. 

Conditioning on the event $\mathcal{E}_2$,
\[F_p(I^r_\ell)\le\frac{10\gamma F_p}{2^\ell}\le\frac{20\gamma F_p}{2^i\eps^2}.\]
Given $\widehat{F_p}\le F_p\le1.01\cdot\widehat{F_p}$, then $f_j^p\ge\frac{\eps^2}{80\gamma}\cdot F_p(I^r_\ell)$. 
Therefore,
\[f_j\ge\frac{\eps^{2/p}}{(80\gamma)^{1/p}}\cdot F_p^{1/p}(I^r_\ell).\]
Rather than applying the inequality $n^{1/2-1/p}\cdot F_p^{1/p}\ge F_2^{1/2}$, we observe that conditioning on the event $\mathcal{E}_1$, it follows that $|I^r_\ell|\le n_\ell=\frac{10\gamma n}{2^\ell}$. 
Thus, the frequency vector defined by the substream $I^r_\ell$ has significantly smaller support size than $n$, which we can leverage to use a heavy-hitter algorithm with a lower threshold. 
We have
\[f_j\ge\frac{\eps^{2/p}}{(80\gamma)^{1/p}}\cdot\left(\frac{1}{n_\ell}\right)^{1/2-1/p}\cdot F_2^{1/2}(I^r_\ell).\]
Therefore by \thmref{thm:countsketch} or \thmref{thm:bptree}, each item $j\in \Lambda_i\cap I^r_\ell$ will be reported by $H^r_{\ell}$ output by $\heavyhitters$ with threshold $\frac{(\eps_{\ell})^{2/p}}{80\gamma}\cdot\left(\frac{1}{n_\ell}\right)^{1/2-1/p}$, since $\eps_{\ell}\le\eps$. 
\end{proof}
We now show our main correctness statement for our two-pass algorithms. 
\lemref{lem:multipass:correctness} proves that the output $\tilde{F_p}$ of our algorithm gives a $(1+\eps)$-approximation to the $F_p$ moment of the underlying frequency vector. 
Although the guarantees of \lemref{lem:multipass:correctness} are similar to the guarantees of \lemref{lem:correctness:largep} are similar, the proof of \lemref{lem:multipass:correctness} is much simpler. 
We show that we obtain $(1+\eps)$-approximations to the contributions of each significant level set, thus obtaining a $(1+\eps)$-approximation to the $F_p$ moment, since the contributions of the insignificant level sets can be ignored. 
Unlike \lemref{lem:correctness:largep}, we need not concern about an idealized process since the frequency of each heavy-hitter is exactly tracked in the second pass of the algorithm over the data stream, so no heavy-hitters can be accidentally misclassified into an incorrect level set. 
\begin{lemma}
\lemlab{lem:multipass:correctness}
With probability at least $\frac{2}{3}$, we have that $|\tilde{F_p}-F_p|\le\eps\cdot F_p$.
\end{lemma}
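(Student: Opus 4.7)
The plan is to mirror the proof of \lemref{lem:correctness:largep} but in a substantially simplified form: because the second pass records the exact frequency $f_k$ of every coordinate flagged in the first pass, there is no approximation error per heavy-hitter, no misclassification of items into neighboring level sets, and hence no need for either the idealized-process argument or the randomized-boundary correction based on the random $\zeta$. All that remains is the subsampling variance analysis and a summation over levels.

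First I would fix a level set $\Lambda_i$ and set $\ell = \ell_i$. Define the tail events $\mathcal{E}_1 = \{|I^r_\ell| \le 10\gamma n/2^\ell\}$ and $\mathcal{E}_2 = \{F_p(I^r_\ell) \le 10\gamma F_p/2^\ell\}$. By Markov applied to the means $\gamma n/2^\ell$ and $\gamma F_p/2^\ell$ respectively, each holds with probability $\ge 9/10$, so $\PPr{\mathcal{E}_1 \cap \mathcal{E}_2} \ge 4/5$. Conditioned on $\mathcal{E}_1 \cap \mathcal{E}_2$, \lemref{lem:multipass:item} guarantees that every $j \in \Lambda_i \cap I^r_\ell$ is reported by \heavyhitters in the first pass with probability $1 - 1/\poly(n)$. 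After the second pass, the exact frequency $f_j$ is known for all such $j$, and the filter in the definition of $S^r_i$ uses these exact values, so $S^r_i$ is precisely $\Lambda_i \cap I^r_\ell$ with high probability.

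Next I would analyze $D^r_i = \frac{1}{p_\ell} \sum_{k \in S^r_i} f_k^p$ as an unbiased estimator of $C_i$. Since inclusion of each $k \in \Lambda_i$ in $I^r_\ell$ is independent Bernoulli with parameter $p_\ell$, one has $\Ex{D^r_i} = C_i$ and
\[
\Var(D^r_i) \le \frac{1}{p_\ell} \sum_{k \in \Lambda_i} f_k^{2p}.
\]
Using $f_k^{2p} \le 16 F_p^2/2^{2i}$, $|\Lambda_i|/2^i \le \phi_i \le 1$, and $p_\ell \ge \gamma/(2 \cdot 2^i \eps_\ell^2)$, this is bounded by $\eps_\ell^2 F_p^2/64$ when $\gamma = 2^{11}$, exactly as in \lemref{lem:correctness:largep}. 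Chebyshev then gives $|D^r_i - C_i| \le (\eps_\ell/2) F_p$ with probability $\ge 15/16$ for any fixed $r$, conditioned on $\mathcal{E}_1 \cap \mathcal{E}_2$. Combining with the probability of the conditioning events and of the first-pass heavy-hitter correctness, each round succeeds with probability $\ge 2/3$, and taking the median over $\O{\log\log n}$ independent subsamples boosts this to $1 - 1/\polylog(n)$.

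Finally I would take a union bound over the $\alpha \log n$ level sets to get $|\widehat{C_i} - C_i| \le (\eps_{\ell_i}/2) F_p$ simultaneously for all $i$, whence
\[
|\tilde{F_p} - F_p| \le \sum_i |\widehat{C_i} - C_i| \le \sum_i \tfrac{\eps_{\ell_i}}{2} F_p.
\]
Splitting at $\ell_i = \log(1/\eps^2)$, the low-$\ell_i$ part contributes at most $\log(1/\eps^2) \cdot \eps/(16\log(1/\eps^2)) \cdot F_p$ by the definition of $\eps_{\ell_i}$, while the high-$\ell_i$ tail is summed as a geometric series bounded by $\eps/16 \cdot F_p$ using $\eta > \sum_{j \ge 0} 2^{-j(p/16 - 1/8)}$. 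This yields $|\tilde{F_p} - F_p| \le \eps F_p$ with probability at least $2/3$. There is no real obstacle beyond the variance bookkeeping already carried out in \lemref{lem:correctness:largep}; the key simplification is that exact second-pass counting removes every source of error except the subsampling variance.
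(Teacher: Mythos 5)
Your proposal is correct and follows essentially the same route as the paper's proof: conditioning on the two Markov-type events, invoking \lemref{lem:multipass:item} for first-pass identification, using exact second-pass counts to eliminate misclassification, bounding the variance of $D^r_i$ by $\eps_\ell^2 F_p^2/64$, applying Chebyshev and the median trick, and summing over level sets with the split at $\ell_i = \log(1/\eps^2)$. No substantive differences.
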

\begin{proof}
Let $\Lambda_i$ be a fixed level set and $\ell:=\min\{k:2^k>2^i\cdot\eps_k^2\}$. 
Let $\mathcal{E}_1$ be the event that $|I^r_\ell|\le\frac{10\gamma n}{2^\ell}$ and let $\mathcal{E}_2$ be the event that $F_p(I^r_\ell)\le\frac{10\gamma F_p}{2^\ell}$.
By \lemref{lem:multipass:item}, $\countsketch$ returns each $k\in \Lambda_i\cap I^r_\ell$ in $S^r_i$, conditioned on $\mathcal{E}_1$ and $\mathcal{E}_2$. 
Thus in the second pass, \algref{alg:multipass:estimator} tracks the contribution $f_k^p$ explicitly, by tracking $f_k$. 
We first analyze the approximation $\widehat{C^r_i}$ to $C_i$ for a given set of subsamples, where we define
\[D^r_i:=\frac{1}{p_{\ell}}\cdot\sum_{k\in \Lambda_i\cap I^r_\ell}f_k^p.\]
so that $\widehat{C_i}=\median_r D^r_i$. 
Conditioned on $\mathcal{E}_1$ and $\mathcal{E}_2$, we note that 
\[\Ex{D^r_i}=\frac{1}{p_{\ell}}\cdot\sum_{k\in \Lambda_i} p_{\ell}\cdot f_k^p=C_i.\]
We also have
\[\Var(D^r_i)=\frac{1}{p^2_{\ell}}\cdot\sum_{k\in \Lambda_i} p_{\ell}f_k^{2p}\le\sum_{k\in \Lambda_i}\frac{2\cdot 2^i\eps_{\ell}^2}{\gamma}\cdot f_k^{2p},\]
since $p_\ell\ge\frac{\gamma}{2\cdot 2^i\eps_{\ell}^2}$. 
Observe that for each $k\in \Lambda_i$, we have $f_k^{2p}\le\frac{16(F_p)^2}{2^{2i}}$ and $\frac{|\Lambda_i|}{2^i}\le\phi_i\le 1$. 
Thus for $\gamma=2^{11}$, 
\[\Var(D^r_i)\le|\Lambda_i|\cdot\frac{2\cdot 2^i\eps_{\ell}^2}{\gamma}\cdot\frac{16(F_p)^2}{2^{2i}}\le\phi_i\eps_{\ell}^2(F_p)^2\le\frac{\eps_{\ell}^2}{64}(F_p)^2.\]
Hence, by Chebyshev's inequality, we have that
\[\PPr{|D^r_i-C_i|\ge\frac{\eps_{\ell}}{2}\cdot F_p}\le\frac{1}{16}.\]
In summary, $D^r_i$ gives an approximation to $C_i$ with additive error at most $\frac{\eps_{\ell}}{2}\cdot F_p$ with probability at least $\frac{15}{16}$, conditioned on the events $\mathcal{E}_1$ and $\mathcal{E}_2$, and the correctness of the subroutine $\countsketch$. 
Since $\countsketch$ fails with probability at most $1-\frac{1}{\poly(n)}$ by \thmref{thm:countsketch}, then by a union bound, we have that conditioned on $\mathcal{E}_1$ and $\mathcal{E}_2$, 
\[\PPr{|D^r_i-C_i|\le\left(\frac{\eps}{4\alpha\log n}+\eps_{\ell}\right)\cdot F_p}\ge\frac{7}{8}.\]

For a fixed $r$, let $\mathcal{E}_1$ be the event that $|I^r_\ell|\le\frac{10\gamma n}{2^\ell}$ and let $\mathcal{E}_2$ be the event that $F_p(I^r_\ell)\le\frac{10\gamma F_p}{2^\ell}$. 
Recall that $I^r_\ell$ is a nested subset of $[n]$ subsampled at rate $p_\ell:=\min(1,2^{-\ell}\gamma)$, so that $\Ex{|I^r_\ell|}\le\frac{\gamma n}{2^\ell}$. 
Thus by Markov's inequality, $\PPr{\mathcal{E}_1}\ge\frac{9}{10}$. 
Similarly, $\Ex{F_p(I^r_\ell)}\le\frac{\gamma F_p}{2^\ell}$. 
Thus by Markov's inequality, $\PPr{\mathcal{E}_2}\ge\frac{9}{10}$. 
By a union bound, we first have $\PPr{\mathcal{E}_1\wedge\mathcal{E}_2}\ge\frac{8}{10}$. 
Applying another union bound, we have that $|D^r_i-C_i|\le\frac{\eps_{\ell}}{2}\cdot F_p$ for a fixed $r$, with probability at least $\frac{5}{8}$. 
Since $\widehat{C_i}=\median_r D^r_i$ across $r\le\O{\log\log n}$ iterations, then we have that $|\widehat{C_i}-C_i|\le\frac{\eps_{\ell}}{2}\cdot F_p$ with probability at least $1-\frac{1}{\polylog(n)}$. 

By a union bound over the indices $i\in[\alpha\log n]$, corresponding to the $\alpha\log n$ level sets, then with probability at least $1-\frac{1}{\polylog(n)}$, we have that $|\widehat{C_i}-C_i|\le\frac{\eps_{\ell}}{2}\cdot F_p$ simultaneously for all $i\in[\alpha\log n]$, where $\ell_i:=\min\{k:2^k>2^i\cdot\eps_k^2\}$. 
We form our estimate $\tilde{F_p}$ to $F_p$ by setting $\tilde{F_p}:=\sum_i\widehat{C_i}$ and we have $F_p=\sum_i C_i$. 
Since $i\in[\alpha\log n]$, $\ell_i:=\min\{k:2^k>2^i\cdot\eps_k^2\}$, and $\eps_{\ell_i}=\frac{\eps}{16\eta\cdot 2^{\ell_i(p/16-1/8)}\log\frac{1}{\eps^2}}$, then
\begin{align*}
|\tilde{F_p}-F_p|&=\left|\sum_i\widehat{C_i}-\sum_i C_i\right|\le\sum_i|\widehat{C_i}-C_i|\le\sum_i\left(\frac{\eps}{4\alpha\log n}+\eps_{\ell_i}\right)\cdot F_p\\
&\le\sum_{i\in[\alpha\log n]}\frac{\eps}{4\alpha\log n}\cdot F_p+\sum_{i:\ell_i\le\log\frac{1}{\eps^2}}\eps_{\ell_i}\cdot F_p+\sum_{i:\ell_i>\log\frac{1}{\eps^2}}\eps_{\ell_i}\cdot F_p\\
&\le\frac{\eps}{4}\cdot F_p+\log\frac{1}{\eps^2}\cdot\frac{\eps}{16\log\frac{1}{\eps^2}}\cdot F_p+\frac{\eps}{16}\cdot F_p,
\end{align*}
where the last bound on the last term follows from $\eta>\sum_{j\ge0}2^{-j(p/16-1/8)}$. 
Therefore with probability at least $1-\frac{1}{\polylog(n)}$, we have that $|\tilde{F_p}-F_p|\le\frac{\eps}{2}\cdot F_p$. 
\end{proof}

We now justify the full guarantees claimed by \thmref{thm:twopass:meta}. 
We first handle two passes over a turnstile stream. 
\begin{restatable}{theorem}{thmtwopassturnstile}
\thmlab{thm:twopass:turnstile}
For $p>2$, there exists a two-pass turnstile streaming algorithm that outputs a $(1+\eps)$-approximation to the $F_p$ moment with probability at least $\frac{2}{3}$, while using $\O{\frac{1}{\eps^{4/p}}\cdot n^{1-2/p}\log^2 n\log\log n}$ bits of space.
\end{restatable}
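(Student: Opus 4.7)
The plan is to instantiate \algref{alg:multipass:estimator} with $\heavyhitters=\countsketch$ as given by \thmref{thm:countsketch}, verify correctness using \lemref{lem:multipass:correctness}, and then carry out the space accounting in the same geometric-series style as in the proof of \thmref{thm:main:large}. Because \algref{alg:multipass:estimator} is stated assuming access to a $1.01$-approximation $\widehat{F_p}$ to $F_p$, I would first invoke \remref{rem:X:upper} to replace $\widehat{F_p}$ with the crude upper bound $X=F_1^p$, which is maintainable in $\O{\log n}$ bits during the first pass; the level sets are then redefined with $X$ in place of $\widehat{F_p}$, and \lemref{lem:multipass:correctness} applies unchanged since its proof only uses the relative position of frequencies with respect to $2\zeta X/2^i$.

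Correctness is then immediate: \lemref{lem:multipass:correctness} shows that with probability at least $\frac{2}{3}$ the estimator $\tilde F_p$ satisfies $|\tilde F_p - F_p|\le\eps\cdot F_p$, and the proof goes through whether the reporting subroutine is $\bptree$ or $\countsketch$ — the only properties used are that every level-set element intersected with $I^r_\ell$ is reported (\lemref{lem:multipass:item}) and that the second pass measures $f_k$ exactly.

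For space, the main task is to sum the cost of the $\countsketch$ instances. For fixed $r$ and level $i$, the threshold is $\tau_i:=\frac{(\eps_i)^{2/p}}{80\gamma}\cdot\left(\frac{1}{n_i}\right)^{1/2-1/p}$, so by \thmref{thm:countsketch} the corresponding $\countsketch$ uses $\O{\tau_i^{-2}\log^2 n}=\O{n_i^{1-2/p}\,\eps_i^{-4/p}\log^2 n}$ bits. Substituting $n_i=\frac{10\gamma n}{2^i}$ and $\eps_i=\frac{\eps}{16\eta\cdot 2^{i(p/16-1/8)}\log(1/\eps^2)}$, the per-level cost becomes
\[
\O{\frac{n^{1-2/p}}{\eps^{4/p}}\cdot 2^{-i(1-2/p)}\cdot 2^{i(1/4-1/(2p))}\cdot\log^2 n\cdot\polylog(1/\eps)}
=\O{\frac{n^{1-2/p}}{\eps^{4/p}}\cdot 2^{-3i(1-2/p)/4}\cdot\log^2 n\cdot\polylog(1/\eps)},
\]
which is a geometric series in $i$ summing to $\O{n^{1-2/p}\eps^{-4/p}\log^2 n}$ up to polylog$(1/\eps)$ factors. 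Repeating over $r\le\O{\log\log n}$ independent copies multiplies by $\log\log n$. The second-pass bookkeeping — explicitly tracking $f_k$ for each $k\in\bigcup_{i,r} H^r_i$ — needs only $\O{\log n}$ bits per tracked coordinate, and the total count of such coordinates is bounded by the aggregate $\countsketch$ output size, so it is subsumed. Thus the overall cost is $\O{\frac{1}{\eps^{4/p}}\cdot n^{1-2/p}\log^2 n\log\log n}$ bits as claimed.

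The only mildly delicate step is verifying that the $\countsketch$ subroutine's failure probability $1/\poly(n)$ can absorb the union bound across the $\alpha\log n$ levels and $\O{\log\log n}$ repetitions, and that replacing $\widehat{F_p}$ by $X=F_1^p$ does not inflate the number of nontrivial level sets beyond $\O{\log n}$ (which is why $\alpha$ is chosen so that $2^\alpha>m^p$). Once that accounting is in place, everything else is the same geometric summation already carried out in the proof of \thmref{thm:main:large}, but with $\countsketch$'s $\O{\tau_i^{-2}\log^2 n}$ bound replacing $\counthh$'s $\O{\tau_i^{-2}(\log^2\frac{1}{\tau_i}+\log^2\log m+\log n)+\log m}$ bound.
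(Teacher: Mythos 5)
Your proposal is correct and follows essentially the same route as the paper's own proof: instantiate \algref{alg:multipass:estimator} with $\countsketch$, invoke \lemref{lem:multipass:correctness} for correctness, and bound the first-pass cost by the geometric series $\sum_i n_i^{1-2/p}\eps_i^{-4/p}\log^2 n$ with the second pass subsumed. The only cosmetic difference is that you honestly retain a $\polylog(1/\eps)$ factor from the $\log(1/\eps^2)$ term in $\eps_i$, which the paper's statement of the theorem silently absorbs.
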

\begin{proof}
Our analysis is similar to the proof of \thmref{thm:main:large}. 
We again observe that by redefining the level sets according to any upper bound on $F_p$, we do not require a $1.01$-approximation $\widehat{F_p}$ to $F_p$ as the input of \algref{alg:multipass:estimator}. 
Thus by \lemref{lem:multipass:correctness}, there exists an algorithm that outputs a $(1+\eps)$-approximation to the $F_p$ moment on two pass turnstile streams and it remains to analyze the space complexity. 
By \thmref{thm:countsketch}, $\countsketch$ with threshold $\eps$ requires $\O{\frac{1}{\eps^2}\log^2 n}$ bits of space to output the indices of the heavy-hitters. 
\algref{alg:multipass:estimator} runs a separate instance of $\countsketch$ with threshold $\frac{(\eps_i)^{2/p}}{80\gamma}\cdot\left(\frac{1}{n_i}\right)^{1/2-1/p}$ to output a set $H^r_i$ for $r\le\O{\log\log n}$, where $\gamma$ is a sufficiently large constant. 
Thus the total space in the first pass across all indices $i$ for a fixed $r$ is at most
\[\left(C_1\log^2 n\right)\cdot\sum_{i\in[\alpha\log n]}\frac{(n_i)^{1-2/p}}{(\eps_i)^{4/p}},\]
for some positive constants $C_1,\alpha>0$. 
In particular, we have $n_i=\frac{10\gamma n}{2^i}$ and $\eps_i=\frac{\eps}{16\eta\cdot 2^{i(p/16-1/8)}\log\frac{1}{\eps^2}}$ for a sufficiently large constant $\eta$. 
Then the total space in the first pass across all indices $i$ for a fixed $r$ is at most 
\[\left(C_1\log^2 n\right)\cdot\sum_{i=1}^\infty\frac{C_2n^{1-2/p}}{\eps^{4/p}}\cdot\frac{2^{i(1/4-1/(2p))}}{2^{i(1-2/p)}},\]
for some positive constants $C_1,C_2>0$. 
Since $\sum_{i=1}^\infty\frac{2^{i(1/4-1/(2p))}}{2^{i(1-2/p)}}=\sum_{i=1}^\infty\frac{1}{2^{3i(1-2/p)/4}}$ is a geometric series that is upper bounded by a fixed constant, the total space in the first pass across all $i$ for a fixed $r$ is $\O{\frac{1}{\eps^{4/p}}\cdot n^{1-2/p}\log^2 n}$. 
Because $r\in\O{\log\log n}$, then the total space is $\O{\frac{1}{\eps^{4/p}}\cdot n^{1-2/p}\log^2 n\log\log n}$. 

In the second pass, we track the frequencies of each item reported by some instance of $\countsketch$. 
Since at most $\O{\frac{1}{\eps^{4/p}}\cdot n^{1-2/p}\log\log n}$ indices can be reported across all instances of $\countsketch$ and $\O{\log n}$ bits of space can be used to track the frequency of each reported index, then the total space for the second pass is at most $\O{\frac{1}{\eps^{4/p}}\cdot n^{1-2/p}\log n\log\log n}$. 
Thus, the total space is $\O{\frac{1}{\eps^{4/p}}\cdot n^{1-2/p}\log^2 n\log\log n}$.
\end{proof}

Finally, we note that to report at most $\O{\frac{1}{\eps^{4/p}}\cdot n^{1-2/p}\log\log n}$ indices of possible heavy-hitters in turnstile streams, $\countsketch$ uses at most $\O{\frac{1}{\eps^{4/p}}\cdot n^{1-2/p}\log^2 n\log\log n}$ space. 
By the same reasoning, we use space $\O{\frac{1}{\eps^{4/p}}\cdot n^{1-2/p}\log n\log\log n}$ in insertion-only streams by using the more space efficient $\bptree$. 

\begin{restatable}{theorem}{thmtwopassinsertion}
\thmlab{thm:twopass:insertion}
For $p>2$, there exists a two-pass insertion-only streaming algorithm that outputs a $(1+\eps)$-approximation to the $F_p$ moment with probability at least $\frac{2}{3}$, while using $\O{\frac{1}{\eps^{4/p}}\cdot n^{1-2/p}\log n\log\log n}$ bits of space.
\end{restatable}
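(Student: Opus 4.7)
The plan is to run \algref{alg:multipass:estimator} exactly as in the two-pass turnstile case, but substitute the $\bptree$ algorithm of \thmref{thm:bptree} for $\countsketch$ as the subroutine $\heavyhitters$. Since insertion-only streams are a special case of turnstile streams and the correctness argument in \lemref{lem:multipass:item} and \lemref{lem:multipass:correctness} only relies on the $\heavyhitters$ subroutine returning every index of sufficient $L_2$-weight with probability $1-\frac{1}{\poly(n)}$, the correctness guarantee transfers verbatim. The only quantity that needs to be reanalyzed is the space complexity, where the gain comes from the fact that $\bptree$ uses $\O{\frac{1}{\eps^2}\log n}$ bits instead of the $\O{\frac{1}{\eps^2}\log^2 n}$ bits required by $\countsketch$.

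First, I would boost $\bptree$'s failure probability from the constant $0.99$ in \thmref{thm:bptree} to $1-\frac{1}{\poly(n)}$ by the standard trick of running $\O{\log n}$ independent copies and taking a majority vote, which costs only another $\log n$ factor, so the per-instance space at threshold $\eps$ is $\O{\frac{1}{\eps^2}\log^2 n}$ after amplification; alternatively, the paper's overall guarantees only need correctness over the $\tO{1}$ invocations for the heavy-hitter structures across levels and repetitions, so a union bound already suffices at a cost of at most a $\log\log n$ factor. Either way, after plugging in the thresholds $\frac{(\eps_i)^{2/p}}{80\gamma}\cdot n_i^{-(1/2-1/p)}$ used in \algref{alg:multipass:estimator}, the first-pass total across all level indices $i\in[\alpha\log n]$ and all repetitions $r\le\O{\log\log n}$ becomes
\[
\O{\log n\log\log n}\cdot\sum_{i\in[\alpha\log n]}\frac{(n_i)^{1-2/p}}{(\eps_i)^{4/p}}.
\]
Substituting $n_i=\frac{10\gamma n}{2^i}$ and $\eps_i=\frac{\eps}{16\eta\cdot 2^{i(p/16-1/8)}\log\frac{1}{\eps^2}}$ yields, as in the proof of \thmref{thm:twopass:turnstile}, a geometric series in $i$ whose ratio is $2^{-3(1-2/p)/4}<1$ for $p>2$, so the sum is $\O{1}\cdot\frac{n^{1-2/p}}{\eps^{4/p}}\cdot\polylog(1/\eps)$ and the overall first-pass space is $\O{\frac{1}{\eps^{4/p}}\cdot n^{1-2/p}\log n\log\log n}$ after absorbing $\polylog(1/\eps)$ factors into the $\log n$ term using $n\ge 1/\eps^2$.

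For the second pass I would observe that the number of candidate indices returned in the first pass is at most the sum, across level-subsampling pairs $(i,r)$, of the heavy-hitter output size $\O{1/(\eps_i)^2}\cdot n_i^{1-2/p}$, which again telescopes to $\O{\frac{1}{\eps^{4/p}}\cdot n^{1-2/p}\log\log n}$ total indices; each of these can be tracked exactly in the second pass using a counter of $\O{\log n}$ bits, so the second-pass cost is also $\O{\frac{1}{\eps^{4/p}}\cdot n^{1-2/p}\log n\log\log n}$, matching the claimed bound.

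The main obstacle, as in the turnstile analysis, is bookkeeping: I must make sure the $\eps_i$ and $n_i$ choices produce a convergent geometric series when substituted into the $\bptree$ space bound (using $1-2/p>0$), and that the failure-probability amplification for $\bptree$ is accounted for in such a way that it contributes only a single additional $\log n$ (or $\log\log n$) factor rather than $\log^2 n$; beyond that, the proof is essentially a re-run of \thmref{thm:twopass:turnstile} with one logarithmic factor shaved off by the stronger insertion-only heavy-hitter primitive.
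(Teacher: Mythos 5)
Your proposal is correct and follows essentially the same route as the paper: substitute $\bptree$ for $\countsketch$, reuse the correctness argument of \lemref{lem:multipass:item} and \lemref{lem:multipass:correctness}, and redo the geometric-series space calculation with the $\O{\frac{1}{\eps^2}\log n}$ per-instance bound. Your concern about amplifying $\bptree$'s constant success probability is a fair observation (the paper glosses over the mismatch with the $1-\frac{1}{\poly(n)}$ guarantee stated in \lemref{lem:multipass:item}), but no extra $\log n$ factor is needed: a constant per-instance failure probability folds into the per-$r$ union bound with $\mathcal{E}_1,\mathcal{E}_2$, and the median over $r\le\O{\log\log n}$ repetitions already drives the per-level failure probability down to $\frac{1}{\polylog(n)}$, which survives the union bound over the $\alpha\log n$ levels.
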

\begin{proof}
The proof of correctness is exactly the same as that of \thmref{thm:twopass:turnstile} since using $\bptree$ as the subroutine for $\heavyhitters$ rather than $\countsketch$ offers the same guarantee for insertion-only streams. 
By \thmref{thm:bptree}, $\bptree$ with threshold $\eps$ requires $\O{\frac{1}{\eps^2}\log n}$ bits of space to output the indices of the heavy-hitters. 
To analyze the space complexity, note that \algref{alg:multipass:estimator} runs a separate instance of $\bptree$ with threshold $\frac{(\eps_i)^{2/p}}{80\gamma}\cdot\left(\frac{1}{n_i}\right)^{1/2-1/p}$ to output a set $H^r_i$ for $r\le\O{\log\log n}$, where $\gamma$ is a sufficiently large constant. 
Hence, the first pass across all indices $i$ for a fixed $r$ uses space at most
\[\left(C_1\log n\right)\cdot\sum_{i\in[\alpha\log n]}\frac{(n_i)^{1-2/p}}{(\eps_i)^{4/p}},\]
for some absolute constants $C_1,\alpha>0$. 
Since $n_i=\frac{10\gamma n}{2^i}$ and $\eps_i=\frac{\eps}{16\eta\cdot 2^{i(p/16-1/8)}\log\frac{1}{\eps^2}}$ for a sufficiently large constant $\eta$, the first pass uses space at most
\[\left(C_1\log n\right)\cdot\sum_{i=1}^\infty\frac{C_2n^{1-2/p}}{\eps^{4/p}}\cdot\frac{2^{i(1/4-1/(2p))}}{2^{i(1-2/p)}},\]
for some absolute constants $C_1,C_2>0$, across all indices $i$ for a fixed $r$. 
As $\sum_{i=1}^\infty\frac{2^{i(1/4-1/(2p))}}{2^{i(1-2/p)}}=\sum_{i=1}^\infty\frac{1}{2^{3i(1-2/p)/4}}$ is a geometric series that is upper bounded by some constant, then the total space in the first pass is $\O{\frac{1}{\eps^{4/p}}\cdot n^{1-2/p}\log n}$ across all indices $i$ for a fixed $r$. 
Since $r\le\O{\log\log n}$, then the total space in the first pass is $\O{\frac{1}{\eps^{4/p}}\cdot n^{1-2/p}\log n\log\log n}$. 

The second pass tracks the frequencies of each item reported by some instance of $\bptree$. 
Since at most $\O{\frac{1}{\eps^{4/p}}\cdot n^{1-2/p}\log\log n}$ indices can be reported across all instances of $\bptree$ and $\O{\log n}$ bits of space can be used to track the frequency of each reported index, then the total space for the second pass is at most $\O{\frac{1}{\eps^{4/p}}\cdot n^{1-2/p}\log n\log\log n}$.  
Thus, the total space is $\O{\frac{1}{\eps^{4/p}}\cdot n^{1-2/p}\log n\log\log n}$.
\end{proof}

\section{Lower Bounds}
\seclab{sec:lb}
In this section, we first consider the standard \emph{blackboard} communication model, where a number of players each have a local input and the goal is to solve some predetermined communication problem by sending messages to a shared medium. 
Each player is assumed to have access to an unlimited amount of private randomness. 
The sequence of messages on the shared blackboard is called the \emph{transcript} and the maximum length of the transcript over all inputs is the \emph{communication cost} of a given protocol. 
The communication complexity of $f$, denoted by $R_{\delta}(f)$, is the minimal communication cost of all protocols that succeed with probability at least $1-\delta$ for all legal inputs to $f$. 
We now require a number of basic concepts and results from information theory. 
\begin{definition}[Mutual information]
Given a pair of random variables $X$ and $Y$ with joint distribution $p(x,y)$, the \emph{mutual information} is defined as $I(X;Y):=\sum_{x,y}p(x,y)\log\frac{p(x,y)}{p(x)p(y)}$, for marginal distributions $p(x)$ and $p(y)$.  
\end{definition}

\begin{definition}[Information cost]
Let $\Pi$ be a randomized protocol that produces a random variable $\Pi(X_1,\ldots,X_t)$ as a transcript on inputs $X_1,\ldots,X_t$ drawn from a distribution $\mu$. 
Then the \emph{information cost} of $\Pi$ with respect to $\mu$ is defined as $I(P_1,\ldots,P_t;\Pi(P_1,\ldots,P_t))$.  
\end{definition}

\begin{definition}[Information complexity]
The information complexity of $f$ with respect to a distribution $\mu$ and failure probability $\delta$ is the minimum information cost of a protocol for $f$ with respect to $\mu$ that fails with probability at most $\delta$ on every input and denoted by $\IC_{\mu,\delta}(f)$. 
\end{definition}

\begin{definition}[Conditional information cost]
Suppose $((X_1,\ldots,X_t),W)\sim\zeta$ for a mixture of product distributions. 
The \emph{conditional information cost} of a randomized protocol $\Pi$ with respect to $\zeta$ is $I(X_1,\ldots,X_t;\Pi(X_1,\ldots,X_t)|W)$. 
\end{definition}

\begin{definition}[Conditional information complexity]
The conditional information complexity of $f$ with respect to $\zeta$ and failure probability $\delta$ is defined to be the minimum conditional information cost of a protocol for $f$ with respect to $\zeta$ that fails with probability at most $\delta$ and denoted by $\CIC_{\zeta,\delta}(f)$. 
\end{definition}

\begin{fact}
\factlab{fact:cc:info}
For any distribution $\mu$ and failure probability $\delta\in(0,1)$, we have $R_{\delta}(f)\ge\IC_{\mu,\delta}(f|W)$. 
\end{fact}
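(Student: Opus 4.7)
The plan is to establish this via the standard reduction from communication complexity to information-theoretic quantities. Take any randomized protocol $\Pi$ achieving communication cost $C := R_{\delta}(f)$ with failure probability at most $\delta$ on every input, and show that its conditional information cost with respect to the mixture-of-product distribution $\mu$ (conditioned on $W$) is at most $C$. Taking the infimum over all such protocols then yields the claim.

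First I would observe that, since the transcript $\Pi(X_1,\ldots,X_t)$ is by definition always a bit string of length at most $C$, it takes at most $2^C$ distinct values, so its Shannon entropy satisfies $H(\Pi) \le C$. This is the one and only place where the worst-case (as opposed to expected) communication cost is used; it converts a combinatorial bound on transcript length into an information-theoretic bound on transcript entropy.

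Next I would chain the standard inequalities
\[
I(X_1,\ldots,X_t;\Pi \mid W) \;\le\; H(\Pi \mid W) \;\le\; H(\Pi) \;\le\; C,
\]
where the first step uses $I(Y;Z \mid W) = H(Z \mid W) - H(Z \mid Y, W) \le H(Z \mid W)$ applied to $Y = (X_1,\ldots,X_t)$ and $Z = \Pi$, the second uses that conditioning cannot increase entropy, and the third is the bound from the previous paragraph. Since the protocol $\Pi$ was an arbitrary one achieving failure probability at most $\delta$ on every input, taking the infimum over all such protocols gives $\IC_{\mu,\delta}(f \mid W) \le R_{\delta}(f)$, which is the desired inequality.

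The main thing to be careful about is not a deep mathematical obstacle but rather keeping the model consistent: the bound $H(\Pi) \le C$ requires that the transcript length be bounded by $C$ in the worst case (not just in expectation), and the use of private randomness is implicit in the definition of $R_{\delta}(f)$. One should also note that the inequality $H(\Pi \mid W) \le H(\Pi)$ is valid regardless of whether $W$ is independent of $(X_1,\ldots,X_t)$ or correlated through the mixture structure, so no further assumption on $\mu$ beyond the one stated in the definition of conditional information cost is needed.
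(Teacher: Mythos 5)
Your proof is correct and is precisely the standard argument (transcript of worst-case length $C$ has entropy at most $C$, and mutual information is bounded by entropy, with conditioning only decreasing entropy); the paper states this as a Fact without proof, implicitly deferring to the classical treatment in \cite{Bar-YossefJKS04}, and your chain of inequalities is exactly what that reference does. The one caveat you already flag correctly — that the bound needs worst-case rather than expected communication — is the only delicate point, and you handle it properly.
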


\begin{definition}[Decomposable functions]
A function $f$ is $g$-decomposable with primitive $h$ if $f(\x_1,\ldots,\x_t)=g(h(\x_{1,1},\ldots,\x_{t,1}),\ldots,h(\x_{1,n},\ldots,\x_{t,n}))$ for some function $h$, where $\x_{i,j}$ denotes the $j$-th coordinate of $\x_i$. 
\end{definition}

\begin{definition}[Embedding of a coordinate into a vector]
For a vector $\x\in\mathbb{R}^n$, $i\in[n]$ and $y\in\mathbb{R}$, we use $\embed(\x,i,y)$ to denote the vector $\v$ such that $\v_i=y$ and $\v_j=\x_j$ for $j\neq i$. 
\end{definition}

\begin{definition}[Collapsing distribution]
For a function $f$ that is $g$-decomposable with primitive $h$, we say $(\x_1,\ldots,\x_t)$ is a collapsing input for $f$ if for every $j$ and set $(y_1,\ldots,y_t)$ of legal inputs, we have
\[f(\embed(\x_1,j,y_1),\ldots,\embed(\x_t,j,y_t))=h(y_1,\ldots,y_t).\]
A distribution $\mu$ on the input space is a \emph{collapsing distribution} for $f$ if every $(\x_1,\ldots,\x_t)$ in the support of $\mu$ is a collapsing input. 
\end{definition}

\begin{lemma}[Information cost decomposition lemma, Lemma 5.1 in~\cite{Bar-YossefJKS04}]
\lemlab{lem:decompose}
Let $\mu$ be a mixture of product distributions and suppose that $((X_1,\ldots,X_t),W)\sim\mu^n$. 
Then for a protocol $\Pi$ whose inputs $(X_1,\ldots,X_t)\sim\mathcal{L}^n$ for some legal input set $\mathcal{L}$, we have $I(X_1,\ldots,X_t;\Pi(X_1,\ldots,X_t|W))\ge\sum_i I(X_{1,i},\ldots,X_{t,i};\Pi(X_1,\ldots,X_t)|W)$, where $X_{i,j}$ denotes the $j$-th component of $X_i$. 
\end{lemma}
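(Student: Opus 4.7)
The plan is to peel off the coordinates one at a time with the chain rule, and then use the product structure of $\mu^n$ to argue that each resulting conditional mutual information is only made larger when we drop the conditioning on the preceding coordinates. Write $X^i := (X_{1,i},\ldots,X_{t,i})$ for the $i$-th coordinate tuple, and write $X^{<i} := (X^1,\ldots,X^{i-1})$. I would first apply the chain rule for mutual information to obtain
\begin{equation*}
I(X_1,\ldots,X_t; \Pi(X_1,\ldots,X_t) \mid W)
\;=\; I(X^1,\ldots,X^n; \Pi \mid W)
\;=\; \sum_{i=1}^n I\bigl(X^i;\Pi \,\big|\, W, X^{<i}\bigr).
\end{equation*}

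Next I would prove the single-coordinate inequality $I(X^i;\Pi\mid W, X^{<i}) \ge I(X^i;\Pi\mid W)$ for each $i$. The key observation is that because $((X_1,\ldots,X_t),W)\sim\mu^n$ is a product across the $n$ coordinates (and $W$ is the corresponding tuple of mixing variables $W_1,\ldots,W_n$), the tuples $X^1,\ldots,X^n$ are mutually independent conditioned on $W$. In particular $X^i \perp X^{<i} \mid W$. Given this conditional independence, the desired inequality follows from expanding $I(X^i;(X^{<i},\Pi)\mid W)$ in two ways via the chain rule:
\begin{align*}
I(X^i; X^{<i} \mid W) + I(X^i;\Pi\mid W, X^{<i})
&= I(X^i; (X^{<i},\Pi)\mid W) \\
&= I(X^i;\Pi\mid W) + I(X^i;X^{<i}\mid W,\Pi).
\end{align*}
The first term on the left vanishes by conditional independence, while the last term on the right is nonnegative, yielding $I(X^i;\Pi\mid W, X^{<i}) \ge I(X^i;\Pi\mid W)$. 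Summing over $i$ and combining with the chain-rule identity completes the proof.

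The only subtle step is justifying $X^i \perp X^{<i} \mid W$. Since $\mu$ itself is a mixture of product distributions (so that $X_1,\ldots,X_t$ are conditionally independent given $W$ in a single copy), one might worry that the conditional independence statement we need is about independence across coordinates rather than across players. But this is exactly what $\mu^n$ gives: the $n$ blocks $(X^i, W_i)$ are i.i.d.\ copies of $\mu$, so after conditioning on the entire vector $W = (W_1,\ldots,W_n)$ the blocks $X^1,\ldots,X^n$ remain mutually independent (conditioning on $W_j$ for $j\ne i$ is irrelevant to $X^i$, and conditioning on $W_i$ does not create any cross-block dependence). This is the main conceptual point; once it is pinned down, the rest is routine information-theoretic bookkeeping.
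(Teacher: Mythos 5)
Your proposal is correct and follows essentially the same route as the standard proof of Lemma 5.1 in Bar-Yossef et al., which the paper cites without reproving: the chain rule over coordinates, followed by dropping the conditioning on $X^{<i}$ via the identity $I(X^i;(X^{<i},\Pi)\mid W)$ expanded two ways, using that the blocks $X^1,\ldots,X^n$ are mutually independent given $W$ because $((X^i,W_i))_{i}$ are i.i.d.\ under $\mu^n$. Your attention to the conditional-independence step (across coordinates, not across players) is exactly the right subtlety to pin down, and your justification of it is sound.
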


\begin{lemma}[Reduction lemma, Lemma 5.5 in~\cite{Bar-YossefJKS04}]
\lemlab{lem:reduction}
Let $\Pi$ be a protocol for a decomposable function $f$ with input $\mathcal{L}^n$ and primitive $h$, that fails with probability at most $\delta$. 
Let $\mu$ be a mixture of product distributions and suppose $((X_1,\ldots,X_t),D)\sim\mu^n$. 
If $(X_1,\ldots,X_t)$ is a collapsing distribution for $f$, then for all $j\in[n]$,
\[I((X_{1,j},\ldots,X_{t,j});\Pi(X_1,\ldots,X_t)|D)\ge\CIC_{\mu,\delta}(h).\]
\end{lemma}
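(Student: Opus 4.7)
The plan is to prove the Reduction Lemma by exhibiting, for any fixed coordinate $j \in [n]$, a protocol $\Pi_j$ for the primitive $h$ whose conditional information cost with respect to $\mu$ is exactly $I((X_{1,j},\ldots,X_{t,j});\Pi(X_1,\ldots,X_t)\mid D)$, and whose failure probability is at most $\delta$. The lemma then follows from the definition of $\CIC_{\mu,\delta}(h)$ as the infimum of conditional information costs over valid protocols.

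First, I would construct $\Pi_j$. Given inputs $(y_1,\ldots,y_t)$ distributed according to $\mu$ (conditioned on $D$, which the players share via public randomness as part of the conditioning variable), each player $s$ must produce a full vector $X_s \in \mathcal{L}$ to feed into $\Pi$. The key ingredient is that $\mu$ is a mixture of product distributions: conditioned on $D$, the coordinates of $(X_1,\ldots,X_t)$ become independent across $i \in [n]$ and, at each coordinate $i$, the tuple $(X_{1,i},\ldots,X_{t,i})$ is a product distribution across players. Therefore, using public randomness derived from $D$ together with private randomness per player, the players can jointly sample $(X_{s,i})_{s \in [t]}$ for every $i \neq j$ \emph{without any communication}, with the correct joint distribution. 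Player $s$ then sets $X_s := \embed(X_s, j, y_s)$, i.e., overwrites the $j$-th coordinate with their input $y_s$, and the players run $\Pi$ on these vectors and output whatever $\Pi$ outputs.

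Next I would verify the two required properties of $\Pi_j$. For correctness, the collapsing property of $\mu$ guarantees that for every $(X_1,\ldots,X_t)$ in its support and every legal $(y_1,\ldots,y_t)$, one has $f(\embed(X_1,j,y_1),\ldots,\embed(X_t,j,y_t)) = h(y_1,\ldots,y_t)$; hence whenever $\Pi$ succeeds on the embedded input, $\Pi_j$ outputs $h(y_1,\ldots,y_t)$, and so $\Pi_j$ fails with probability at most $\delta$ on every input to $h$. For the information cost, observe that the transcript of $\Pi_j$ on inputs $(y_1,\ldots,y_t)$ is exactly the transcript of $\Pi$ on $(\embed(X_1,j,y_1),\ldots,\embed(X_t,j,y_t))$; when $(y_1,\ldots,y_t)$ is drawn from $\mu$ conditioned on $D$, the resulting joint distribution of player inputs to $\Pi$ coincides with the full $\mu^n$ distribution. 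Thus
\[
I((X_{1,j},\ldots,X_{t,j});\Pi_j(X_{1,j},\ldots,X_{t,j})\mid D) \;=\; I((X_{1,j},\ldots,X_{t,j});\Pi(X_1,\ldots,X_t)\mid D).
\]
By definition of conditional information complexity, the left-hand side is at least $\CIC_{\mu,\delta}(h)$, which gives the desired inequality.

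The main obstacle is the no-communication sampling step: it is essential that the conditioning variable $D$ contain enough information to decouple the coordinates so that each player can privately sample their share of $(X_{s,i})_{i\neq j}$ with the correct joint law. This is precisely the role of $\mu$ being a \emph{mixture of product distributions} rather than an arbitrary distribution, and it is what allows the single-coordinate problem to be embedded into the $n$-coordinate problem without paying any extra communication or information cost beyond what is already accounted for in the mutual information at coordinate $j$.
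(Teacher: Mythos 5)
Your proposal is correct and is essentially the standard embedding argument of Bar-Yossef et al.\ for this lemma, which the paper states as a cited result without reproducing the proof: fix $j$, have the players sample the off-$j$ coordinates without communication using the product structure conditioned on $D_{-j}$, embed the single-coordinate inputs at position $j$, invoke the collapsing property for correctness, and match the conditional information costs. The only point worth tightening is that $\CIC_{\mu,\delta}(h)$ conditions on the single-coordinate auxiliary variable $D_j$ rather than all of $D$; since $D_{-j}$ is independent of $(X_{1,j},\ldots,X_{t,j},D_j)$ and is absorbed into the protocol's randomness, the two mutual informations coincide, exactly as your argument implicitly uses.
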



For an input $X$ to a protocol $\Pi$, let $\pi(X)$ denote the probability distribution over the transcripts produced by $\Pi$ across the private randomness of the protocol. 
Let $\psi(X):=\sqrt{\pi(X)}$ denote the transcript wave function of $X$ so that $\|\psi(X)\|_2=\|\pi(X)\|_1=1$. 

\begin{definition}
The Hellinger distance between $\psi_1$ and $\psi_2$ is the scaled Euclidean distance defined by $h(\psi_1,\psi_2):=\frac{1}{\sqrt{2}}\|\psi_1-\psi_2\|_2$. 
\end{definition}

\begin{fact}[Mutual information to Hellinger distance]
\factlab{fact:info:hellinger}
For $X\in\{u,v\}$, we have $I(X;\Pi)\ge\frac{1}{2}\|\psi(u)-\psi(v)\|_2^2$. 
\end{fact}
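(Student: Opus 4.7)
The plan is to identify $I(X;\Pi)$ with the Jensen--Shannon divergence between $\pi(u)$ and $\pi(v)$, and then reduce the Fact to the classical pointwise inequality ``Jensen--Shannon divergence is at least squared Hellinger.''

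Since only two values of $X$ appear in the statement, I would take $X$ uniform on $\{u,v\}$ at the outset, which is what makes $I(X;\Pi)$ well defined. Setting $P := \pi(u)$, $Q := \pi(v)$, and $M := \tfrac{1}{2}(P+Q)$ for the induced transcript marginal, a routine Bayes-rule expansion gives
\[
I(X;\Pi) \;=\; \tfrac{1}{2}\, D(P\,\|\,M) \;+\; \tfrac{1}{2}\, D(Q\,\|\,M).
\]
Since $\psi(u) = \sqrt{P}$ and $\psi(v) = \sqrt{Q}$, the Fact is equivalent to $D(P\,\|\,M) + D(Q\,\|\,M) \geq \sum_\tau (\sqrt{P(\tau)} - \sqrt{Q(\tau)})^2$. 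Summing over transcripts reduces this to the pointwise inequality
\[
p\log\tfrac{p}{m} + q\log\tfrac{q}{m} \;\geq\; \bigl(\sqrt{p} - \sqrt{q}\bigr)^2, \qquad m = \tfrac{p+q}{2},
\]
for $p, q \geq 0$, which, by homogeneity (dividing through by $p$ and setting $t := q/p$), further reduces to showing $h(t) := 2\sqrt{t} - (1+t)\log(1+t) + t\log t \geq 0$ on $t \geq 0$. A standard calculus argument --- checking $h(0) = h(1) = 0$, computing $h'(t) = t^{-1/2} - \log(1 + t^{-1})$, verifying that $h$ is unimodal (and hence nonnegative) on $[0,1]$, and that $h' \geq 0$ on $[1, \infty)$ --- yields $h \geq 0$ throughout.

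The step I would flag most carefully in a write-up is that the pointwise inequality above is tight at the Dirac configuration $(p, q) = (1, 0)$, where both sides equal $1$, and critically uses the paper's convention that $\log$ denotes the base-$2$ logarithm: with natural logarithms the bound becomes $\ln 2 \geq 1$ there, which is false. The factor $1/\ln 2 > 1$ picked up in converting $\ln$ to $\log_2$ is precisely what produces equality at the Dirac endpoint and supplies the slack needed everywhere else. Once the pointwise inequality is established, summing over $\tau$ and substituting into the Jensen--Shannon identity above delivers $I(X;\Pi) \geq \tfrac{1}{2}\|\psi(u) - \psi(v)\|_2^2$, as required.
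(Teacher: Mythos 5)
The paper states this Fact without proof---it is the standard mutual-information-to-Hellinger bound from \cite{Bar-YossefJKS04} (and Jayram's survey), whose textbook derivation is exactly the route you take: with $X$ uniform on $\{u,v\}$ (an assumption the Fact leaves implicit but which is both necessary and the case in which it is applied), $I(X;\Pi)$ equals the Jensen--Shannon divergence $\tfrac12 D(P\|M)+\tfrac12 D(Q\|M)$, and the claim reduces to the pointwise inequality $p\log\tfrac{p}{m}+q\log\tfrac{q}{m}\ge(\sqrt p-\sqrt q)^2$. Your reduction, the formula $h'(t)=t^{-1/2}-\log(1+t^{-1})$, and the sign analysis on $[0,1]$ and $[1,\infty)$ all check out, and your observation that the inequality is tight at the Dirac endpoint only because $\log$ is base $2$ (it fails there in nats) is exactly the right point to flag. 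The proposal is correct and matches the standard argument the paper implicitly invokes.
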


\begin{fact}[Soundness for Hellinger distance]
\factlab{fact:soundness:hellinger}
If $\Pi$ is a protocol for $g$ that fails with probability at most $\delta\in(0,1)$ and $g(X_1)\neq g(X_2)$, then
\[\frac{1}{2}\|\psi(X_1)-\psi(X_2)\|_2^2\ge1-2\sqrt{\delta}.\]
\end{fact}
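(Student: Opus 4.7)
\textbf{Proof plan for \factref{soundness:hellinger}.} The plan is to translate the claim about the squared Hellinger distance into a statement about the Bhattacharyya coefficient, and then exploit the soundness of the protocol to upper bound the latter by $2\sqrt{\delta}$. First, I would expand the scaled Euclidean distance directly from the definition $\psi(X) = \sqrt{\pi(X)}$: since each $\pi(X)$ is a probability distribution over transcripts $\tau$,
\[
\tfrac{1}{2}\|\psi(X_1)-\psi(X_2)\|_2^2 \;=\; \tfrac{1}{2}\sum_\tau \bigl(\sqrt{\pi(X_1)(\tau)}-\sqrt{\pi(X_2)(\tau)}\bigr)^2 \;=\; 1 - \sum_\tau \sqrt{\pi(X_1)(\tau)\,\pi(X_2)(\tau)}.
\]
So it suffices to prove the Bhattacharyya coefficient bound $\sum_\tau \sqrt{\pi(X_1)(\tau)\pi(X_2)(\tau)} \le 2\sqrt{\delta}$.

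Next, I would use the assumption that $\Pi$ is a protocol for $g$. Without loss of generality the output is a deterministic function $\Phi$ of the transcript (otherwise append the output to the transcript). Define $B_1 := \{\tau : \Phi(\tau) = g(X_1)\}$ and let $B_2$ be its complement in the transcript space. Since $\Pi$ fails with probability at most $\delta$ on input $X_1$, we have $\pi(X_1)(B_1) \ge 1-\delta$, hence $\pi(X_1)(B_2) \le \delta$. Using $g(X_1)\neq g(X_2)$: any transcript in $B_1$ produces output $g(X_1)$, which is wrong on input $X_2$, so $\pi(X_2)(B_1)\le \delta$, and consequently $\pi(X_2)(B_2)\ge 1-\delta$.

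Finally, I would split the Bhattacharyya sum along the partition $\{B_1,B_2\}$ and apply Cauchy--Schwarz on each piece:
\[
\sum_{\tau\in B_1}\sqrt{\pi(X_1)(\tau)\pi(X_2)(\tau)} \le \sqrt{\pi(X_1)(B_1)\,\pi(X_2)(B_1)} \le \sqrt{1\cdot\delta} = \sqrt{\delta},
\]
and symmetrically $\sum_{\tau\in B_2}\sqrt{\pi(X_1)(\tau)\pi(X_2)(\tau)} \le \sqrt{\delta\cdot 1} = \sqrt{\delta}$. Adding these gives the Bhattacharyya bound, which plugged into the first display yields the stated inequality.

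The only subtlety — and the one place I would be careful — is the implicit reduction to deterministic outputs. If we allowed the output to depend on additional private randomness not reflected in the transcript $\tau$, then $B_1$ would not be a well-defined event in the sample space on which $\pi(X_1)$ and $\pi(X_2)$ live. This is resolved either by folding the relevant private coins into an augmented transcript (so that the output is a function of this augmented transcript) or by applying Yao's minimax-style argument to fix the output rule; once this is done, the three-step calculation above is essentially a one-line exercise.
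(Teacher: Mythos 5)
Your proof is correct and is exactly the standard argument (Hellinger distance $=1-$ Bhattacharyya coefficient, partition transcripts by the output they induce, Cauchy--Schwarz on each part); the paper states this as an imported fact from the information-complexity literature without giving a proof, and your derivation matches the one in the cited sources. Your handling of the subtlety that the output must be a deterministic function of the transcript is also the standard convention and is fine.
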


\begin{lemma}[Theorem 7 in~\cite{Jayram09}]
\lemlab{lem:disjoint:char}
For a $t$-party protocol $\Pi$ on the input domain $\{0,1\}^t$, let $A_1,\ldots,A_s$ be a pairwise disjoint collection of $s=2^k$ subsets of $[t]$ and $A=\cup_i A_i$. 
Then
\[\sum_{i=1}^s\|\psi(\emptyset)-\psi(A_i)\|_2^2\ge\|\psi(\emptyset)-\psi(A)\|_2^2\cdot\prod_{\ell=1}^k\left(1-\frac{1}{2^\ell}\right).\]
\end{lemma}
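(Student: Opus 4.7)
The plan is to proceed by induction on $k$, taking the cut-and-paste property of randomized communication protocols as the central structural tool. The base case $k=0$ forces $s=1$ and $A_1=A$, so the inequality holds as an equality. For the inductive step, I pair up consecutive subsets into $B_i := A_{2i-1}\cup A_{2i}$ for $i \in [2^{k-1}]$; since the $B_i$ are pairwise disjoint with union $A$, applying the induction hypothesis to $B_1,\ldots,B_{2^{k-1}}$ yields
\[\sum_{i=1}^{2^{k-1}}\|\psi(\emptyset)-\psi(B_i)\|_2^2 \ge \|\psi(\emptyset)-\psi(A)\|_2^2\prod_{\ell=1}^{k-1}\left(1-\frac{1}{2^\ell}\right).\]
It therefore suffices to establish a merging inequality
\[\sum_{i=1}^{s}\|\psi(\emptyset)-\psi(A_i)\|_2^2 \ge \left(1-\frac{1}{2^k}\right)\sum_{i=1}^{2^{k-1}}\|\psi(\emptyset)-\psi(B_i)\|_2^2\]
at the $k$-th level, which by iteration telescopes to produce the full product.

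The key structural fact driving the merging step is the cut-and-paste identity for disjoint inputs: for pairwise disjoint $B,C \subseteq [t]$ (with the remaining coordinates set to $0$), the transcript distributions satisfy $\pi(B)(\tau)\cdot\pi(C)(\tau) = \pi(\emptyset)(\tau)\cdot\pi(B\cup C)(\tau)$ pointwise in $\tau$, a consequence of the product structure of the players' private randomness together with the rectangle property of deterministic protocol transcripts. Taking square roots gives $\psi(B)\cdot\psi(C) = \psi(\emptyset)\cdot\psi(B\cup C)$ pointwise. Writing $\rho(S) := \langle\psi(\emptyset),\psi(S)\rangle$ so that $\|\psi(\emptyset)-\psi(S)\|_2^2 = 2(1-\rho(S))$, integrating the pointwise identity yields $\rho(B_i) = \int \psi(A_{2i-1})\psi(A_{2i})$, which I would combine with Cauchy--Schwarz to translate control of $\rho(A_{2i-1}) + \rho(A_{2i})$ into control of $\rho(B_i)$.

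The main obstacle is extracting the precise constant $(1-1/2^k)$ rather than the looser factor of $1/2$ produced by a direct per-pair application of Cauchy--Schwarz. Since the desired factor approaches $1$ as $k$ grows, no per-pair bound can suffice; the sharpening must exploit shared structure across the $2^{k-1}$ pairs. The approach I would take is a global argument that aggregates the identities $\int\psi(\emptyset)\psi(B_i) = \int\psi(A_{2i-1})\psi(A_{2i})$ across all $i\in[2^{k-1}]$, using the common reference point $\psi(\emptyset)$ and the disjointness of the $A_i$ to produce orthogonality-type relations that pool the Cauchy--Schwarz slacks into a single smaller deduction. Matching this pooling argument to exactly the factor $(1-1/2^k)$ at level $k$, rather than some weaker quantity, is the delicate technical step where the full Hilbert-space geometry of the wave functions $\psi(\cdot)$ must be used.
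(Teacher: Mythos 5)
This lemma is imported by the paper: no proof is given, it is quoted verbatim as Theorem~7 of Jayram (2009). So there is no in-paper argument to compare against, and I am evaluating your attempt on its own terms. Your scaffolding is correct: the induction on $k$ with trivial base case; the cut-and-paste identity $\pi(B)(\tau)\,\pi(C)(\tau)=\pi(\emptyset)(\tau)\,\pi(B\cup C)(\tau)$ for disjoint $B,C$, which does hold for private-coin number-in-hand blackboard protocols because $\pi(x)(\tau)$ factors as $\prod_i q_i(x_i,\tau)$ and disjointness lets you permute the coordinate inputs; the consequence $\langle\psi(\emptyset),\psi(B\cup C)\rangle=\langle\psi(B),\psi(C)\rangle$; and the observation that any per-pair application of the triangle inequality or Cauchy--Schwarz caps out at a factor $\tfrac12$ per level, i.e., at the classical bound with constant $2^{-k}$.

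The problem is that the proof stops exactly where the theorem begins. The entire content of the statement is the improvement of the constant from $2^{-k}$ to $\prod_{\ell=1}^{k}\left(1-\tfrac{1}{2^\ell}\right)=\Theta(1)$, and in your plan that improvement is carried entirely by the merging inequality $\sum_{i=1}^{2^k}\|\psi(\emptyset)-\psi(A_i)\|_2^2\ge\left(1-\tfrac{1}{2^k}\right)\sum_{j}\|\psi(\emptyset)-\psi(B_j)\|_2^2$, which you do not prove; you describe it only as ``a global argument that aggregates the identities \ldots\ to produce orthogonality-type relations that pool the Cauchy--Schwarz slacks,'' and you yourself flag extracting the exact factor as the unresolved step. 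Two concrete issues follow. First, since the factor $1-2^{-k}$ depends on the number of sets at that level, the merging inequality cannot be obtained by bounding each pair $(A_{2i-1},A_{2i})$, or each $B_j$, in isolation; a specific mechanism coupling all $2^k$ wave functions simultaneously (for instance the identity $\langle\psi(\emptyset)-\psi(A_i),\,\psi(\emptyset)-\psi(A_j)\rangle=\tfrac12\|\psi(\emptyset)-\psi(A_i)\|_2^2+\tfrac12\|\psi(\emptyset)-\psi(A_j)\|_2^2-\tfrac12\|\psi(\emptyset)-\psi(A_i\cup A_j)\|_2^2$, which follows from cut-and-paste, together with the nonnegativity of all the $\psi(\cdot)$) must be exhibited, and none is. Second, it is not established that your particular intermediate statement is even true or is the right induction invariant; Jayram's argument organizes the recursion differently, and committing to the wrong intermediate inequality is a real risk. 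As written, your argument yields only the $2^{-k}$ constant, which in \thmref{thm:disjinfty} loses a factor of $t$ in the communication bound and therefore does not support the $\Omega\left(\frac{1}{\eps^2}\cdot n^{1-2/p}\right)$ space bound of \thmref{thm:lb}. Since the paper treats this lemma as a black box, the appropriate resolution is to cite Jayram's proof rather than leave the key inequality as an aspiration.
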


\begin{definition}
In the $(t,\eps,n)$-player set disjointness estimation problem $(t,\eps,n)-\disjinfty$, there are $t+1$ players $P_1,\ldots,P_{t+1}$ with private coins in the standard blackboard model. 
For $s\in[t]$, each player $P_s$ receives a vector $\v_s\in\{0,1\}^n$ and player $P_{t+1}$ receives both an index $j\in[n]$ and a bit $c\in\{0,1\}$. 
For $\u=\sum_{s\in[t]}\v_s$, the inputs are promised to satisfy $u_i\le 1$ for each $i\neq j$ and either $u_j=1$ or $u_j=t$. 
With probability at least $\frac{9}{10}$, $P_{t+1}$ must differentiate between the three possible input cases:
\begin{enumerate}
\item $u_j+\frac{ct}{\eps}\le t$
\item $u_j+\frac{ct}{\eps}\in\left\{\frac{t}{\eps},\frac{t}{\eps}+1\right\}$
\item $u_j+\frac{ct}{\eps}=(1+\eps)\frac{t}{\eps}$,
\end{enumerate}
where $\eps\in(0,1)$. 
We call coordinate $j\in[n]$ the \emph{spike} location. 
\end{definition}

\textbf{Direct sum embedding.} 
We use the direct sum technique by showing that even in the case where $u_i\le 1$ for all $i\in[n]$, the information cost of the players is sufficiently high. 
We describe the embedding performed by each player to sample coordinates independently conditioned on the auxiliary variable $D$. 
We define the auxiliary variable $D=(D_1,\ldots,D_n)$ by first defining a distribution $\mu$ for $D_i$ and $v_{s,i}$, for each fixed coordinate $i\in[n]$ and across all players $s\in[t]$:
\begin{enumerate}
\item
$D_i\sim[t]$ uniformly at random, so that $D_i$ determines a player whose input bit will be randomized while the remaining players have input bit zero.
\item
Conditioned on $D_i=s$, then each player $P_r$ with $r\neq s$ sets $v_{r,i}=0$ while player $P_s$ independently and uniformly sets $v_{s,i}\in\{0,1\}$. 
\end{enumerate} 
We define the distribution $\zeta:=\mu^n$ so that the auxiliary random variable $D=(D_1,\ldots,D_n)$ is the vector whose $i$-th coordinate determines the player $P_{D_i}$ whose $i$-th bit $v_{D_i,i}$ in their input vector will vary, while the other players $P_s$ have $i$-th bit $v_{s,i}$ zero in their vectors, for $s\neq D_i$. 
Observe that under the distribution $\zeta$, we indeed have $u_i\le 1$ for each coordinate $i\in[n]$ of $\u=\sum_{s\in[t]}\v_s$. 
Finally, we fix the input $c=0$ to player $P_{t+1}$ and pick $j\in[n]$ according to any arbitrary distribution.  

We now show that $\zeta$ is a collapsing distribution for $(t,\eps,n)-\disjinfty$.  
\begin{restatable}{lemma}{lemzetacollapsing}
\lemlab{lem:zeta:collapsing}
For $c=0$ and arbitrary $j\in[n]$, $\zeta$ is a collapsing distribution for $(t,\eps,n)-\disjinfty$.
\end{restatable}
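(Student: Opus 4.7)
The plan is to verify the defining property of a collapsing distribution directly from the construction of $\zeta$, leveraging the observation that fixing $c=0$ makes the $(t,\eps,n)-\disjinfty$ function essentially oblivious to the location and value of the spike, and therefore trivially collapsible.

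First I will characterize the support of $\zeta$. By construction, for each coordinate $i\in[n]$, the auxiliary variable $D_i$ selects a unique player $P_{D_i}$ whose bit $v_{D_i,i}\in\{0,1\}$ is random while every other player has $v_{s,i}=0$. Hence every input $(\v_1,\ldots,\v_t)$ in the support of $\zeta$ satisfies $u_i=\sum_{s\in[t]}v_{s,i}\le 1$ at every coordinate. Since $c=0$, we have $u_j+ct/\eps=u_j\le 1\le t$, so such an input is always in case (1), as required for legality.

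Next I will examine the effect of embedding. Fix any coordinate $j\in[n]$ and any legal primitive input $(y_1,\ldots,y_t)\in\{0,1\}^t$. Writing $\v'_s:=\embed(\v_s,j,y_s)$, the non-$j$ coordinates are untouched so $u'_i\le 1$ for all $i\ne j$, while $u'_j=\sum_{s}y_s\in\{0,1,\ldots,t\}$. Because $c=0$, we have $u'_j+ct/\eps=u'_j\le t$, so the input to $f$ satisfies case (1) regardless of the specific bits $(y_1,\ldots,y_t)$. Thus $f(\v'_1,\ldots,\v'_t)$ is the constant output corresponding to case (1). On the primitive side, $h(y_1,\ldots,y_t)$ is the single-coordinate analogue with $c=0$: since $\sum_s y_s\le t$ always, $h$ also outputs case (1). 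The two sides therefore agree, so every input in the support of $\zeta$ is a collapsing input and $\zeta$ is a collapsing distribution.

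The main thing to be careful about is the sanity check that after embedding, no coordinate $i\ne j$ can accidentally produce a second ``spike'' that would violate the promise of the problem and give $h$ and $f$ different outputs; this is ruled out by the support property $u_i\le 1$ for all $i$ under $\zeta$. With $c\ne 0$ the argument would break because $f$ would then have to distinguish cases (2) and (3), and the constant-output collapse used above would fail; the restriction $c=0$ in the lemma statement is precisely what makes the primitive $h$ degenerate enough for the collapsing identity to hold.
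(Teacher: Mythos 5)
Your proof is correct and follows essentially the same route as the paper's: identify that every support point of $\zeta$ has $u_i\le 1$ in all coordinates, observe that embedding any legal primitive input at coordinate $j$ keeps the promise intact at the other coordinates, and note that with $c=0$ the value of $(t,\eps,n)-\disjinfty$ depends only on $u_j+ct/\eps$ and so coincides with the single-coordinate primitive. Your version is somewhat more explicit than the paper's (which compresses all of this into one sentence), and your closing remarks about why $c=0$ is essential and why no second spike can appear are accurate glosses rather than new content.
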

\begin{proof}
The distribution $\zeta$ only places mass at $0^t$ or the elementary vector $\e_{D_i}\in\{0,1\}^t$ for each $i\in[n]$. 
Since $c=0$, then for every input in the support of $\zeta$, we have
\[(t,\eps,n)-\disjinfty(\embed(\v_1,i,y_1),\ldots,\embed(\v_t,i,y_t))=\disjinfty(y_1,\ldots,y_t),\]
regardless of the value of $j$. 
Hence, $\zeta$ is a collapsing distribution for $(t,\eps,n)-\disjinfty$.
\end{proof}

We now prove the communication complexity of the $(t,\eps,n)$-player set disjointness estimation problem. 
\begin{restatable}{theorem}{thmdisjinfty}
\thmlab{thm:disjinfty}
The total communication complexity for the $(t,\eps,n)$-player set disjointness estimation problem is $\Omega\left(\frac{n}{t}\right)$. 
\end{restatable}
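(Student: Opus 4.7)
The plan is to apply the information-theoretic direct sum framework of \cite{Bar-YossefJKS04}, reducing the problem to a single-coordinate primitive and lower bounding the primitive's conditional information complexity via Hellinger distances.

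First, I would invoke Fact \factref{fact:cc:info} to bound the total communication cost by the conditional information cost $\IC_{\zeta,\delta}(f | D)$ of $f = (t,\eps,n)-\disjinfty$ under the distribution $\zeta = \mu^n$, where $\delta = 1/10$. The information cost decomposition lemma (Lemma \lemref{lem:decompose}) then splits this cost into a sum of $n$ per-coordinate terms $I(X_{1,i},\ldots,X_{t,i}; \Pi | D)$. Since $\zeta$ is collapsing by Lemma \lemref{lem:zeta:collapsing}, the reduction lemma (Lemma \lemref{lem:reduction}) bounds each per-coordinate term below by $\CIC_{\mu,\delta}(h)$, where $h$ is the single-coordinate primitive. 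Thus the total communication is at least $n \cdot \CIC_{\mu,\delta}(h)$, and it remains to show $\CIC_{\mu,\delta}(h) = \Omega(1/t)$.

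Under $\mu$ the auxiliary $D \in [t]$ is uniform and, conditioned on $D = s$, only the $s$-th player has a random bit in $\{0,1\}$ while the others have zero. Applying Fact \factref{fact:info:hellinger} conditionally and then averaging over $s$ gives
\begin{align*}
\CIC_{\mu,\delta}(h) \;\geq\; \frac{1}{2t}\sum_{s=1}^t\|\psi(0^t) - \psi(\e_s)\|_2^2.
\end{align*}
I would then apply Lemma \lemref{lem:disjoint:char} to the pairwise disjoint singletons $A_s = \{s\}$ (assuming for simplicity $t$ is a power of two) to bound this sum below by an absolute constant times $\|\psi(0^t) - \psi(1^t)\|_2^2$, since $\prod_{\ell=1}^{\log t}(1 - 2^{-\ell}) = \Omega(1)$.

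The main obstacle is invoking the soundness bound (Fact \factref{fact:soundness:hellinger}) to close out the Hellinger argument: this requires the protocol to produce different outputs on $0^t$ and $1^t$. The collapsing distribution fixed $c = 0$, under which both inputs yield Case 1, so soundness cannot be applied directly. The resolution is that any protocol for the primitive $h$ must succeed on every legal input to player $P_{t+1}$, so one may evaluate the Hellinger distance using $c = 1$ (with $j$ pointed at the chosen coordinate): then $0^t$ gives $u_j + t/\eps = t/\eps$ (Case 2), while $1^t$ gives $u_j + t/\eps = (1+\eps)t/\eps$ (Case 3), so the outputs differ and Fact \factref{fact:soundness:hellinger} yields $\|\psi(0^t) - \psi(1^t)\|_2^2 \geq 2(1-2\sqrt{\delta}) = \Omega(1)$. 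Chaining the inequalities gives $\CIC_{\mu,\delta}(h) = \Omega(1/t)$ and hence the claimed $\Omega(n/t)$ lower bound on total communication.
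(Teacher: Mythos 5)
Your proposal matches the paper's proof essentially step for step: \factref{fact:cc:info}, the decomposition via \lemref{lem:decompose}, the reduction to the single-coordinate primitive via \lemref{lem:zeta:collapsing} and \lemref{lem:reduction}, and the Hellinger-distance argument combining \factref{fact:info:hellinger}, \lemref{lem:disjoint:char} on the singletons $A_s=\{s\}$, and \factref{fact:soundness:hellinger}. The subtlety you flag about $c=0$ versus $c=1$ in the soundness step is resolved exactly as you describe; the paper handles it implicitly by observing that the protocol must be correct on every input to $P_{t+1}$ and that the transcript distribution is determined by the first $t$ players' inputs, so one may evaluate soundness with $c=1$ at the chosen coordinate, where $\emptyset$ and $[t]$ fall into different cases.
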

\begin{proof}
Let $\Pi$ be a protocol with minimum information cost that solves any input to $(t,\eps,1)-\disjinfty$ with probability at least $1-\delta$, where $\delta=\frac{1}{10}$. 
By setting $c=0$ and choosing $j\in[n]$ from any arbitrary distribution for the input to $P_{t+1}$, we have that $\zeta$ is a collapsing distribution for $(t,\eps,n)-\disjinfty$, by \lemref{lem:zeta:collapsing}. 
$\zeta$ also induces the auxiliary random variable $D=(D_1,\ldots,D_n)$ that determines the player $P_{D_i}$ whose $i$-th bit $v_{D_i,i}$ in their input vector will vary. 
 
We note that the messages from each player $P_s$ with $s\in[t]$ do not depend on the input of $P_{t+1}$. 
Moreover, the input of player $P_{t+1}$ is drawn from a distribution that is independent of the incoming message. 
Since our protocol is required to be correct on all inputs as opposed to some particular distribution, then it suffices to consider the protocol $\Pi$ only on the input of the first $t$ players, which we denote by $\Pi(\v_1,\ldots,\v_t)$. 
Thus by applying the direct sum technique on the embedding $\zeta=\mu^n$, we have from \factref{fact:cc:info} that 
\[R((t,\eps,n)-\disjinfty)\ge\IC_{\zeta,\delta}(\v_1,\ldots,\v_t;\Pi(\v_1,\ldots,\v_t|D)).\]
Since $\IC_{\zeta,\delta}(\v_1,\ldots,\v_t;\Pi(\v_1,\ldots,\v_t|D))\ge\sum_{i=1}^nI(v_{1,i},\ldots,v_{t,i};\Pi(v_{1,i},\ldots,v_{t,i})|D)$ by the decomposition of information cost in \lemref{lem:decompose}, we have that
\[R((t,\eps,n)-\disjinfty)\ge\sum_{i=1}^nI(v_{1,i},\ldots,v_{t,i};\Pi(v_{1,i},\ldots,v_{t,i})|D).\]
Notably, \lemref{lem:decompose} implicitly uses the chain rule for conditional entropy and the subadditivity of conditional entropy. 

\textbf{Decomposition into single coordinate problem.} 
Since $\zeta$ is a collapsing distribution for $c=0$ by \lemref{lem:zeta:collapsing}, then for the single coordinate problem $(t,\eps,1)-\disjinfty$ of $(t,\eps,n)-\disjinfty$,so that all vector inputs to each player have dimension one, we have 
\[\sum_{i=1}^nI(v_{1,i},\ldots,v_{t,i};\Pi(v_{1,i},\ldots,v_{t,i})|D)\ge n\cdot\CIC_\mu((t,\eps,1)-\disjinfty),\] 
by \lemref{lem:reduction}. 
Thus we can lower bound the communication complexity of $(t,\eps,n)-\disjinfty)$ by the single coordinate problem through
\[R((t,\eps,n)-\disjinfty)\ge n\cdot\CIC_\mu((t,\eps,1)-\disjinfty).\]
In particular, we have that the input distribution $\zeta$ for $(t,\eps,n)-\disjinfty$ has single coordinate distribution $\mu$ for $(t,\eps,1)-\disjinfty$. 
Thus it remains to show that $\CIC_\mu((t,\eps,1)-\disjinfty)=\Omega\left(\frac{1}{t}\right)$. 

\textbf{Complexity of single coordinate problem.} 
Without loss of generality, we consider the single coordinate problem by considering the first coordinates $v_{1,1},\ldots,v_{t,1}$ of the input vectors $\v_1,\ldots,\v_t$. 
By the definition of conditional information complexity, we have
\[\CIC_\mu((t,\eps,1)-\disjinfty)=I(v_{1,1},\ldots,v_{t,1};\Pi|D_1)\ge\frac{1}{t}\sum_{s\in[t]}I(v_{1,1},\ldots,v_{t,1};\Pi|D_1=s).\]
Conditioned on $D_1=s$, we have $v_{s,1}\in\{0,1\}$ uniformly at random and $v_{r,1}=0$ for $r\neq s$. 
Thus by \factref{fact:info:hellinger} relating mutual information to Hellinger distance, we have
\[\CIC_\mu((t,\eps,1)-\disjinfty)\ge\frac{1}{t}\sum_{s=1}^t\frac{1}{2}\|\psi(\emptyset)-\psi(\{s\})\|_2^2,\]
where we use $\emptyset$ to denote that $v_{s,1}=0$ and $\{s\}$ to denote that $v_{s,1}=1$. 
Instead of analyzing the sum of each of the squared Hellinger distances, we instead use \lemref{lem:disjoint:char} to analyze the squared Hellinger distance from $\emptyset$ to $[t]$, which represents that $v_{1,1}=\ldots=v_{t,1}=1$. 
For $t=2^k$, we set the parameter $s$ in \lemref{lem:disjoint:char} to $t$ and $A_i=\{i\}$, so that
\[\CIC_\mu((t,\eps,1)-\disjinfty)\ge\frac{1}{t}\left(\frac{1}{2}\|\psi(\emptyset)-\psi([t])\|_2^2\right)\cdot\prod_{\ell=1}^k\left(1-\frac{1}{2^\ell}\right).\]
Since $(t,\eps,1)-\disjinfty$ on inputs $\emptyset$ and $[t]$ achieve different outputs, then by \factref{fact:soundness:hellinger}, 
\[\frac{1}{2}\|\psi(\emptyset)-\psi([t])\|_2^2\ge1-2\sqrt{\delta},\]
where $\delta$ is the probability of failure. 
Similarly, we have $\prod_{\ell=1}^k\left(1-\frac{1}{2^\ell}\right)\ge\prod_{\ell=1}^\infty\left(1-\frac{1}{2^\ell}\right)\approx 0.28878$, by the definition of the digital search tree constant. 
Thus, we have that $\CIC_\mu((t,\eps,1)-\disjinfty)=\Omega\left(\frac{1}{t}\right)$, which implies $R((t,\eps,n)-\disjinfty)=\Omega\left(\frac{n}{t}\right)$.
\end{proof}

We remark that the communication complexity of \thmref{thm:disjinfty} matches the communication complexity of $t$-player set disjointness. 
However, since the problem requires correctness on all inputs, then we can distinguish between the possible input cases by focusing on the specific coordinate $j$ given to player $P_{t+1}$. 
By contrast, the reduction of \cite{Ganguly12} from $t$-player set disjointness requires an algorithm to ``test'' all coordinates $i\in[n]$ for the spike location. 
Thus the reduction requires that an $F_p$ moment estimation algorithm succeeds with probability $1-\frac{1}{\poly(n)}$, thereby losing a multiplicative $\O{\log n}$ factor and achieving $\Omega\left(\frac{n^{1-2/p}}{\eps^2 \log n} \right)$ in the space lower bound for $F_p$ moment estimation.  
Since our communication problem gives the specific spike location as input, our reduction only requires an $F_p$ moment estimation algorithm that succeeds with constant probability.
We remark that a similar technique was used in \cite{LiW13} for the $L_\infty$ promise problem. 

\textbf{Reduction.} 
Let $t=\Theta\left(\frac{1}{\eps}\cdot n^{1/p}\right)$. 
We reduce $(1+\eps)$-approximate $F_p$ moment estimation to an instance of $(t,\eps)-\disjinfty$ as follows. 
Let $\calA$ be any fixed randomized one-pass insertion-only streaming algorithm that outputs a $\left(1+\frac{\eps}{3}\right)$-approximation to the $F_p$ moment of the underlying frequency vector with probability at least $\frac{2}{3}$. 
Recall that the first $t$ players each receive a vector $\v_s$ with $s\in[t]$ and player $P_{t+1}$ receives both a special index $j\in[n]$ and a bit $c\in\{0,1\}$. 
\begin{itemize}
\item
For each $s\in[t]$, player $P_s$ takes the state of the algorithm $\calA$, inserts the coordinates of vector $\v_s$, and passes the state of the algorithm to player $P_{s+1}$. 
\item
Player $P_{t+1}$ takes the state of $\calA$, adds the vector $\frac{ct}{\eps}\cdot\e_j$, where $\e_j$ is the elementary vector with a one in position $j$ and zeros elsewhere, and then queries the state of $\calA$ to obtain a $\left(1+\frac{\eps}{3}\right)$-approximation to the $F_p$ moment of underlying frequency vector. 
\end{itemize}
\begin{restatable}{theorem}{thmlb}
\thmlab{thm:lb}
For any constant $p>2$ and parameter $\eps=\Omega\left(\frac{1}{n^{1/p}}\right)$, any one-pass insertion-only streaming algorithm that outputs a $(1+\eps)$-approximation to the $F_p$ moment of an underlying frequency vector with probability at least $\frac{9}{10}$ requires $\Omega\left(\frac{1}{\eps^2}\cdot n^{1-2/p}\right)$ bits of space. 
\end{restatable}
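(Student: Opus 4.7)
The plan is to instantiate $t = \Theta(\eps\,n^{1/p})$ in the reduction sketched just above the theorem, verify that a $(1+\eps/3)$-approximation from $\calA$ suffices for $P_{t+1}$ to decide the three cases of $(t,\eps,n)-\disjinfty$, and then invoke \thmref{thm:disjinfty} to convert the communication bound into a space bound.

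First I would compute the $F_p$ value that $\calA$ sees at the end of the reduction. Writing $s := \sum_{i \neq j} u_i \le n-1$, the vector fed to $\calA$ is $\u + (ct/\eps)\e_j$, so $F_p = s + (u_j + ct/\eps)^p$. Explicitly, case~1 ($c=0$) yields $F_p \in \{s+1,\,s+t^p\}$, case~2 ($c=1,\ u_j=1$) yields $F_p = s + (t/\eps + 1)^p$, and case~3 ($c=1,\ u_j=t$) yields $F_p = s + (1+\eps)^p(t/\eps)^p$.

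Second I would pick $t = C\,\eps\, n^{1/p}$ for a constant $C = C(p)$ chosen large enough. Then $(t/\eps)^p = C^p n$ dominates $s \le n$ by the factor $C^p$, while $t^p = C^p \eps^p n \ll n$, so case~1 satisfies $F_p \le (1 + C^p \eps^p)n$, case~2 has $F_p \approx C^p n$, and case~3 has $F_p \approx (1+\eps)^p C^p n$. The case~3-to-case~2 ratio is at least $(1+\eps)^p(1-o(1)) \ge 1 + (p-o(1))\eps$, which exceeds $(1+\eps/3)^2 \approx 1 + 2\eps/3$ since $p > 2$, so the $(1+\eps/3)$-approximation intervals around the case~2 and case~3 values are disjoint; the case~2-to-case~1 ratio is $\Omega(C^p)$, which is comfortably distinguishable. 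Running $\calA$ at accuracy $\eps/3$ (a constant-factor rescaling absorbed by the $\Omega$) and boosting its failure probability from $1/3$ to $1/10$ via a constant number of independent parallel repetitions therefore lets $P_{t+1}$ output the correct case with probability at least $9/10$, solving $(t,\eps,n)-\disjinfty$.

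Third I would convert this protocol into a space lower bound. The only messages placed on the blackboard are the $t$ successive copies of $\calA$'s state, each of size $S$ bits, passed from $P_s$ to $P_{s+1}$ for $s \in [t]$; the computations of $P_{t+1}$ are local and its output is $O(1)$ bits. By \thmref{thm:disjinfty} the transcript has total length $\Omega(n/t)$, whence $tS = \Omega(n/t)$, giving
\[
S \;=\; \Omega(n/t^2) \;=\; \Omega\bigl(n^{1-2/p}/\eps^2\bigr),
\]
as required; the hypothesis $\eps = \Omega(n^{-1/p})$ guarantees $t \ge 1$ and hence non-triviality. The main obstacle is the separation between cases~2 and~3, whose $F_p$ values differ only by the factor $(1+\eps)^p \approx 1 + p\eps$; forcing this small gap to survive the two-sided error of the approximation relies crucially on both $p > 2$ and the choice of $t$ large enough relative to $\eps n^{1/p}$ so that the background mass $s$ cannot counterfeit the spike contribution.
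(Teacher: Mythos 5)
Your overall architecture --- reduce from $(t,\eps,n)-\disjinfty$ by threading the sketch state through the $t+1$ players and then invoke \thmref{thm:disjinfty} to turn $\Omega(n/t)$ total communication into an $\Omega(n/t^2)$ space bound --- is exactly the paper's. The genuine gap is your choice $t=\Theta(\eps\,n^{1/p})$, which breaks the step you yourself flag as the crux: separating cases~2 and~3. With $t=C\eps n^{1/p}$ the spike contributes $(t/\eps)^p=C^pn$ in case~2 and $(1+\eps)^pC^pn$ in case~3, so the \emph{absolute} gap between the two spike contributions is only about $p\eps C^pn$, while the unknown background mass $s=\sum_{i\neq j}u_i$ ranges over $[0,n-1]$ and is the same order, $n$. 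For any constant $C=C(p)$ and $\eps<1/(pC^p)$, a case-2 input with $s=n-1$ has strictly larger $F_p$ than a case-3 input with $s=0$: the two ranges of attainable $F_p$ values overlap outright, so not even an exact $F_p$ oracle, let alone a $(1+\eps/3)$-approximation, lets $P_{t+1}$ decide. Concretely, your ratio of case-3 minimum to case-2 maximum is $\frac{(1+\eps)^p}{1+C^{-p}}$ (up to lower-order terms); the ``$1-o(1)$'' in your write-up is really the factor $\frac{1}{1+C^{-p}}$, which is $1-\Theta(1)$ rather than $1-o(\eps)$, and the ratio drops below $1$ once $\eps$ is small. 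Since the theorem is interesting precisely for small $\eps$ (down to $n^{-1/p}$), this is fatal as written.

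For the spike to dominate the background by the needed $1/\eps$ factor one must have $(t/\eps)^p=\Omega(n/\eps)$, i.e.\ $t=\Omega\bigl(\eps^{1-1/p}n^{1/p}\bigr)$, which is strictly larger than your $t$. The paper instead takes $t=\Theta\bigl(\frac{1}{\eps}n^{1/p}\bigr)$, for which $(t/\eps)^p=\Theta(n/\eps^{2p})\gg n/\eps$, so the background $F_0\le n$ is negligible and the three cases really are $(1+\Omega(\eps))$-separated. Your final sentence correctly identifies the tension between (i) taking $t$ small enough that $n/t^2=n^{1-2/p}/\eps^2$ and (ii) taking $t$ large enough that the background cannot counterfeit the spike, but your proposal silently resolves it in favor of (i) at the expense of (ii). Be aware that the resolution is not cosmetic: substituting a $t$ that satisfies (ii) back into $n/t^2$ changes the $\eps$-dependence of the conclusion, so reconciling (i) and (ii) is the substantive content of this step and cannot be absorbed into ``choose $C$ large enough'' or into a rescaling of $\eps$.
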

\begin{proof}
Player $P_{t+1}$ receives the state of $\calA$ on the input $\u=\sum_{s\in[t]}\v_s$ and induces a frequency vector $\x:=\u+\frac{ct}{\eps}\cdot\e_j$. 
Recall that the inputs are promised to satisfy $u_i\le 1$ for each $i\neq j$ and either $u_j=1$ or $u_j=t$. 
If $c=0$, then $\x=\u$ so that for a constant $C>0$ and $t=\frac{C}{\eps}\cdot n^{1/p}$,  
\[\|\x\|^p_p\le F_0+t^p\le n+\frac{C^p}{\eps^p}\cdot n.\]
If $c=1$ and $u_j\le 1$, then for $t=\frac{C}{\eps}\cdot n^{1/p}$, we have $\|\x\|^p_p\ge\left(\frac{t}{\eps}\right)^p=\frac{C^p}{\eps^{2p}}\cdot n$ and thus
\[\|\x\|^p_p\le F_0+\left(1+\frac{t}{\eps}\right)^p\le n+p+\frac{p\cdot C^p}{\eps^{2p}}\cdot n.\]
Finally, if $c=1$ and $u_j=t$, then 
\[\|\x\|^p_p\ge\left(t+\frac{t}{\eps}\right)^p=\left(1+\frac{1}{\eps}\right)^p\cdot\frac{C^p}{\eps^{2p}}\cdot n.\]
For sufficiently small $\eps\in(0,1)$ with $\eps=\Omega\left(\frac{1}{n^{1/p}}\right)$ and constant $p>2$, there exists a constant $C>0$ such that these three cases are separated by a multiplicative $(1+\eps)$. 
Since $\calA$ outputs a $\left(1+\frac{\eps}{3}\right)$-approximation to the $F_p$ moment of the underlying frequency vector, then player $P_{t+1}$ obtains a $\left(1+\frac{\eps}{3}\right)$-approximation to $\|\x\|_p^p$ and can differentiate between the three cases, thus solving $(t,\eps)-\disjinfty$ with probability at least $\frac{9}{10}$. 
By \thmref{thm:disjinfty}, $\calA$ uses $\Omega\left(\frac{n}{t}\right)$ total communication across the $t+1$ players. 
Therefore, the space complexity of $\calA$ is $\Omega\left(\frac{n}{t^2}\right)=\Omega\left(\frac{1}{\eps^2}\cdot n^{1-2/p}\right)$ for $t=\Theta\left(\frac{1}{\eps}\cdot n^{1/p}\right)$. 
The result then follows from rescaling $\eps$. 
\end{proof}

\section*{Acknowledgements}
The authors would like to thank support from NSF grant No. CCF-181584 and a Simons Investigator Award and thank Krzysztof Onak for pointing toward the reference \cite{AndoniMOP08}.

\def\shortbib{0}
\bibliographystyle{alpha}
\bibliography{references}

\appendix
\section{Approximately Counting Heavy-Hitters}
\applab{sec:counthh}
In this section, we describe the procedure $\counthh$ introduced by \cite{BravermanGW20} for approximating the frequencies of the $L_2$-heavy hitters in a random-order insertion-only stream. 
Their algorithm partitions the updates of a random-order stream into blocks of updates. 
They then randomly choose a number of different coordinates from the universe $[n]$ to test in each block. 
If there is an update to a coordinate that is tested within a block, then the coordinate is said to be \emph{excited}. 
Once an item is excited, the algorithm then explicitly tracks updates to the item across a number of subsequent blocks. 
If the algorithm sees that the stream consistently updates a tracked item, then the item is promoted to a heavy-hitter, and its frequency remains tracked over a longer number of blocks. 
The frequency over the longer number of blocks is then scaled up to provide an accurate estimation of the frequency of the heavy-hitter in the entire stream. 
Otherwise if a tracked item is not consistently updated by the stream, then it was likely a false positive due to the randomness of the stream and the hash functions, and it is no longer tracked by the algorithm. 
Any heavy-hitter on the stream has sufficiently high probability of being updated in a block in which it is tested. 
Thus all heavy-hitters will become excited with ``good'' probability and become tracked by the algorithm. 

The algorithm heavily relies on the fact that the stream is random-order and insertion-only. 
In streams that are not insertion-only, any initially reported heavy-hitter can simply be erased later in the stream. 
In streams that are not random-order, the distribution of the heavy-hitters may not be uniform, so they may not become excited in a block in which they are tested. 
Similarly, the distribution of the heavy-hitters in the blocks for which they are tracked must be representative of the overall frequency of the heavy-hitters due to the random-order model. 

\cite{BravermanGW20} showed that $\O{\frac{1}{\eps^2}}$ items are tracked at a time. 
We wish to report an approximate frequency to a heavy-hitter with accuracy $(1+\eps)$.
Thus, each heavy-hitter is tracked for a small number of blocks, so that the counter for the heavy-hitter across these blocks can be represented using $\O{\log\frac{1}{\eps}+\log\log n}$ bits. 
However, the number of blocks is sufficiently large so that the tracking provides a $(1+\eps)$-approximation to the frequency of the item across the entire stream, due to the uniformity properites of the random-order stream. 

We first give the subroutine $\counthhsub$ from \cite{BravermanGW20} for approximating the number of heavy-hitters in \algref{alg:counthh:sub}. 

\begin{algorithm}[!htb]
\caption{$\counthhsub$: Subroutine for counting the number of heavy-hitters \cite{BravermanGW20}}
\alglab{alg:counthh:sub}
\begin{algorithmic}[1]
\Require{Random order $a_1,\ldots,a_m\in[n]$ of stream of updates to coordinates of an underlying frequency vector, threshold parameter $\eps\in(0,1)$, stream length $m$ and $2$-approximation $\tilde{F_2}$ to $F_2$.}
\Ensure{Estimated frequencies to $L_2$-heavy hitters.}
\State{$H\gets\emptyset$}
\State{Partition the stream into $t=\eps\sqrt{\tilde{F_2}}$ contiguous blocks, $B_1,\ldots,B_t$, each of length $\frac{F_1}{\eps\sqrt{\tilde{F_2}}}$.}
\State{Let $h^1,\ldots,h^{10\ln(1/\eps)}:[n]\to[t]$ and $g:[n]\to[200\eps^{-26}\ln^4(dm)]$ be independent pairwise hash functions.}
\While{processing block $B_i$}
\State{$S^i\gets\emptyset$, $e^i=0$, initialize $c^i$ to the all $0$'s vector.}
\State{Process all items $a_j\in B_i$.}
\While{processing item $a_j$}
\If{$e^i=0$ and $h^q(a_j)=i$ for some $q\in[10\ln(1/\eps)]$}
\State{$S^i\gets S^i\cup\{g(a_j)\}$.}
\EndIf
\If{$|S^i|>100\eps^{-2}\ln(1/\eps)$}
\State{$e^i\gets 1$}
\EndIf
\If{$g(a_j)\in S^{i-1}$, $h^q(a_j)=i-1$ for some $q\in[10\ln(1/\eps)]$ and $e^{i-1}=0$}
\State{$c^{i-1}(g(a_j))=c^{i-1}(g(a_j))+1$}
\EndIf
\EndWhile
\If{$e^{i-1}=0$ and there is a unique $k\in S^{i-1}$ with $c^{i-1}(k)=1$}
\If{there is no uncompleted $\identify$ instance already running and $|H|\le2\eps^{-2}$}
\State{$H\gets H\cup\identify(\{h^q\}_q,g,i,k,m,a_{iF_1/(\eps\sqrt{\tilde{F_p}})+1},\ldots,a_m)$}
\EndIf
\EndIf
\EndWhile
\State{\Return $|H|$}
\end{algorithmic}
\end{algorithm}

\begin{algorithm}[!htb]
\caption{Algorithm for identifying heavy-hitters \cite{BravermanGW20}}
\alglab{alg:identify}
\begin{algorithmic}[1]
\Require{Hash functions $\{h^q\}_q$, hash function $g$, block number $i$, tracking hash value $k$, stream length $m$, stream updates $a_{iF_1/(\eps\sqrt{\tilde{F_p}})+1},\ldots,a_m$.}
\Ensure{Set of heavy-hitters}
\State{$\id\gets\emptyset$, $\ell\gets 1$}
\While{$\id\neq\emptyset$ or $\ell>100\ln(1/\eps)$}
\If{$|\{j\in B_{i+\ell}|g(j)=k\text{ and }h^q(j)=i-1\text{ for some }q\in[10\ln(1/\eps)]\}|=1$}
\State{$\id\gets j$}
\Else
\State{$\ell\gets\ell+1$}
\EndIf
\EndWhile
\If{$\id=\emptyset$}
\State{\Return $\emptyset$}
\EndIf
\State{$p=4000\eps^{-2}\ln m$}
\State{Let $\xi$ be the number of occurrences of $\id$ in the multiset $\uplus_{v=1}^p B_{i+\ell+v}$.}
\If{$\xi\ge(1.01/\sqrt{2})p$}
\State{$\widehat{f_{\id}}=\frac{\eps\sqrt{\tilde{F_p}}}{p}\cdot\xi$}
\State{\Return $(\id, \widehat{f_{\id}})$}
\Else
\State{\Return $\emptyset$}
\EndIf
\end{algorithmic}
\end{algorithm}

\begin{algorithm}[!htb]
\caption{$\counthh$: Algorithm for counting the number of heavy-hitters \cite{BravermanGW20}}
\alglab{alg:counthh}
\begin{algorithmic}[1]
\Require{Random order $a_1,\ldots,a_m\in[n]$ of stream of updates to coordinates of an underlying frequency vector, threshold parameter $\eps\in(0,1)$, stream length $m$ and $2$-approximation $\tilde{F_p}$ to $F_p$.}
\Ensure{Estimated frequencies to $L_2$-heavy hitters.}
\State{Let $\widehat{F_2^{(t)}}$ be the output of an $F_2$-sketch at time $t$.}
\For{each update $a_t$ with $t\in[m]$}
\If{$t=2^j-1$ for some integer $j\ge 0$}
\State{Let $H_j$ be the output of $\counthhsub_j$ with parameter $\frac{\eps}{10}$.}
\State{Initialize instance $\counthhsub_{j+1}(1.01\cdot\widehat{F_2^{(t)}},2^{j+1},a_{t+1},\ldots,a_{t+2^{j+1}})$ with parameter $\frac{\eps}{10}$.}
\State{Discard $H_{j-1}$ and $\counthhsub_{j-1}$.}
\EndIf
\EndFor
\State{Let $j$ be the largest index with an instance $\counthhsub_j$.}
\State{\Return $\left(i,\frac{m}{2^j},\widehat{x_i}\right)$ in $H_{j-1}$ for which $\frac{m}{2^j}\cdot\widehat{x_i}\ge\left(1-\frac{\eps}{5}\right)\eps\cdot\frac{\sqrt{\widehat{F_2^{(m)}}}}{1.01}$.}
\end{algorithmic}
\end{algorithm}

We now recall a number of key properties shown in \cite{BravermanGW20}. 


\begin{lemma}[No false positives by $\identify$, Lemma 4 in~\cite{BravermanGW20}]
\lemlab{lem:identify:no:small}
With probability at least $1-1/m$, no instance of $\identify$ reports an element $j\neq\emptyset$ with $f_j^2\le\frac{\eps^2}{2}\cdot F_2$.
\end{lemma}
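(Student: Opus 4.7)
The plan is to fix an item $j$ with $f_j^2\le\frac{\eps^2}{2}F_2$ and a single invocation of $\identify$, and to show that the event ``$\identify$ outputs $j$'' happens with probability at most $1/m^{\O{1}}$. The only way $\identify$ can actually return $j$ (rather than $\emptyset$) is via the final check $\xi\ge(1.01/\sqrt{2})p$, where $\xi$ is the number of occurrences of $j$ in the multiset $\uplus_{v=1}^p B_{i+\ell+v}$ and $p=4000\eps^{-2}\ln m$. So it suffices to upper bound the probability that $\xi$ reaches this threshold, uniformly over the choice of block index $i+\ell$ and the conditioning induced by whatever caused $\identify$ to be invoked on $j$ in the first place.

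First I would analyze $\Ex{\xi}$ under the random-order assumption. Each block has length $F_1/(\eps\sqrt{\tilde F_2})$, so $p$ blocks contain $pF_1/(\eps\sqrt{\tilde F_2})$ stream positions, and conditioning on the random order, the number of occurrences of $j$ among these positions is a hypergeometric random variable with mean at most $f_j\cdot p/(\eps\sqrt{\tilde F_2})$. Substituting $f_j\le(\eps/\sqrt{2})\sqrt{F_2}$ and using that $\tilde F_2$ is a $2$-approximation to $F_2$ (so $\sqrt{F_2/\tilde F_2}\le 1.005$ once $\tilde F_2$ is taken to be the $(1.01)$-inflated sketch value used in $\counthh$, or one can simply take the tighter $F_2$-approximation guaranteed at this point in the stream), we obtain $\Ex{\xi}\le\frac{p}{\sqrt{2}}\cdot(1+o(1))$, strictly below the threshold $(1.01/\sqrt{2})p$ by a constant multiplicative gap.

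Next I would apply a concentration inequality. Hypergeometric random variables satisfy the same Chernoff bounds as sums of independent Bernoullis (Hoeffding), so $\PPr{\xi\ge(1.01/\sqrt{2})p}\le\exp(-\Omega(p))=\exp(-\Omega(\eps^{-2}\ln m))$, which for the constant $4000$ is comfortably at most $1/m^{C}$ for an arbitrarily large constant $C$. The conditioning that led to the $\identify$ call on coordinate $j$ (namely that $j$ was the unique element of $S^{i-1}$ with $c^{i-1}(g(j))=1$, and the timing of when $\id$ became non-empty) only restricts the stream order within the earlier blocks $B_1,\ldots,B_{i+\ell}$; the random-order distribution on the remaining suffix stays uniform conditional on the multiset of remaining updates, so the hypergeometric analysis above applies to the suffix counts of $j$ in $B_{i+\ell+1},\ldots,B_{i+\ell+p}$.

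Finally I would union bound. At most $\poly(m)$ coordinates can ever appear in the stream, and for each coordinate there are at most $m$ possible block positions at which an $\identify$ instance on it could be started. Union bounding the $\exp(-\Omega(\eps^{-2}\ln m))$ failure probability over these $\poly(m)$ events yields overall failure probability at most $1/m$, as claimed. The main obstacle I anticipate is handling the conditioning cleanly: one must argue that the random order of updates within the $p$ ``test'' blocks remains sufficiently uniform after conditioning on the event that triggered the $\identify$ call, so that the Chernoff bound on $\xi$ is legitimate; this is typically done by exposing the partition of positions assigned to each block and then invoking exchangeability within the suffix, but the bookkeeping with the excitation event $e^{i-1}=0$ and the uniqueness condition requires care.
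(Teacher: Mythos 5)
A preliminary point about what you are being compared against: this paper does not prove \lemref{lem:identify:no:small} at all --- it is imported verbatim as Lemma 4 of \cite{BravermanGW20}, and the appendix only restates it. So your proposal is a from-scratch reconstruction of the cited paper's argument rather than an alternative to anything proved here. The skeleton you chose is the right one and is surely what the cited proof does: the only exit path of $\identify$ that reports $j$ is the test $\xi\ge(1.01/\sqrt{2})p$; bound $\Ex{\xi}$ via the hypergeometric count of $j$'s occurrences in a window of $p$ blocks; concentrate; and union bound over all coordinates and all candidate windows, which (as you note at the end) sidesteps the conditioning on the excitation/uniqueness trigger entirely, since the bad event is contained in the union of the unconditional window events.

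Two load-bearing points remain unresolved, and both sit exactly where you wave your hands. First, the direction of the $\tilde F_2$ approximation is not cosmetic: $\Ex{\xi}\le\frac{p}{\sqrt{2}}\cdot\sqrt{F_2/\tilde F_2}$, so if $\tilde F_2$ is merely a $2$-approximation that may \emph{underestimate} $F_2$, the expectation can be as large as $p$, which exceeds the acceptance threshold $(1.01/\sqrt{2})p\approx 0.714p$; a light item just below the $\frac{\eps^2}{2}F_2$ boundary would then be accepted with high probability and the lemma would be false for your argument. The proof must establish that at the time the counting window is processed $\tilde F_2\ge F_2$ (this is what the $1.01$ inflation and the doubling schedule in $\counthh$ are for), and that is precisely the step you defer. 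Second, even granting $\tilde F_2\ge F_2$, for the worst-case item with $f_j^2$ just below $\frac{\eps^2}{2}F_2$ the multiplicative gap between $\Ex{\xi}$ and the threshold is only $1.01$, so the Chernoff/Hoeffding exponent is roughly $\frac{(0.01)^2}{3}\cdot\frac{p}{\sqrt{2}}\approx 0.1\,\eps^{-2}\ln m$; the failure probability is $m^{-\Theta(\eps^{-2})}$ with a small constant in the exponent, not $m^{-C}$ for ``arbitrarily large $C$.'' For $\eps$ close to $1$ this does not survive a union bound over $\poly(m)$ coordinate--window pairs, so closing the argument requires either $\eps$ bounded below a fixed constant, a larger gap supplied by the precise $\tilde F_2$ guarantee, or different constants from the ones you quote. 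These are exactly the details one would need to pull from \cite{BravermanGW20} to make the reconstruction complete.
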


\begin{corollary}[$|H|$ is small, Corollary 29 in~\cite{BravermanGW20}]
\corlab{cor:small:H}
With probability at least $1-1/m$, we have $|H|\le 2\eps^{-2}$ at all times during the stream.
\end{corollary}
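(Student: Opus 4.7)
\textbf{Proof plan for \corref{cor:small:H}.} The plan is to combine the no-false-positives guarantee of \lemref{lem:identify:no:small} with a simple counting argument on the number of $L_2$-heavy hitters. Let $\mathcal{E}$ denote the event, occurring with probability at least $1 - 1/m$ by \lemref{lem:identify:no:small}, that no invocation of $\identify$ during the stream ever returns an element $j$ with $f_j^2 \le \frac{\eps^2}{2} \cdot F_2$. I would condition on $\mathcal{E}$ for the remainder of the argument.

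Next, I would unpack the update rule for $H$ in \algref{alg:counthh:sub}: elements enter $H$ only through the union $H \gets H \cup \identify(\ldots)$, so $H$ is genuinely a set and duplicate identifications of the same coordinate do not increase $|H|$. Under $\mathcal{E}$, every element $j \in H$ must therefore satisfy $f_j^2 > \frac{\eps^2}{2}\cdot F_2$, since the only way $j$ could have been inserted is via a successful $\identify$ call.

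Finally, I would apply a direct counting bound: since $\sum_{k \in [n]} f_k^2 = F_2$, the number of coordinates $j$ with $f_j^2 > \frac{\eps^2}{2} \cdot F_2$ is strictly less than $\frac{2}{\eps^2}$. Combined with the previous paragraph, this yields $|H| \le 2\eps^{-2}$ at every point in the stream, on the event $\mathcal{E}$, and hence with probability at least $1 - 1/m$.

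The main (and really only) subtlety is ensuring that $H$ is treated as a set rather than a multiset, so that repeat reports of the same heavy-hitter do not inflate $|H|$; this is immediate from the pseudocode. Beyond this, the proof is essentially a one-line reduction to \lemref{lem:identify:no:small} together with the standard observation that there can be at most $2/\eps^2$ coordinates whose squared frequency exceeds an $\eps^2/2$-fraction of $F_2$.
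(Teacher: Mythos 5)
Your argument is correct and is exactly the intended derivation: the paper imports this corollary from \cite{BravermanGW20} without proof, and it follows immediately from \lemref{lem:identify:no:small} together with the observation that fewer than $2/\eps^2$ coordinates $j$ can satisfy $f_j^2>\frac{\eps^2}{2}\cdot F_2$, since these squared frequencies sum to at most $F_2$. The one bookkeeping caveat is that $H$ as transcribed stores pairs $(j,\widehat{f_j})$ rather than bare indices, so strictly one should argue that each heavy coordinate contributes at most one pair (or count distinct identities), but this does not affect the substance of the proof.
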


\begin{lemma}[Heavy-hitters in consecutive blocks, Lemma 5 in~\cite{BravermanGW20}]
Let $HH$ be the set of items $j$ with $f_j^2\ge\eps^2\cdot F_p$.
With probability at least $1-1/\sqrt{m}$, we simultaneously have for all $j\in HH$ and every sequence of $4000\eps^{-2}\ln m$ consecutive blocks, $B_{i+1},\ldots,B_{i+4000\eps^{-2}\ln m}$, that coordinate $j$ appears in at least $4000(1.01/\sqrt{2})\eps^{-2}\ln m$ of these blocks.	
\end{lemma}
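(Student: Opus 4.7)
The plan is to prove the statement by concentration in the random-permutation model together with a union bound over heavy-hitters and window placements. Fix a heavy-hitter $j\in HH$ and a window of $N:=4000\eps^{-2}\ln m$ consecutive blocks $B_{i+1},\dots,B_{i+N}$. For each $r\in[N]$ let $Y_r:=\mathbb{1}[j\in B_{i+r}]$ and let $Z:=\sum_{r=1}^N Y_r$ be the number of \emph{distinct} window blocks in which $j$ appears. The goal is to show $\PPr{Z<(1.01/\sqrt{2})N}\le m^{-\Omega(1)}$ uniformly over $j$ and $i$, and then union bound.

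First I would compute $\Ex{Z}$. Under the random-order model the $f_j$ copies of $j$ occupy a uniformly random $f_j$-subset of the $m$ stream positions, so $\PPr{Y_r=1}=1-\binom{m-L}{f_j}\big/\binom{m}{f_j}$ for block length $L=m/t$. Using $f_j\ge\eps\sqrt{F_2}$ together with the specific parameter setup of \algref{alg:counthh}, where $\tilde F_2=1.01\cdot\widehat{F_2^{(t)}}$ is a very tight upper estimate of $F_2$ rather than a generic $2$-approximation, the ratio $f_j/t$ is bounded below by a constant strictly exceeding the threshold needed to push $\Ex{Y_r}$ above $1.01/\sqrt{2}+c_0$ for some absolute constant $c_0>0$, so that $\Ex{Z}\ge(1.01/\sqrt{2}+c_0)N$.

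Next I would establish concentration of $Z$ around $\Ex{Z}$. The indicators $Y_r$ are not independent because they couple through the uniform random permutation of stream positions, but swapping any two adjacent stream positions changes at most two of the $Y_r$, so $Z$ has bounded differences of magnitude $2$ as a function of the permutation. Applying McDiarmid-type inequalities for random permutations (e.g., the method of bounded differences for sampling without replacement, or Chatterjee's concentration for permutations) yields $\PPr{Z\le\Ex{Z}-c_0 N}\le 2\exp(-\Omega(c_0^2 N))=m^{-\Omega(c_0^2/\eps^2)}$ once $N=4000\eps^{-2}\ln m$ is substituted and the prefactor $4000$ is absorbed into the exponent.

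Finally, I would union bound. Since $\sum_j f_j^2\le F_2$ and each heavy-hitter contributes at least $\eps^2 F_2$, we have $|HH|\le 1/\eps^2$, and the number of window starting positions is at most $t\le\poly(m)$. The total failure probability is therefore at most $\eps^{-2}\cdot\poly(m)\cdot m^{-\Omega(c_0^2/\eps^2)}\le m^{-1/2}$ once the prefactor in $N$ is chosen large enough. The main obstacle will be controlling $\Ex{Y_r}$ in the mean computation: the gap to the target fraction $1.01/\sqrt{2}\approx 0.714$ is very narrow at the worst-case threshold $f_j=\eps\sqrt{F_2}$, and indeed with only a generic $2$-approximation one gets $\Ex{Y_r}\ge 1-e^{-1/\sqrt{2}}\approx 0.507$, which lies \emph{below} the claimed target. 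The proof must therefore crucially leverage the tight sketching guarantees of the outer algorithm (both the $1.01$ rescaling and the restart at each $t=2^j-1$) so that $\tilde F_2/F_2$ is close to $1$ rather than $2$ at the time the heavy-hitter is analyzed, producing the slack $c_0>0$ that the permutation McDiarmid bound can then amplify to the required high-probability statement about distinct blocks.
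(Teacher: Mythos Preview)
The paper does not prove this lemma; it is quoted verbatim from \cite{BravermanGW20} (as the bracketed attribution ``Lemma 5 in \cite{BravermanGW20}'' indicates), and no argument for it appears anywhere in the present paper. So there is no ``paper's own proof'' to compare against; your proposal is being evaluated on its own merits and against what one would expect the original source to do.

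Your overall plan---compute the mean of $Z$, establish sub-Gaussian concentration in the random-permutation model, then union bound over at most $\eps^{-2}$ heavy-hitters and $\poly(m)$ window placements---is the natural route and is almost certainly what \cite{BravermanGW20} does. You also put your finger on the genuinely delicate point: with only a generic $2$-approximation $\tilde F_2\in[F_2,2F_2]$ and a bare threshold $f_j\ge\eps\sqrt{F_2}$, the expected per-block hit probability $1-\exp(-f_j/(\eps\sqrt{\tilde F_2}))$ bottoms out near $1-e^{-1/\sqrt2}\approx 0.507$, strictly below the target fraction $1.01/\sqrt2\approx 0.714$, so no amount of concentration rescues the claim as literally stated for $\counthhsub$ in isolation. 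Your proposed fix---that the lemma is only ever invoked inside $\counthh$, where the restarts at $t=2^j-1$ and the $1.01$-rescaling make $\tilde F_2/F_2$ close to $1$ rather than $2$---is exactly the right reading, and is the content that would have to be imported from \cite{BravermanGW20}.

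One technical caution on your concentration step: the permutation-McDiarmid bound you sketch, with bounded difference $2$ over swaps of adjacent stream positions, yields tails of the form $\exp(-\Omega(t^2/m))$, which is far too weak here since $N=\Theta(\eps^{-2}\ln m)\ll\sqrt m$. You should instead view the $f_j$ positions of item $j$ as a uniform $f_j$-subset of $[m]$ and argue concentration of $Z$ directly as a function of that subset (e.g., via negative association of the block-occupancy indicators, or via a hypergeometric/Chernoff bound on the total number of occurrences in the window followed by a separate control on multi-hit blocks). That gives the required $\exp(-\Omega(N))=m^{-\Omega(\eps^{-2})}$ tail and makes the union bound go through.
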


\begin{lemma}[Accuracy of heavy-hitter estimations, Lemma 6 in~\cite{BravermanGW20}]
\lemlab{lem:identify:accuracy}
With probability at least $1-1/m^{98}$, we simultaneously have for all $j$ with $f_j^2\ge\frac{\eps^2}{2}\cdot F_2$ and every sequence of $p:=4000\eps^{-2}\ln m$ consecutive blocks, the number of occurrences $\xi$ of $j$ in $\uplus_{v=1}^p B_{i+\ell+v}$ satisfies $\xi\in(1\pm\eps)\cdot\frac{p\cdot f_j}{\eps\sqrt{\tilde{F_2}}}$. 
\end{lemma}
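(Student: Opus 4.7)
The plan is to show that for each fixed heavy-hitter $j$ and each fixed window of $p$ consecutive blocks, the count $\xi$ is a sum of (strongly dependent) $0/1$ indicators whose distribution is hypergeometric, apply a Chernoff-type tail bound for sampling without replacement, and then union bound over the polynomially many choices of $j$ and window.

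First I would fix $j$ with $f_j^2 \ge \frac{\eps^2}{2}F_2$ and a starting index $i+\ell$, and describe $\xi$ precisely. Under the random-order model, the $m = F_1$ stream positions are a uniformly random permutation of the multiset of updates, so the positions occupied by the $f_j$ copies of coordinate $j$ form a uniformly random size-$f_j$ subset of $[m]$. A window of $p$ consecutive blocks covers exactly $w := p \cdot \frac{F_1}{\eps\sqrt{\tilde F_2}}$ positions, so $\xi$ is hypergeometric with parameters $(m, f_j, w)$. Its mean is
\[
\mu \;:=\; \Ex{\xi} \;=\; f_j \cdot \frac{w}{m} \;=\; \frac{p \cdot f_j}{\eps\sqrt{\tilde F_2}},
\]
which is exactly the quantity appearing in the statement.

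Next I would lower bound $\mu$ to enable a sharp concentration inequality. Since $\tilde F_2$ is a $2$-approximation to $F_2$, we have $f_j \ge \frac{\eps}{\sqrt{2}}\sqrt{F_2} \ge \frac{\eps}{2}\sqrt{\tilde F_2}$, so
\[
\mu \;\ge\; \frac{p}{\eps\sqrt{\tilde F_2}} \cdot \frac{\eps}{2}\sqrt{\tilde F_2} \;=\; \frac{p}{2} \;=\; 2000\,\eps^{-2}\ln m.
\]
Then I would invoke the standard Chernoff bound for sampling without replacement (Hoeffding/Serfling), which gives the same multiplicative tail bound as in the i.i.d.\ Bernoulli case:
\[
\PPr{|\xi - \mu| > \eps\mu} \;\le\; 2\exp\!\left(-\frac{\eps^2 \mu}{3}\right) \;\le\; 2\exp\!\left(-\tfrac{2000}{3}\ln m\right) \;\le\; \frac{2}{m^{666}}.
\]
This gives the $(1\pm\eps)$ multiplicative guarantee for one fixed pair $(j, i+\ell)$.

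Finally I would union bound. By \corref{cor:small:H} (and the definition of ``heavy-hitter'' here), there are at most $\frac{4}{\eps^2} \le \poly(m)$ coordinates $j$ with $f_j^2 \ge \frac{\eps^2}{2}F_2$. The number of possible starting window positions $i+\ell$ is at most $t = \eps\sqrt{\tilde F_2} \le m$. So the total number of $(j,\text{window})$ pairs is at most $m \cdot \poly(m) \ll m^{500}$, and a union bound yields a failure probability at most $1/m^{98}$, as required. The main obstacle is simply ensuring that a Chernoff-type tail holds for the hypergeometric distribution of $\xi$; this is handled by the classical theorem of Hoeffding (1963), which shows that for sampling without replacement the moment generating function is dominated by that of the i.i.d.\ Bernoulli analog, so all standard multiplicative Chernoff bounds apply verbatim.
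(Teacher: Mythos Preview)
The paper does not actually prove this lemma; it is quoted verbatim from \cite{BravermanGW20} as part of the appendix summarizing the $\counthh$ subroutine, so there is no in-paper proof to compare against. Your argument is correct and is the natural proof: model $\xi$ as a hypergeometric count under the random permutation, lower-bound its mean by $p/2 = 2000\eps^{-2}\ln m$ using $f_j \ge \frac{\eps}{2}\sqrt{\tilde F_2}$, apply Hoeffding's theorem that sampling without replacement is dominated by sampling with replacement to get a multiplicative Chernoff bound with failure probability $m^{-\Omega(1)\cdot\eps^{-2}}$, and union bound over the at most $\poly(m)$ choices of heavy-hitter and window.

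One small cleanup: you do not need \corref{cor:small:H} to bound the number of candidate heavy-hitters. The bound $|\{j: f_j^2 \ge \tfrac{\eps^2}{2}F_2\}| \le \tfrac{2}{\eps^2}$ is immediate from $\sum_j f_j^2 = F_2$, and $\tfrac{2}{\eps^2} \le m^2$ follows since any such $j$ has $f_j \ge 1$, forcing $\eps^2 \ge 2/F_2 \ge 2/m^2$. \corref{cor:small:H} is about the algorithm's maintained set $H$, which is a different object.
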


\begin{lemma}[Identification of heavy-hitters, Lemma 7 in~\cite{BravermanGW20}]
\lemlab{lem:identify:all:hh}
With probability at least $39/40$, simultaneously for all $j\in HH$, there exists some $q\in[10\ln(1/\eps)]$ such that:
\begin{enumerate}
\item
$h^q(j)=i$ with $e^i=0$ and $g(j)\in S^i$ after $B_i$ is processed
\item
$g(j)$ is the unique value $k\in S^i$ with $c^i(k)=1$
\item
No instance of $\identify$ is already running and $|H|\le2\eps^{-2}$.
\end{enumerate}
\end{lemma}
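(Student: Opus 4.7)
The plan is to prove, for each fixed heavy hitter $j \in HH$, that with probability at least $1 - O(\eps^4)$ at least one $q \in [10\ln(1/\eps)]$ produces a block $i = h^q(j)$ for which the three enumerated conditions all hold, and then to union bound over $|HH| \le \eps^{-2}$. The amplification by $O(\log(1/\eps))$ independent hashes $h^q$ makes the per-$j$ failure probability polynomially small in $\eps$, comfortably absorbing the size-$\eps^{-2}$ union bound inside the $1/40$ slack.

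Fix $j \in HH$ and a single index $q$, and condition on $h^q(j) = i$. I would first show $e^i = 0$ with high constant probability: by pairwise independence of the $h^{q'}$, each universe element appears in the test pool of $B_i$ with probability $(10\ln(1/\eps))/t$, so by the random-order assumption the expected number of distinct items in $B_i$ that would enter $S^i$ is at most a constant multiple of $F_1 \ln(1/\eps)/t^2$, which sits well below the excitation threshold $100\eps^{-2}\ln(1/\eps)$ once $t = \eps\sqrt{\tilde F_2}$ is substituted. Markov's inequality then yields $e^i = 0$. Second, since $f_j \ge \eps\sqrt{F_2}$ and $\tilde F_2 \le 2 F_2$, the expected number of occurrences of $j$ in $B_i$ is $\ge 1/\sqrt{2}$, so with constant probability at least one copy of $j$ lands in $B_i$, placing $g(j)\in S^i$.

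Next I would argue that $g(j)$ is the unique element of $S^i$ with $c^i(g(j))=1$. The huge codomain of $g$ (size $200\eps^{-26}\ln^4(dm)$), combined with $|S^i|\le O(\eps^{-2}\ln(1/\eps))$ from the previous step, gives by pairwise independence of $g$ a probability $o(1)$ of any $g$-collision inside $S^i$; conditional on no collision, the count $c^i(g(j))$ equals the number of occurrences of $j$ in $B_{i+1}$, which is $1$ with constant probability under the random-order assumption because its expectation is $\Theta(1)$. Uniqueness of $g(j)$ among $\{k\in S^i : c^i(k)=1\}$ then reduces to controlling the chance that any other $k\in S^i$ also lands in $B_{i+1}$ exactly once, which the heavy/light split of the $f_k$'s handles: other heavy coordinates can be enumerated (there are at most $\eps^{-2}$ of them) and a second-moment bound on their per-block counts used, while light coordinates have expected count $\ll 1$ in $B_{i+1}$ so the aggregate probability of landing at exactly $1$ is small. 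Each $q$ therefore succeeds with some constant probability $c_0>0$, and independence of the $h^q$'s gives total failure at most $(1-c_0)^{10\ln(1/\eps)} \le \eps^{\Omega(1)}$. Condition (3) is inherited from \corref{cor:small:H} (which guarantees $|H|\le 2\eps^{-2}$ throughout the stream) and from the fact that only one $\identify$ instance is allowed at a time, so a short scheduling argument bounds the fraction of blocks blocked by an ongoing $\identify$ execution by $O(\eps^{-2}\ln m)/t$, which is negligible compared to the $t$ candidate blocks available across the $q$'s.

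\textbf{Main obstacle.} The delicate point is the uniqueness clause: ruling out a second $k \in S^i$ with $c^i(k) = 1$. Collisions in $g$ are tamed by its large range and pairwise independence, but the exact-$1$ event for other tracked items must be fought using the random-order property jointly with the randomness of $h^{q'}$, and cannot be dismissed by a naive Markov bound on aggregated counts (since being exactly $1$ is not a tail event). The proof should handle this by splitting $S^i$ into coordinates with expected per-block count $\Theta(1)$ (which are limited in number because their $f_k$'s must all exceed $\approx \sqrt{F_2}$, so there are $\le O(1)$ of them) and coordinates with much smaller expected counts (for which the probability of exactly $1$ can be upper bounded by the expectation itself via Markov, and these expectations telescope over $S^i$).
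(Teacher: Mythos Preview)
The paper does not actually prove this lemma: it is quoted verbatim from \cite{BravermanGW20} in the appendix, along with several companion statements (\lemref{lem:identify:no:small}, \corref{cor:small:H}, \lemref{lem:identify:accuracy}), and no argument is supplied here. So there is no ``paper's own proof'' to compare your proposal against; the appendix simply assembles the cited guarantees to justify \thmref{thm:count:hh}.

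That said, your outline is broadly the right shape for how such a lemma is proved --- per-$q$ constant success probability, amplification over the $10\ln(1/\eps)$ independent hashes, union bound over $|HH|\le\eps^{-2}$ --- but the uniqueness step contains a quantitative slip that would break the argument as written. You assert that coordinates with expected per-block count $\Theta(1)$ must have $f_k\gtrsim\sqrt{F_2}$ and hence number $O(1)$. In fact the block length is $F_1/t$ with $t=\eps\sqrt{\tilde F_2}$, so the expected count of $k$ in a block is $f_k/t\approx f_k/(\eps\sqrt{F_2})$; having this be $\Theta(1)$ only forces $f_k=\Theta(\eps\sqrt{F_2})$, and there can be $\Theta(\eps^{-2})$ such coordinates, not $O(1)$. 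With $\eps^{-2}$ competitors each having a constant chance of landing exactly once in $B_{i+1}$, the heavy/light split alone does not close the uniqueness argument.

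The missing ingredient is that a competitor $k$ only enters $S^i$ if it is \emph{tested} in block $i$, i.e., $h^{q'}(k)=i$ for some $q'$, which happens with probability at most $10\ln(1/\eps)/t$. That extra $1/t$ factor is what makes the expected number of dangerous competitors in $S^i$ small, and your write-up needs to invoke it explicitly before the second-moment or Markov step. Similarly, the scheduling argument for condition (3) needs a bit more care: each \identify instance runs for $\Theta(\eps^{-2}\ln m)$ blocks and there can be up to $\eps^{-2}$ of them over the stream, so the fraction of blocked start times is $\Theta(\eps^{-4}\ln m)/t$; you should check this against the parameters rather than asserting it is negligible.
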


Together, these statements give the following guarantees:
\thmcounthh*
\end{document}